\documentclass[10pt]{article}
 
\usepackage[a4paper,left=18mm,top=25mm,right=18mm,bottom=25mm]{geometry}

\usepackage{authblk}
\usepackage[numbers, sort&compress]{natbib}
\usepackage[margin=30pt,font=small,labelfont=bf]{caption}
\usepackage{amsmath,amsopn,amsthm,amssymb}
\usepackage{thmtools,thm-restate}

\usepackage{graphicx}
\usepackage{subcaption}
\usepackage{mathptmx} 
\usepackage[mathscr]{euscript}
\usepackage{mathtools}
\usepackage{float}
\usepackage{enumitem}
\usepackage{xspace}
\usepackage{scalerel}
\usepackage{algorithm} 

\usepackage[
 	colorlinks=true,
	urlcolor=black,
	linkcolor=blue
]{hyperref}

\usepackage{algorithmicx}
\usepackage{algorithm} 
\usepackage[noend]{algpseudocode}
\algrenewcommand\algorithmicrequire{\textbf{Input:}}
\algrenewcommand\algorithmicensure{\textbf{Output:}}

\newtheorem{theorem}{Theorem}[section]
\newtheorem{lemma}[theorem]{Lemma} 
\newtheorem{corollary}[theorem]{Corollary}
\newtheorem{proposition}[theorem]{Proposition}
\newtheorem{example}{Example}
\newtheorem{definition}[theorem]{Definition}
\newtheorem{observation}[theorem]{Observation}

\newtheorem{remark}{Remark}

\newcommand{\tc}{\ensuremath{\operatorname{tc}}}
\newcommand{\lca}{\ensuremath{\operatorname{lca}}}
\newcommand{\LCA}{\ensuremath{\operatorname{LCA}}}

\newcommand{\AF}{\texttt{RF}}
\newcommand{\AFeq}{RF$^{\npreceq}$}
\newcommand{\AFlca}{RF$^{\lca}$}

\newcommand{\rel}{\trianglelefteq}
\newcommand{\altsrel}{S}
\newcommand{\axiom}[1]{\textnormal{\textbf{(#1)}}}
\newcommand{\pairs}{\mathcal{P}_2}
\newcommand{\support}{\ensuremath{\operatorname{supp}}}

\newcommand{\FCL}{\ensuremath{\operatorname{CL}_F}}
\newcommand{\Fcl}{\ensuremath{\operatorname{cl}_F}}
\newcommand{\cl}{\ensuremath{\operatorname{cl}}}
\newcommand{\FR}{F_{|R}}

\newcommand{\GG}{\mathscr{G}}
\newcommand{\Hasse}{\mathscr{H}}

\DeclareMathOperator{\indeg}{indeg}
\DeclareMathOperator{\outdeg}{outdeg}

\providecommand{\keywords}[1]{\textbf{\textit{Keywords: }} #1}

\title{Inferring Phylogenetic Networks from Required and Forbidden 
       LCA-Constraints}

\usepackage{authblk}

\author[1]{Patricia A.\ Ebert} 

\author[2]{Marc Hellmuth}

\affil[1]{Department of Mathematics, Faculty of Science,
  Stockholm University, SE-10691 Stockholm, Sweden} 

\affil[2]{
	Faculty of Mathematics and Computer Science, 	
  	Leipzig University, DE-04109 Leipzig, Germany} 

\date{\ }

\begin{document}
\sloppy

\maketitle

\abstract{ 
Least common ancestor (LCA) constraints encode relative-order information in directed acyclic
graphs (DAGs) and give rise to a natural constraint-realization problem. Phylogenetic networks
provide an important class of DAGs in which such constraints are used to represent local
information about evolutionary histories.
In this paper, we study the inference of DAGs and phylogenetic networks from LCA-constraints, which specify relative
positions of the LCAs associated with pairs of leaves.
While previous work has characterized when a set of required LCA-constraints can be
realized by a DAG or phylogenetic network,
it is natural to consider additional constraints that must be
explicitly avoided.
We therefore consider the realization problem for pairs $(R,F)$, where $R$ is a set of required
LCA-constraints and $F$ is a set of forbidden ones. Since there are several natural ways to
formalize what it means for a DAG to avoid a forbidden LCA-constraint, we study
three such variants. For each of them, we characterize exactly when there exists a
DAG or a phylogenetic network that realizes all constraints
in $R$ while avoiding all constraints in $F$ in the respective sense.
Our main characterization is based on a closure operator obtained from four elementary
inference rules.
Based on these characterizations, we derive polynomial-time algorithms that decide the existence
of such realizations and construct one whenever it exists.
All algorithms developed in this paper are implemented in the freely available Python package
\texttt{RealLCA}.
}

\smallskip
\noindent
\keywords{DAG, lowest common ancestor, forbidden relations, 
BUILD algorithm, closure operator, polynomial-time algorithm}

\section{Introduction}

Least common ancestors are a fundamental structural concept in rooted trees and directed acyclic
graphs (DAGs). Comparisons between their relative positions give rise to a natural realization
problem: given a collection of prescribed LCA-comparisons, determine whether there exists a DAG
satisfying them and, if so, construct such a DAG.

A rooted phylogenetic network is a directed acyclic graph with a single root and whose leaves represent the observed taxa (e.g., genes, species,  nucleotide sequences or languages)
while its internal vertices describe branching or reticulate ancestry, see
Figure~\hyperref[fig:example_introduction]{\ref*{fig:example_introduction}(a)} for an example. Such
networks extend phylogenetic trees by allowing vertices with more than one parent and can therefore
represent reticulate events such as hybridization or horizontal gene transfer. From an algorithmic perspective, a
central problem is to reconstruct such a network from ``local information'' about its structure.
Classical examples of such local information include small induced substructures such as triplets
\cite{JanssonEtAl2006,vanIersel2009,vanIersel2011,lev2-09,HuberEtAl2011,PoormohammadiEtAl2014} and
trinets~\cite{vanIersel2022,Semple2021,vanIersel2014}, relations derived from best
matches~\cite{SchallerEtAl2021b,Geiss2020,Geiss2019,KORCHMAROS2021397,KORCHMAROS2026308,KSS:25},
orthology and paralogy relations \cite{Lindeberg2025,Hellmuth2013,HNLLMS:15,
AltenhoffEtAl2019,Lafond2016,Lafond2014,LM:15,Lafond2013}, or constraints formulated in terms of
relative positions of least common ancestors (LCAs)~\cite{LAMSH:25,Aho:81}.  In all these settings, the underlying
algorithmic task is to reconstruct a global DAG or network from local structural constraints.

Here, we pursue the inference of DAGs and
phylogenetic networks from LCA-constraints. The underlying realization problem originates with Aho
et al.~\cite{Aho:81}, who studied, in a different context, whether a collection of prescribed
comparisons between least common ancestors can be represented by a rooted tree. Their
polynomial-time \texttt{BUILD} algorithm subsequently became a fundamental tool in phylogenetics, in
particular for deciding the compatibility of rooted triples and constructing a corresponding
phylogenetic tree; see, for example, \cite{SempleSteel:03}. More recently, Lindeberg et
al.~\cite{LAMSH:25} extended the original LCA-constraint framework from trees to DAGs and
phylogenetic networks.

To make this setting precise, let $N$ be a DAG or phylogenetic network with leaf set $X$. 
A least common ancestor (LCA) of two leaves $x,y\in X$ is a vertex $v$ of $N$ that is an ancestor of both $x$ and $y$ and such that no proper
descendant of $v$ has this property. We write $\lca_N(xy)$ whenever the LCA of $x$ and $y$ is
uniquely determined in $N$. LCAs are also frequently referred to as lowest common
ancestors \cite{Aho:81,vanIersel2011,JanssonEtAl2006}.

An LCA-constraint is an ordered pair $(ab,cd)$ of pairs of two leaves. Under
\emph{strict realization}, it requires $\lca_N(ab)$ to be a strict descendant of $\lca_N(cd)$ and
therefore expresses that $a$ and $b$ share a more recent common ancestor than $c$ and
$d$. The more
general notion of \emph{realization} introduced below also permits the two LCAs to coincide when the
input relation contains comparisons in both directions. Constraints of the simpler form $(ab,ac)$
are included as a special case and are closely related to the LCA-comparisons encoded by rooted
triples. Allowing arbitrary pairs $(ab,cd)$ is nevertheless essential, since constraints whose two
pairs share a leaf may imply further LCA-constraints involving disjoint pairs. In general,
LCA-constraints describe relative order relationships without requiring
phylogenetic parameter such as
divergence times, branch lengths, or a fully resolved network topology.

While Aho et al.~\cite{Aho:81} studied the realization of LCA-constraints by rooted trees,
Lindeberg et al.~\cite{LAMSH:25} generalized this framework to DAGs and phylogenetic networks by
characterizing strictly realizable and realizable relations and providing polynomial-time
recognition and construction algorithms. 

\begin{figure}[t]
    \centering
    \includegraphics[width=0.7\linewidth]{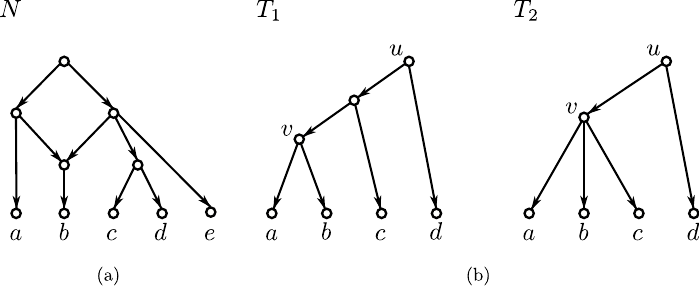}
    \caption{
    \textbf{(a)} A phylogenetic network $N$ on taxa set  $X = \{a,b,c,d,e\}$. 
    \textbf{(b)} Consider the two relations $R = \{(ab,cd)\}$ and $F = \{(ab,ac)\}$ representing required and forbidden LCA-constraints, respectively.
    In both phylogenetic trees $T_1$ and $T_2$, it holds that $\lca_{T_i}(ab)=v$ is a descendant of $\lca_{T_i}(cd)=u$ 
    and, in particular, that $T_1$ and $T_2$ realize the required LCA-constraint $(ab,cd)$. 
    For a DAG $G$ to satisfy the forbidden LCA-constraint $(ab,ac)$, it must hold that
    $\lca_G(ab)\nprec_G\lca_G(ac)$, provided that both LCAs are well-defined; cf.\ 
    Definition~\ref{def:AFreal}. Hence, $T_2$ satisfies the forbidden constraint $(ab,ac)$, since
    $\lca_{T_2}(ab)=v=\lca_{T_2}(ac)$. In contrast, $\lca_{T_1}(ab)\prec_{T_1}\lca_{T_1}(ac)$, and
    hence $T_1$ violates this forbidden LCA-constraint.
    }
    \label{fig:example_introduction}
\end{figure}

The required-only realization problem naturally extends to instances in which certain
LCA-comparisons must explicitly be avoided. Forbidden configurations have also been considered in
related realization problems, for instance for rooted triplets \cite{HHJS:06, Bryant1997}. This
motivates the study of a more expressive realization problem in which one is given not only a
relation $R$ of required LCA-constraints, but also a relation $F$ of forbidden LCA-constraints. The
task is to determine whether there exists a DAG or network that realizes all
constraints in $R$ while avoiding every constraint in $F$. A forbidden constraint $(ab,cd)\in F$
requires that $\lca_N(ab)$ is not a strict descendant of $\lca_N(cd)$ whenever both LCAs are
well-defined. Thus, equality, incomparability, and the reverse ancestral relation are allowed, see
Figure~\hyperref[fig:example_introduction]{\ref*{fig:example_introduction}(b)} for an example. In
Section~\ref{sec:alternative}, we also discuss two further natural interpretations of forbidden
constraints.

Our approach follows several ideas developed by Lindeberg et al.~\cite{LAMSH:25} for required
LCA-constraints. Nevertheless, the extension to forbidden constraints is not immediate. Required
constraints may imply additional comparisons, while avoiding a forbidden constraint may force two
LCA-classes to coincide. Capturing these interactions requires the new
$F$-conditional-symmetry rule and the associated closure $\Fcl(R)$.

In this contribution, we characterize the pairs $(R,F)$ for which there exists a 
DAG or network that realizes all constraints in $R$ while avoiding all constraints in $F$.
We study three natural interpretations of forbidden constraints and provide characterizations
for each of them. The characterization of all of these
notions is based on the closure operator $\Fcl$,
defined by four elementary inference rules. These characterizations yield polynomial-time
algorithms that decide whether such a realization exists and, in the affirmative
case, construct a realizing DAG and phylogenetic network.

This paper is organized as follows. In Section~\ref{sec:prelim}, we introduce the basic definitions
and notation used throughout the paper. In Section~\ref{sec:realization-required}, we recall
established results concerning the realization of sets $R$ of required LCA-constraints by DAGs and
phylogenetic networks. In Section~\ref{sec:AFLCA}, we extend this framework to pairs $(R,F)$ of
required and forbidden LCA-constraints. In particular, we introduce an additional inference rule
that captures the interaction between required and forbidden constraints, define the resulting
closure $\Fcl(R)$, and use it to characterize realizable pairs and construct corresponding
phylogenetic networks in polynomial time. In Section~\ref{sec:alternative}, we consider two further
natural interpretations of forbidden LCA-constraints and provide analogous characterizations under
these alternative notions of realizability. Finally, in Section~\ref{sec:classical-closure}, we
relate the rule-based closure $\Fcl(R)$ to the ``classical'' closure obtained by intersecting
the relations induced by all DAGs that realize $(R,F)$, and show that the two notions coincide
whenever $(R,F)$ is realizable.

All algorithms developed in this paper are implemented in the freely available Python package \texttt{RealLCA}, which is available at
\url{https://github.com/AnnaLindeberg/RealLCA}.

%
%
\section{Preliminaries}
\label{sec:prelim}

We begin with some basic definitions concerning sets and relations, illustrated by an example at the
end of this subsection, before turning to DAGs and networks.

\paragraph{Sets and Relations}

In what follows, $X$ will always be a finite non-empty set. We denote with $\mathcal{P}(X)$ the
power set of $X$. Moreover, we let $\pairs(X)\coloneqq\{\{a,b\} \mid a,b\in X\}$ (with $a=b$
allowed) denote the set system consisting of all 1- and 2-element subsets of $X$. We will often
write $ab$, respectively, $aa$ for elements $\{a,b\}$, respectively, $\{a\}$ in $\pairs(X)$. Thus,
$ab=ba$ always holds. 

Given a set $A$, a  subset $R\subseteq A\times A$ is a \emph{binary relation (on $A$)}.

\begin{remark}
As all relations considered in this work are binary, we shall simply refer to them as \emph{relations}.
\end{remark}

Furthermore, we define the \emph{support $\support_R$} of a   relation $R$ on $A$ as
\[
\support_R \coloneqq \{p\in A \mid \text{ there is some } q\in A \text{ with }  (p,q) \in R \text{ or } (q,p) \in R\},
\]   
that is, the subset of $A$ that contains precisely those $p\in A$ that are in $R$-relation with some
$q\in A$. We often consider relations $R$ on $A =\pairs(X)$ in which case we extend $\support_R$ to
obtain $\support_{R}^+ \coloneqq \support_{R}\ \cup\ \{xx\mid x\in X\}$.

Let $R$ be a relation on $A$. Then, $R$ is \emph{asymmetric} if $(p,q) \in R$ implies $(q,p) \notin
R$ for all $p,q\in A$, and it is \emph{anti-symmetric} if $(p,q) \in R$ and $(q,p) \in R$ implies
$p=q$ for all $p,q\in A$. Moreover, $R$ is \emph{transitive}, if $(p,q) \in R$ and $(q,r) \in R$
implies $(p,r) \in R$ for all $p,q,r\in A$. For a subset $B\subseteq A$, we say that $R$ is
\emph{$B$-reflexive} if $(b,b)\in R$ for every $b\in B$. A relation $R$ on $A$ is \emph{reflexive}
if it is $A$-reflexive. A \emph{poset} $(A, \leq)$ is a set $A$ equipped with a partial order
$\leq$, i.e., a relation $\leq$ on $A$ that is reflexive, transitive, and anti-symmetric.

A \emph{closure operator} on a set $S$ is a map $\phi \colon \mathcal{P}(S) \to\mathcal{P}(S) $ that
satisfies the following three properties for all $R,R'\in \mathcal{P}(S)$: \emph{Extensivity}: $R
\subseteq \phi(R)$, \emph{Monotonicity}: $R \subseteq R'$ implies $\phi(R) \subseteq \phi(R')$, and
\emph{Idempotency}: $\phi(\phi(R)) = \phi(R)$ \cite{CASPARD2003241,BS:95,SH:18}. We let $\tc(R)$
denote the \emph{transitive closure} of a relation $R$, that is, the inclusion-minimal relation that
is transitive and that contains $R$ (see e.g. \cite[p.39]{matouvsek2009invitation}). It is
straightforward to verify that $\tc$ is indeed a closure operator on $S = A\times A$ for all
relations $R$ on $A$. The following rules to modify a given binary relation will play a central role
in this paper. 

\begin{definition}\label{def:rules}
For two relations $S$ and $F$ on $\pairs(X)$, we define:
\begin{description}
  \item[\axiom{R1}] \emph{Reflexivity:} for all  $p\in \support^+_S$, add $(p,p)$ to $\altsrel$.
  \item[\axiom{R2}] \emph{Transitivity:} if $(p,q) \in \altsrel$ and $(q,r) \in \altsrel$, add $(p,r)$ to $\altsrel$.
  \item[\axiom{R3}] \emph{Cross-Consistency:} if $ab\in \support_S$ and $(ac,xy) \in \altsrel$ and $(bd,xy) \in \altsrel$ for some $c,d\in X$, add $(ab,xy)$ to $\altsrel$.
  \item[\axiom{R4}] \emph{$F$-Conditional-Symmetry:} if $(p,q) \in F$ and $(p,q) \in S$, add $(q,p)$ to $S$. 
\end{description}
\end{definition}

We say that a relation $S$ is \emph{cross-consistent} if \axiom{R3} applied on $S$ does not change
$S$. Similarly, $S$ is \emph{$F$-conditional-symmetric ($F$-csym)} w.r.t.\ some relation $F$ if
\axiom{R4} applied on $S$ does not change $S$. In other words, $S$ is $F$-csym if $(p,q)\in F \cap
S$ implies that $(q,p)\in S$. 

In our setting, an element $(p,q)$ of a relation is intended to represent the comparison
$\lca_G(p)\preceq_G \lca_G(q)$ in a DAG $G$, provided that the respective LCAs are well-defined. Hence, the rules
\axiom{R1} to \axiom{R4} should be viewed as inference rules: starting from a collection of known
LCA-comparisons, they add further comparisons that are necessarily implied.

Rule~\axiom{R1} reflects that every well-defined LCA is a descendant of itself. Rule~\axiom{R2} is
the usual transitivity rule. Thus, if $\lca_G(p)\preceq_G\lca_G(q) $ and
$\lca_G(q)\preceq_G\lca_G(r)$, then necessarily $ \lca_G(p)\preceq_G\lca_G(r)$. Rule~\axiom{R3}
captures an additional implication that is specific to LCAs. Suppose that $ab\in\support_S$ and
that, for suitable $c,d\in X$, $ \lca_G(ac)\preceq_G\lca_G(xy) $ and
$\lca_G(bd)\preceq_G\lca_G(xy)$. Since $\lca_G(xy)$ is then an ancestor of both $a$ and $b$, it is
also an ancestor of their least common ancestor. Consequently, $ \lca_G(ab)\preceq_G\lca_G(xy)$ holds,
whenever $\lca_G(ab)$ is well-defined; see also \cite[L~11]{LAMSH:25}. This explains the
cross-consistency rule. Finally, Rule~\axiom{R4} describes the interaction between required and
forbidden constraints. If $(p,q)$ is both implied by the current relation and forbidden by $F$, then
the comparison cannot be strict. The reverse comparison $(q,p)$ must therefore also hold, forcing
the two corresponding LCAs to coincide. This implication is formalized in
Lemma~\ref{lem:RF-lca-constraints}.
We conclude the preliminaries on sets and relations with an illustrative example.
\begin{example}
Let $X = \{a,b,c\}$ and consider the relation $R = \{(aa,ab), (ab,bc)\}$ on $\pairs(X) = \{aa,bb,cc,ab,ac,bc\}$. 
Its support is $\support_R = \{aa,ab,bc\}$ and, therefore, $\support^+_R = \{aa,bb,cc,ab,bc\}$. 
The relation $R$ is asymmetric and, consequently, anti-symmetric, but neither transitive nor reflexive.  
The transitive closure of $R$ is $\tc(R) = R \cup \{(aa,bc)\}$. 
Applying \axiom{R1} to $R$ yields $R' = R \cup \{(aa,aa),(bb,bb),(cc,cc),(ab,ab),(bc,bc)\}$. 
Hence, $R'$ is $\support^+_R$-reflexive, but  not reflexive, as $(ac,ac)\notin R'$.
Rule \axiom{R2} applied to $(aa,ab),(ab,bc) \in R$ yields $R \cup \{(aa,bc)\}$. 
This additional element $(aa,bc)$ can also be obtained by \axiom{R3}. To see this, simply replace 
the two pairs $(ac,xy)$ and $(bd,xy)$ in Definition~\ref{def:rules} by $(ab,bc)\in R$ and observe that  $aa \in \support_R$.
If we additionally consider a further relation $F = \{(ab,bc)\}$ on $\pairs(X)$, 
then \axiom{R4} and $(ab,bc) \in F \cap R$ yields $R \cup \{(bc,ab)\}$. 
\end{example}

\paragraph{DAGs and Networks}

A \emph{directed graph $G=(V,E)$} is a pair with non-empty vertex set $V(G)\coloneqq V$ and arc set
$E(G) \coloneqq E \subseteq V\times V$. We put $\outdeg_G(v)\coloneqq\left|\left\{u\in V \colon
(v,u)\in E\right\}\right|$ and $\indeg_G(v)\coloneqq\left|\left\{u\in V \colon (u,v)\in
E\right\}\right|$ to denote the \emph{out-degree} and \emph{in-degree} of a vertex $v$,
respectively. A vertex $v$ with $\outdeg_G(v)=0$ is a \emph{leaf} of $G$ and a vertex $v$ with
$\indeg_G(v)=0$ is a \emph{root} of $G$. A directed graph $G$ is \emph{phylogenetic} if it does not
contain a vertex $v$ such that $\outdeg_G(v)=1$ and $\indeg_G(v)\leq 1$. We sometimes use $u\to v$
to denote the arc $(u,v)\in E(G)$ and $u\leadsto v$ to denote a directed $uv$-path in $G$.

Directed graphs $G$ without directed cycles are called \emph{directed acyclic graphs (DAGs)}
\cite{book:digraph}. Let $G$ be a DAG. If $u\to v$ is an arc in $G$, then we call $v$ a \emph{child}
of $u$ and $u$ a \emph{parent} of $v$. We write $v\preceq_G u$ if and only if there is a directed
$uv$-path $u\leadsto v$ in $G$ and call, in this case, $v$ a \emph{descendant} of $u$ and $u$ an
\emph{ancestor} of $v$. If $v\preceq_G u$ and $v\neq u$, we write $v\prec_G u$. 

If $G$ is a DAG whose set of leaves is $X$, then $G$ is a \emph{DAG on $X$}. A \emph{network} is a
DAG with a single root. A \emph{(rooted) tree} is a network that does not contain vertices with
$\indeg_G(v)>1$.

For a given DAG $G$ on $X$ and a non-empty subset $A\subseteq X$, a vertex $v\in V(G)$ is a
\emph{common ancestor of $A$} if $v$ is an ancestor of every vertex in $A$. Moreover, $v$ is a
\emph{least common ancestor} (LCA) of $A$ if $v$ is a $\preceq_G$-minimal vertex that is an ancestor
of all vertices in $A$. The set $\LCA_G(A)$ comprises all LCAs of $A$ in $G$. In a network $N$ on
$X$, the unique root is a common ancestor for all $A\subseteq X$ and, therefore,
$\LCA_N(A)\neq\emptyset$. Moreover, we are interested in DAGs where $|\LCA_G(A)|=1$ holds for
certain subsets $A\subseteq X$. For simplicity, we will write $\lca_G(A)=v$ in case that
$\LCA_G(A)=\{v\}$ and say that \emph{$\lca_G(A)$ is well-defined}. 
To recall, we often write $xy$ for sets $\{x,y\}$, which allows us to put
$\lca_G(xy) \coloneqq \lca_G(\{x,y\})$, if $\lca_G(\{x,y\})$ is well-defined. A DAG $G$ on $X$ is
\emph{2-lca-relevant} if, for all $v\in V(G)$, there are (not necessarily distinct) leaves $x,y\in
X$ such that $v=\lca_G(xy)$.

For a given DAG $G$ on $X$, we further define the relation $\rel_G$ on $\pairs(X)$ as \[\rel_G
\coloneqq \{(ab,xy) \mid \lca_G(ab), \lca_G(xy) \text{ are well-defined and } \lca_G(ab)\preceq_G
\lca_G(xy)\}.\] 
Several of the latter definitions are exemplified in Figure~\ref{fig:prelim_dags}.

\begin{figure}[t]
    \centering
    \includegraphics[width=0.7\linewidth]{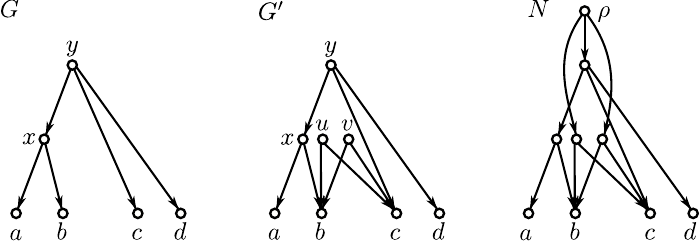}
    \caption{Shown are three phylogenetic DAGs $G$, $G'$, and $N$ on $X=\{a,b,c,d\}$.
    Since $G'$ has more than one root, it is not a network. In contrast, $G$ and $N$ have the unique roots $y$ and $\rho$, respectively, and are therefore networks.
    Here $N$ is obtained from $G'$ by adding a  root $\rho$ and arcs $\rho \to w$ for the roots $w$ of $G'$
    as specified in Lemma~\ref{lem:DAG2Network}. Moreover, $G'$ is the $bc$-extension of $G$.
    In $G$, the vertices $x$ and $y$ are the only common ancestors of $a$ and $b$. 
    Since, in addition $x \prec_G y$, the vertex $x$ is the unique least common ancestor. 
    Consequently, $\lca_G(ab) = x$ is well-defined. 
    In $G'$, the LCA $\lca_{G'}(bc)$ is not well-defined, as $\LCA_{G'}(bc) = \{u,v,y\}$. 
    The network $G$ is 2-lca-relevant while $G'$ and $N$ are not.} 
    \label{fig:prelim_dags}
\end{figure}

For a poset $({Q},\leq)$, the \emph{Hasse diagram} $\Hasse(Q,\leq)$ is the DAG with
vertex set $Q$ and arcs $(A,B)$ if (i) $B\leq A$ and $A\neq B$ and (ii) there is no $C\in Q$ with
$B\leq C\leq A$ and $C\neq A,B$. Figure~\ref{fig:hasse_diagram} shows the Hasse diagram of some poset. 

\begin{figure}[t]
    \centering
    \includegraphics[width=0.7\linewidth]{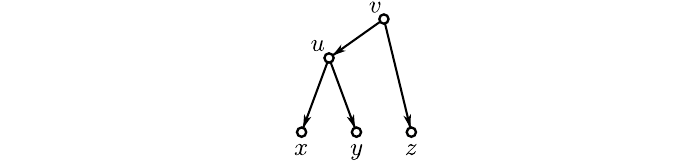}
    \caption{Let $Q = \{u,v,x,y,z\}$ and $\leq = \{(u,u),(v,v),(x,x), (y,y), (z,z),(x,u),$ $(x,v),(y,u),(y,v),(u,v),(z,v)\}$.
     Since $\leq$ is a reflexive, transitive, and anti-symmetric relation on $Q$, the pair $(Q,\leq)$ is a poset. Shown is the Hasse diagram $\Hasse(Q,\leq)$. }
    \label{fig:hasse_diagram}
\end{figure}

Throughout this paper, we often transform a DAG into a network and will use the following simple
result.

\begin{lemma}[From DAGs to networks]\label{lem:DAG2Network}
Let $G$ be a DAG on $X$. Let $N$ be the directed graph obtained from $G$ by either putting
$N\coloneqq G$ if $G$ is a network or by adding a new vertex $\rho$ to $G$ together with the arcs
$(\rho,\rho_i)$ for all roots $\rho_1,\dots,\rho_k$, $k\geq 2$ of $G$. Then, $N$ is a network on $X$
such that $u\preceq_G v$ if and only if $u\preceq_N v$ for all $u,v\in V(G)$. In particular, if
$\LCA_G(xy)\neq \emptyset$, then $\LCA_G(xy)=\LCA_N(xy)$. Hence, if $\lca_G(xy)$ is well-defined in
$G$, then $\lca_N(xy)=\lca_G(xy)$. Moreover, if $G$ is phylogenetic, then $N$ is phylogenetic.
\end{lemma}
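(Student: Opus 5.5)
The plan is to treat the two cases of the construction separately and then verify each claim in the statement in order. If $G$ is a network, then $N=G$ and every assertion is trivial. So assume $G$ has $k\geq 2$ roots $\rho_1,\dots,\rho_k$ and that $N$ is obtained by adding $\rho$ together with the arcs $(\rho,\rho_i)$.

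First I show that $N$ is a network on $X$.
\begin{itemize}
\item $N$ is acyclic: every cycle of $N$ through $\rho$ would need an arc entering $\rho$, and none exists; every other cycle would already lie in $G$.
\item $\rho$ is the unique root: $\rho$ has in-degree $0$, each $\rho_i$ gains in-degree $1$, and all other in-degrees are unchanged.
\item The leaf set is still $X$: $\outdeg_N(\rho)=k\geq 1$, and all other out-degrees are unchanged.
\end{itemize}

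Next comes the equivalence $u\preceq_G v \iff u\preceq_N v$ for $u,v\in V(G)$. The forward direction holds because $G$ is a subgraph of $N$. For the converse, take a directed $vu$-path in $N$ with $v\neq\rho$. Such a path cannot pass through $\rho$, since $\rho$ has in-degree $0$ and is not its start vertex. Hence the path uses only arcs of $G$.

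For the LCA claims, suppose $\LCA_G(xy)\neq\emptyset$. The common ancestors of $\{x,y\}$ in $N$ are exactly those in $G$ together with $\rho$, by the equivalence just shown. Every vertex of $G$ is a descendant of some root $\rho_i$, so $w\prec_N\rho$ for all $w\in V(G)$. Since some common ancestor exists in $G$, $\rho$ is not $\preceq_N$-minimal among the common ancestors. On $V(G)$, minimality in $N$ and in $G$ agree because the orders coincide. Therefore $\LCA_N(xy)=\LCA_G(xy)$, and uniqueness transfers, giving $\lca_N(xy)=\lca_G(xy)$.

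The phylogenetic claim is the only step needing care. The degrees of vertices other than $\rho$ and the $\rho_i$ are unchanged. Each $\rho_i$ now has in-degree $1$. In $G$ it had in-degree $0$, so phylogeneticity of $G$ gives $\outdeg_G(\rho_i)\neq 1$, and this is preserved in $N$. The new root $\rho$ has out-degree $k\geq2$. Hence $N$ has no vertex with out-degree $1$ and in-degree at most $1$, so $N$ is phylogenetic.
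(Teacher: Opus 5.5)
Your proof is correct and follows the same route as the paper. It handles the trivial case $N=G$ and obtains acyclicity and the unique root from the fact that only arcs leaving $\rho$ are added. It then transfers $\preceq$ between $G$ and $N$ on $V(G)$, uses $w\prec_N\rho$ for all $w\in V(G)$ to show $\LCA_N(xy)=\LCA_G(xy)$, and checks degrees at the former roots $\rho_i$ and at $\rho$ (using $k\geq 2$) for phylogeneticity. You spell out details the paper leaves implicit, such as why no $vu$-path in $N$ can pass through $\rho$ and why every vertex of $G$ lies below some root.
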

\begin{proof}
	Let $G$ be a DAG on $X$ and let $N$ be as described in the statement. If $N = G$, then the
	statement clearly holds. Thus, assume that $G$ is not a network. Since no leaves and only outgoing
	arcs of the new vertex $\rho$ are added, $N$ is a DAG with the unique root $\rho$. Hence, $N$ is a
	network on $X$. Note that $V(G)=V(N)\setminus\{\rho\}$. This and the construction of $N$ implies
	that $v\prec_N\rho$ holds for all $v\in V(G)$ as well as $u\preceq_G v$ if and only if $u\preceq_N
	v$ for all $u,v\in V(G)$. Now let $x,y \in X$ with $\LCA_G(xy) \neq \emptyset$. Note that
	$\rho\notin \LCA_G(xy)\subseteq V(G)$. By the latter arguments, $w\prec_N\rho$ for all $w\in
	\LCA_G(xy) \subseteq V(G)$. This and $u\preceq_G v$ if and only if $u\preceq_N v$ for all $v\in
	V(G)$ implies that $\LCA_G(xy) = \LCA_N(xy)$. In particular, if $\lca_G(xy)$ is well-defined in
	$G$, then $\lca_N(xy)=\lca_G(xy)$. Suppose $G$ is phylogenetic, i.e., there exists no $u \in V(G)$
	such that $\outdeg_G(u) = 1$ and $\indeg_G(u) \leq 1$. By construction, $\outdeg_G(v) =
	\outdeg_N(v)$ holds for all $v \in V(G)$ and $\indeg_G(v) = \indeg_N(v)$ for all $v \in V(G)
	\setminus \{\rho_1, \dots, \rho_k\}$, where $\rho_1, \dots, \rho_k$ are the roots of $G$.
	Moreover, $\indeg_G(v) \leq \indeg_N(v)$ holds for all $v \in \{\rho_1, \dots, \rho_k\}$. Lastly,
	since $G$ is not a network, $k\geq 2$ and, thus, $\outdeg_N(\rho) \geq 2$. Hence, $N$ is
	phylogenetic. 
\end{proof}

The construction of a network $N$ as specified in Lemma~\ref{lem:DAG2Network} is illustrated
in Figure~\ref{fig:prelim_dags}.
We shall use the following local construction to avoid constraints involving a pair $xy$
whose LCA is not required to be unique. For two distinct leaves $x,y$ of a DAG $G$, the
\emph{$xy$-extension}  of $G$ is the digraph
obtained from $G$ by adding  two new vertices $u,v$ 
and the arcs $\{(u,x),(u,y),(v,x),(v,y)\}$ to $G$; see Figure~\ref{fig:prelim_dags} for an example.
Let $G'$ be the $xy$-extension of the DAG $G$.
One easily verifies that no cycles are introduced in $G'$ by the additional vertices and arcs 
and it follows that $G'$ remains a DAG.
Moreover, by construction, $u$ is a parent of $x$ and $y$ and $u$ does not
have any further children. Hence, $u$ (and by similar arguments, $v$) is an LCA of $x$ and $y$ in $G'$.
Additionally, it is a straightforward task to verify that in $G'$, it holds by construction that $a \preceq_G b$ 
if and only if $a \preceq_{G'} b$ for all $a,b \in V(G) =V(G')\setminus \{u,v\}$. Moreover, $u$ and $v$ have only $x,
y$, and $u$ resp. $v$ as descendants and no ancestors. The latter arguments imply $\LCA_G(ab) =
\LCA_{G'}(ab)$ for all $a,b \in X$ with $\{a,b\} \neq \{x,y\}$. We summarize this discussion into
\begin{observation}\label{obs:xy-extension1}
Let $G$ be a DAG on $X$ and $x,y \in X$ be distinct leaves.
The $xy$-extension $G'$ of $G$ is a DAG on $X$ and it holds that 
\begin{itemize}
    \item[(i)] $|\LCA_{G'}(xy)|>1$ and, thus, $\lca_{G'}(xy)$ is not well-defined, and
    \item[(ii)] $a \preceq_G b$ if and only if $a \preceq_{G'} b$ for $a,b \in V(G)$, and
    \item[(iii)] $\LCA_G(ab) = \LCA_{G'}(ab)$ for all $a,b \in X$ with $\{a,b\} \neq \{x,y\}$. 
\end{itemize}
\end{observation}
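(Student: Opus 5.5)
The plan is to argue directly from the construction of the $xy$-extension $G'$: $G$ is unchanged, and two new vertices $u,v$ are added whose only out-arcs go to $x$ and $y$ and which have no in-arcs. I would verify in turn that $G'$ is a DAG on $X$, then (ii), then (i), and finally (iii), which uses (ii).

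First, $G'$ is acyclic. A directed cycle in $G'$ would have to pass through $u$ or $v$, since $G$ is acyclic. But $u$ and $v$ have in-degree $0$, so no cycle can enter them. The leaves of $G'$ are still exactly $X$. Indeed, $x$ and $y$ gain no out-arcs, and no other vertex of $G$ changes its out-degree. Meanwhile $u$ and $v$ each have out-degree $2$, so they are not leaves.

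For (ii), every arc of $G$ is an arc of $G'$, so $a\preceq_G b$ implies $a\preceq_{G'} b$. Conversely, take a directed $ba$-path in $G'$ with $a,b\in V(G)$. This path cannot contain $u$ or $v$: they have no in-arcs, so they could only appear as the start $b$, which is not the case. Hence the path lies entirely in $G$.

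For (i), $u$ is a common ancestor of $x$ and $y$. Its only proper descendants are $x$ and $y$, which are leaves. Since $x\neq y$, neither of these is an ancestor of both, so $u$ is $\preceq_{G'}$-minimal among the common ancestors and $u\in\LCA_{G'}(xy)$. The same argument gives $v\in\LCA_{G'}(xy)$, and $u\neq v$, so $|\LCA_{G'}(xy)|\ge 2$.

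For (iii), let $\{a,b\}\neq\{x,y\}$. The vertex $u$ is a common ancestor of $\{a,b\}$ only if $\{a,b\}\subseteq\{x,y\}$. This means $\{a,b\}=\{x\}$ or $\{y\}$, i.e., $ab=xx$ or $yy$. In that case $\LCA(xx)=\{x\}$ in both graphs, because $x$ is its own unique minimal ancestor. Otherwise the common ancestors of $\{a,b\}$ in $G'$ all lie in $V(G)$, and by (ii) they coincide with those in $G$, with the same order. Hence the minimal elements agree, and $\LCA_G(ab)=\LCA_{G'}(ab)$.

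The only mildly delicate step is this last case distinction in (iii): the singleton pairs $xx$ and $yy$ must be treated separately, since $u$ and $v$ are ancestors of $x$ (and of $y$) yet do not change $\LCA(xx)$ or $\LCA(yy)$.
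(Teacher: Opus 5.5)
Your proof is correct and follows essentially the same direct verification from the construction that the paper sketches in the paragraph preceding the observation. The argument in both is that $u,v$ have no in-arcs, which gives acyclicity and (ii); that $u,v$ are distinct minimal common ancestors of $x,y$; and that $u,v$ lie only above $x$ and $y$. Your explicit check that the leaf set remains $X$, and your separate treatment of the singleton pairs $xx$ and $yy$ in (iii), fill in details the paper leaves implicit.
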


%
%

\section{Realization of Required  LCA-constraints}
\label{sec:realization-required}

We begin by recalling the main definitions and results concerning realizable relations introduced
by Lindeberg et al.~\cite{LAMSH:25}. These results form the basis for our extension to forbidden
LCA-constraints in Section~\ref{sec:AFLCA}.

Our starting point is a classical result due to Aho et al.~\cite{Aho:81}. They considered the
following problem: given a collection of constraints $(ab,cd)\in R$, where $a\neq b$, $c\neq d$,
and $a,b,c,d\in X$, can one construct a phylogenetic tree $T$ on $X$ such that
\[
\lca_T(ab)\prec_T\lca_T(cd)
\]
for every $(ab,cd)\in R$, or determine that no such tree exists? Thus, each constraint prescribes
that the taxa $a$ and $b$ have a more recent common ancestor than the taxa $c$ and $d$. This leads
to the following notion.

\begin{definition}[Strict Realization,  {\cite[Def~3]{LAMSH:25}}]\label{def:strict-real}
	A relation $R$ on $\pairs(X)$ is \emph{strictly realizable} if there is a DAG $G$ on $X$ such that
	for all $ab,cd\in\support_R^+$, the vertices $\lca_G(ab)$ and $\lca_G(cd)$ are well-defined and
	the following implication holds:
	 \begin{description}
      \item[\axiom{I0}] $(ab,cd) \in R$ implies that $\lca_G(ab)\prec_G\lca_G(cd)$.
   \end{description}
   \noindent In this case, we say that $R$ \emph{is strictly realized by} $G$. 
\end{definition}

Strict realization does not allow two pairs occurring in a constraint to have the same LCA. 
A natural first relaxation would be to require only that
\[
(ab,cd)\in R
 \implies 
\lca_G(ab)\preceq_G\lca_G(cd).
\]
This condition alone is, however, too permissive: whenever $R$ contains no pair $(ab,xx)$ with
$ab\neq xx$, the star tree would satisfy all constraints by assigning the same LCA to every
non-singleton pair.

The following definition avoids this problem by using the transitive closure of $R$ to distinguish
between genuinely strict comparisons and pairs of constraints that force equality. It therefore
extends strict realization while retaining meaningful information about the relative positions of
the LCAs.

\begin{definition}[Realization, {\cite[Def~4]{LAMSH:25}}]\label{def:real}
A relation $R$ on $\pairs(X)$ is \emph{realizable} if there is a DAG $G$ on $X$ such that for all 
$ab, cd \in \support^+_{R}$, the vertices $\lca_G(ab)$ and $\lca_G(cd)$ are well-defined and the 
following implications hold: 
\begin{description}
    \item[\axiom{I1}] $(ab,cd)\in R$ and $(cd,ab) \notin \tc(R)$ implies that $\lca_G(ab) \prec_G \lca_G(cd)$. 
    \item[\axiom{I2}] $(ab,cd)\in R$ and $(cd,ab) \in \tc(R)$ implies that $\lca_G(ab) = \lca_G(cd)$.  
\end{description}
In this case, we say that $R$ \emph{is realized by} $G$. 
\end{definition}

Condition~\axiom{I1} applies whenever the comparison represented by $(ab,cd)$ is not ``reversible'' in
$\tc(R)$ and therefore requires a strict ancestral relation. In contrast, if both pairs 
 $(ab,cd)$ and $(cd,ab)$  are
present in $\tc(R)$, then Condition~\axiom{I2} identifies the corresponding LCAs
$\lca_G(ab)$ and $\lca_G(cd)$. Thus, all elements of any subset $\mathcal{P}\subseteq \pairs(X)$
such that both $(p,q)$ and $(q,p)$ belong to $\tc(R)$ for all $p,q\in\mathcal{P}$
 must share the same LCA in every realization.
Figure~\ref{fig:realization} shows a DAG $G$ that realizes a relation $R$ and also illustrates
the remaining statements of this section. The notion of strict realizability introduced by
Aho et al.\ is just a special case of realizability, as shown by the next result.

\begin{lemma}[{\cite[L~5]{LAMSH:25}}]\label{lem:3.3-lamsh}
   A relation $R$ is strictly realized by $G$ if and only if $R$ is realized by $G$ and $\tc(R)$ is asymmetric. 
\end{lemma}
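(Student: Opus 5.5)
The plan is to prove the two directions separately, working directly from Definitions~\ref{def:strict-real} and~\ref{def:real}. Both definitions ask for the same well-definedness condition: $\lca_G(ab)$ must be well-defined for all $ab\in\support^+_R$. So only the implications \axiom{I0}, \axiom{I1} and \axiom{I2} need to be compared. The key fact used throughout is that $\prec_G$ is irreflexive and transitive, because $G$ is a DAG. Consequently, whenever $(p,q)\in\tc(R)$ and all LCAs along a connecting chain are well-defined, strict inequalities propagate: $\lca_G(p)\prec_G\lca_G(q)$.

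\emph{($\Rightarrow$)} Suppose $R$ is strictly realized by $G$. First show that $\tc(R)$ is asymmetric. Any $(p,q)\in\tc(R)$ is witnessed by a chain $p=p_0,p_1,\dots,p_k=q$ with $k\ge1$ and $(p_{i-1},p_i)\in R$. Every $p_i$ lies in $\support_R$, so its LCA is well-defined, and \axiom{I0} together with transitivity of $\prec_G$ gives $\lca_G(p)\prec_G\lca_G(q)$. If both $(p,q)$ and $(q,p)$ were in $\tc(R)$, we would obtain $\lca_G(p)\prec_G\lca_G(p)$, contradicting acyclicity. The same argument rules out loops $(p,p)\in\tc(R)$. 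Hence $\tc(R)$ is asymmetric. It follows that for every $(ab,cd)\in R$ we have $(cd,ab)\notin\tc(R)$, so \axiom{I2} is vacuous. The premise of \axiom{I1} then reduces to $(ab,cd)\in R$, and its conclusion is exactly \axiom{I0}. Thus $G$ realizes $R$.

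\emph{($\Leftarrow$)} Suppose $G$ realizes $R$ and $\tc(R)$ is asymmetric. Take any $(ab,cd)\in R$. Then $(ab,cd)\in\tc(R)$, and asymmetry gives $(cd,ab)\notin\tc(R)$. So \axiom{I1} applies and yields $\lca_G(ab)\prec_G\lca_G(cd)$, which is \axiom{I0}. Together with the common well-definedness condition, this shows that $R$ is strictly realized by $G$.

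There is no real obstacle in this argument. The only step needing care is the chain argument in ($\Rightarrow$), which shows that strictness propagates along $\tc(R)$. This requires checking that all intermediate pairs of a chain lie in $\support_R$, so that their LCAs are well-defined, and noting that asymmetry also excludes reflexive pairs $(p,p)$.
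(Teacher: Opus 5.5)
Your proof is correct: chaining \axiom{I0} along an $R$-chain gives $\lca_G(p)\prec_G\lca_G(q)$ for every $(p,q)\in\tc(R)$, which forces $\tc(R)$ to be asymmetric (loops included). Conversely, asymmetry of $\tc(R)$ makes \axiom{I2} vacuous and turns \axiom{I1} into exactly \axiom{I0}, and the well-definedness requirements of the two definitions coincide. The paper does not reprove this lemma but imports it from \cite[L~5]{LAMSH:25}, so there is no in-paper proof to compare against; your direct argument is the natural proof and needs no further ingredients.
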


\begin{figure}[h]
    \centering
    \includegraphics[width=0.7\linewidth]{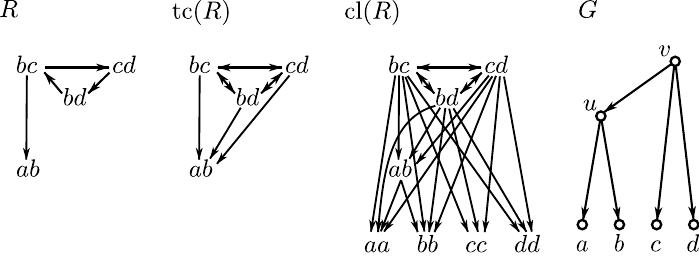}
    \caption{
    Let $R = \{(ab,bc),(bc,bd),(bd,cd),(cd,bc)\}$ be a relation on 
    $\pairs(X)$ with $X = \{a,b,c,d\}$. 
    The graphical representations of $R$, $\tc(R)$, and $\cl(R)$ are shown, with an arc $p\to q$ drawn when $(q,p)$ belongs to the respective relation.
    Note that we omitted the reflexive arcs $(p,p)$ in the drawing of $\tc(R)$ and $\cl(R)$.
    It is straightforward to verify that $R$ satisfies \axiom{X1} and \axiom{X2}. 
    By Theorem~\ref{thm:char}, $R$ is thus realizable. 
    In fact, the DAG $G$ realizes $R$, since for $(ab,bc) \in R$ and $(bc,ab) \notin \tc(R)$, it holds that $\lca_G(ab) \prec_G \lca_G(bc)$ according to \axiom{I1}, 
    and for $(bc,bd),(bd,cd),(cd,bc) \in R$ and $(bd,bc),(cd,bd),(bc,cd) \in \tc(R)$, we have $\lca_G(bc) = \lca_G(bd) = \lca_G(cd)$ according to \axiom{I2}.
    Since $\tc(R)$ is not asymmetric, Lemma~\ref{lem:3.3-lamsh} implies that $R$ is not strictly realizable. 
    }
    \label{fig:realization}
\end{figure}

The previous definitions describe what a realizing DAG must satisfy, but they do not yet provide a
direct way to decide whether such a DAG exists. 
Given a relation $R$ on $\pairs(X)$, 
Lindeberg et al. \cite{LAMSH:25} showed that
central to the recognition problem is the
closure  
\[
\cl(R) \coloneqq \bigcap_{R' \in \mathfrak{R}_{R}} R',
\]
with $\mathfrak{R}_{R}$ being 
the set of all relations $R'$ on $\pairs(X)$ that are $\support^+_{R}$-reflexive, transitive,
cross-consistent, and satisfy $R\subseteq R'$.
If $R$ is realizable, then \cite[Thm~44]{LAMSH:25} shows that 
\[
\cl(R) = \bigcap_{G\in \mathbb{G}} \rel_G, 
\]
where $\mathbb{G}$ denotes the set of all DAGs $G$ on
$X$ that realize $R$.
Informally, $\cl(R)$ contains all comparisons that are implied by constraints in $R$ and, in case $R$ is realizable, that are satisfied by every realization.
The map $\cl$ is a closure operator and therefore satisfies extensivity, monotonicity, and
idempotency. Importantly, its definition as an intersection does not require enumerating all
realizations. Instead, $\cl(R)$ can be computed in polynomial time by repeatedly applying the
three inference rules \axiom{R1}, \axiom{R2}, and \axiom{R3} from
Definition~\ref{def:rules}. Reflexivity and transitivity supply the usual consequences of a binary
relation, while cross-consistency captures additional implications imposed by the LCA structure.
Moreover, realizability of a relation $R$ can be
characterized in terms of two conditions, \axiom{X1} and \axiom{X2}, which also yield a
polynomial-time recognition and construction procedure.

\begin{theorem}[{\cite[Thm~17, 31 \& 38]{LAMSH:25}]}]\label{thm:char}
For any relation $R$ on $\pairs(X)$, $\cl(R)$ can be computed in polynomial time in $|X|$
as follows: Put $\altsrel = R$ and first apply the rule $\axiom{R1}$ to $\altsrel$. Afterwards,
repeatedly apply one of the two rules \axiom{R2} and \axiom{R3} to $\altsrel$ in any order, until
none of the rules can be applied. Then, $\altsrel=\cl(R)$. 

\smallskip
\noindent
Furthermore, the following statements are equivalent for any  relation $R$ on $\pairs(X)$:
\begin{enumerate}
  \item[(1)] $R$ is realizable.
  \item[(2)] $R$ satisfies the following two conditions: 
            \begin{description}
            \item[\axiom{X1}]  For all $a,b,x\in X$: $ab\neq xx$ implies 
            $(ab, xx)\notin R$. 
            \item[\axiom{X2}]  For all $a,b,x,y\in X$: 
            $(ab, xy) \in R$ 
            and $(xy, ab)\not\in \tc(R)$ implies $(xy,ab)\notin \cl(R)$. 
            \end{description}
\end{enumerate}
Moreover, in polynomial time in $|X|$, it can be verified if $R$ is realizable and, in the 
affirmative case, a DAG and network realizing $R$ can be constructed. 
\end{theorem}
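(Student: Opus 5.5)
This is a theorem recalled from \cite{LAMSH:25}, so the plan follows the structure of the three cited results there, establishing each part in turn.

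\emph{Computation of $\cl(R)$.} Let $S^\ast$ be the relation obtained by applying \axiom{R1} once and then \axiom{R2} and \axiom{R3} until nothing changes. The procedure terminates after polynomially many steps. Each application adds a new element of $\pairs(X)\times\pairs(X)$, a set of size $O(|X|^4)$. Each test of \axiom{R2} or \axiom{R3} can be done in polynomial time, for instance by scanning all triples $(p,q,r)$ or all choices of $ab\in\support_S$, $xy$, $c$, $d$. Applying \axiom{R2} or \axiom{R3} never enlarges the support, since the added pairs $(p,r)$ and $(ab,xy)$ use elements already in $\support_S$. Hence $S^\ast$ is $\support^+_R$-reflexive, transitive, cross-consistent and contains $R$, so $S^\ast\in\mathfrak{R}_R$ and $\cl(R)\subseteq S^\ast$. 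Conversely, we show by induction on the number of rule applications that every intermediate $S$ satisfies $S\subseteq R'$ for every $R'\in\mathfrak{R}_R$. This holds because each rule's premises lie in $R'$, and $R'$ is closed under that rule; for \axiom{R3} we also use $\support_S\subseteq\support_R$. Thus $S^\ast=\cl(R)$, independently of the order of rule applications.

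\emph{(1)$\Rightarrow$(2).} Let $G$ realize $R$. For \axiom{X1}: if $(ab,xx)\in R$ with $ab\neq xx$, then $\lca_G(ab)\preceq_G x$. Since $x$ is a leaf, this forces $\lca_G(ab)=x$, which in turn forces $a=b=x$, a contradiction. For \axiom{X2}, we first show $\cl(R)\subseteq\rel_G$. The relation $\rel_G$ is reflexive on $\support^+_R$ and transitive. It is also cross-consistent, by the argument preceding the example: $\lca_G(xy)$ is a common ancestor of $a$ and $b$, hence an ancestor of the unique $\lca_G(ab)$. Finally, \axiom{I1} and \axiom{I2} give $R\subseteq\rel_G$. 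We then restrict $\rel_G$ to pairs in $\support^+_R$ so that it lies in $\mathfrak{R}_R$ and the intersection argument applies. Now suppose $(ab,xy)\in R$ and $(xy,ab)\notin\tc(R)$. Then \axiom{I1} gives $\lca_G(ab)\prec_G\lca_G(xy)$, so $(xy,ab)\notin\rel_G\supseteq\cl(R)$.

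\emph{(2)$\Rightarrow$(1), the main obstacle.} Here a DAG must be constructed. Define on $\support^+_R$ the equivalence $p\sim q$ iff $(p,q),(q,p)\in\cl(R)$. Then $\cl(R)$ induces a partial order $\le$ on the classes, and we take $G$ to be the Hasse diagram $\Hasse$ of this poset. Its leaves should be the classes $[xx]$. By \axiom{X1} and the rule-based description of $\cl(R)$, nothing lies strictly below $[xx]$, and distinct singletons are not equivalent. The key step is to show that for each $ab\in\support^+_R$ the class $[ab]$ is the unique LCA of $a$ and $b$ in $G$. That $[ab]$ is a common ancestor follows from $(aa,ab),(bb,ab)\in\cl(R)$; these are derived via \axiom{R3} using $c=a$, $d=b$ and reflexivity. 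Uniqueness and minimality are harder. A common ancestor $[xy]$ of $[aa]$ and $[bb]$ gives $(aa,xy),(bb,xy)\in\cl(R)$, and we want to conclude $(ab,xy)\in\cl(R)$ by \axiom{R3}. This is precisely where cross-consistency is used, and it needs $ab\in\support_R$. To handle non-leaf vertices that are not themselves LCAs, we may first pass to a 2-lca-relevant form.

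Once LCAs are identified, \axiom{I1} and \axiom{I2} follow from \axiom{X2}. If $(cd,ab)\in\tc(R)$, the two classes coincide. If not, \axiom{X2} gives $(cd,ab)\notin\cl(R)$, so the comparison is strict. Lemma~\ref{lem:DAG2Network} then converts $G$ into a network, and the whole construction runs in polynomial time.
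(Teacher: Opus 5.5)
Your proof is correct and follows essentially the route of \cite{LAMSH:25}, which this paper reproduces for $F=\emptyset$ in Theorem~\ref{thm:polynomial_construction_Fclosure}, Lemma~\ref{lemma:AF_realiz_implies_Y1_Y2}, Proposition~\ref{prop:properties_of_canonical_DAG} and Theorem~\ref{thm:characterization_AF_realized}: the rule-based closure equals the intersection over $\mathfrak{R}_R$, necessity comes from $\cl(R)\subseteq\rel_G$, and sufficiency comes from the Hasse diagram of the quotient poset, with $[ab]=\lca(ab)$ obtained by cross-consistency. There are only small slips, none affecting correctness: $(aa,ab)$ comes from \axiom{R3} with $c=d=b$, not $c=a$; in the \axiom{R3} induction step you have $\support_S\subseteq\support_R^+$ rather than $\support_R$, and the needed fact already follows from $S\subseteq R'$; the transfer of \axiom{X1} from $R$ to $\cl(R)$ is the easy induction over \axiom{R1}--\axiom{R3} you allude to (cf.\ \cite[L~23]{LAMSH:25}); and no passage to a 2-lca-relevant form is needed, since every class $[ab]$ is already $\lca(ab)$.
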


Thus, the closure  yields a characterization of realizability. Condition~\axiom{X1} excludes a
non-singleton pair from lying below a singleton pair, which is impossible because a singleton pair $aa$ corresponds to the leaf $a$ 
in each realization, which can only be
an ancestor of itself. Condition~\axiom{X2} ensures that a comparison required to be strict is not
``reversed'' by any consequence generated by the inference rules defining the closure.

Theorem~\ref{thm:char} provides the required-only counterpart of the results developed in the next
section. There, the additional relation $F$ of forbidden constraints leads to a fourth inference
rule and to the modified closure $\Fcl(R)$.

%
%

\section{Realization of Required and Forbidden LCA-constraints}
\label{sec:AFLCA}

We now consider pairs $(R,F)$ of relations, where $R$ specifies required LCA-constraints and $F$ specifies forbidden ones, 
and introduce a corresponding notion of realizability, called \AF-realizability.
To characterize \AF-realizability, we develop two complementary approaches. The first shows that it
suffices to consider all required LCA-constraints together with a suitable subset of the forbidden
constraints. For the second approach, we construct a set $\Fcl(R)$ containing all LCA-constraints
implied by $R$ and $F$ and prove that $\Fcl$ defines a closure operator. Based on this closure, we
formulate two conditions, \axiom{Y1} and \axiom{Y2}, and show that they characterize
\AF-realizability.
Together, these approaches yield a polynomial-time algorithm that recognizes \AF-realizable pairs of
relations and, whenever such a realization exists, constructs an \AF-realizing DAG. In the next
subsection, we begin with the formal definition of \AF-realization before presenting the first
characterization and introducing the closure operator $\Fcl$.

\subsection{\AF-Realizability and the Closure}

In what follows, let $(R,F)$ be a pair of relations on $\pairs(X)$, that is, let both $R$ and $F$ be
relations on $\pairs(X)$. In our context, $R$ represents ``required (\texttt{R})'' LCA-constraints,
whereas $F$ represents ``forbidden (\texttt{F})'' LCA-constraints. This motivates the following
definition of realization.

\begin{definition}[\AF-Realization]\label{def:AFreal}
Let $(R, F)$ be a pair of relations on $\pairs(X)$. Then, $(R, F)$ is \emph{\AF-realizable} if there
is a DAG $G$ on $X$ such that $R$ is realized by $G$ and the following implication holds: 
\begin{itemize}
    \item[\axiom{F}] If $(ab,cd) \in F$ and $\lca_G(ab)$ and $\lca_G(cd)$ are well-defined, then  $\lca_G(ab) \nprec_G \lca_G(cd)$. 
\end{itemize}
In this case, we say that $(R,F)$ \emph{is \AF-realized by} $G$.  
\end{definition}

Thus, a forbidden constraint does not prescribe a specific alternative relation between the two
LCAs. It only rules out one strict ancestral comparison. In particular, the constraint is satisfied
if the two LCAs coincide, are incomparable, occur in the reverse order, or if at least one of them is
not well-defined.
An example of a DAG $\AF$-realizing a pair $(R,F)$ is provided in Figure~\ref{fig:working-ex1}.
In Section~\ref{sec:alternative}, we discuss two alternative definitions, one to forbid $\lca_G(ab)
\preceq_G \lca_G(cd)$ rather than $\lca_G(ab) \prec_G \lca_G(cd)$ for elements $(ab,cd) \in F$ and
one to ensure that $\lca_G(ab)$ is well-defined for all $ab \in \support_F$.

To characterize \AF-realizable pairs of relations,  we first show that it suffices to consider all
required LCA-constraints together with only a subset of the forbidden ones. To this end, observe
that, in a DAG $G$ that \AF-realizes a pair $(R,F)$ of relations, the LCA $\lca_G(ab)$  is not always required to be well-defined.
In this case, it must hold that $ab \notin \support^+_R$. In
particular, if $ab \notin \support^+_R$ but $ab \in \support_F$, we can exploit this by constructing
a DAG in which $\lca_G(ab)$ is not well-defined; then, none of the forbidden LCA-constraints
$(ab,cd) \in F$ or $(cd,ab) \in F$ are realized, as they trivially satisfy \axiom{F}. On the other
hand, if $ab \in \support^+_R$, then $\lca_G(ab)$ must be well-defined in any DAG $G$ realizing $R$,
regardless of whether $ab \in \support_F$ holds or not. Hence, to construct a DAG $G$ that
\AF-realizes $(R,F)$, it is sufficient to ensure that $\lca_G(ab)$ and $\lca_G(cd)$ are well-defined
only for those pairs $(ab,cd) \in F$ with $ab, cd \in \support^+_R$. To formalize the latter idea,
let $(R,F)$ be a pair of relations on $\pairs(X)$.
We restrict $F$ to those forbidden
constraints whose two LCA-pairs already occur in the required part and must therefore have
well-defined LCAs in every realization of $R$. We define the following relation on
$\pairs(X)$:
\[
\FR \coloneqq \{(ab,cd) \in F \mid ab,cd \in \support^+_R\}. 
\]
Moreover, the \emph{$\FR$-extension of a DAG $G$ on $X$} is obtained from $G$ by stepwise application of
$xy$-extensions for all $xy \in \support_F \setminus \support^+_R$. Applying an $xy$-extension to
$G$ yields a DAG $G'$ in which, by Observation~\ref{obs:xy-extension1}, $\lca_{G'}(xy)$ is not
well-defined. In this case, Condition \axiom{F} is satisfied for $G'$ and all elements $(ab,xy)$ and
$(xy,ab)$ in $F$. We emphasize that, if $(R,F)$ is a pair of relations on $\pairs(X)$, then $xx\in
\support_R^+$ for all $x\in X$. Hence, we obtain

\begin{observation}\label{obs:FR-extension}
    If $(R,F)$ is a pair of relations on $\pairs(X)$, 
    then in any $\FR$-extension of a DAG $G$ on $X$ and, thus, in the application of $xy$-extensions 
    it holds that $x\neq y$ for all $xy \in \support_F \setminus \support^+_R$.
\end{observation}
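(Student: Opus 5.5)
The argument is a short unfolding of the definitions. Fix an arbitrary $xy \in \support_F \setminus \support^+_R$; we must show $x\neq y$. We will argue by contradiction and assume $x=y$, so that $xy = xx$ for some $x\in X$.

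By definition, $\support^+_R = \support_R \cup \{zz \mid z\in X\}$. In particular, $xx \in \support^+_R$ for every $x\in X$, regardless of what $R$ contains, since $R$ is a relation on $\pairs(X)$ and $X$ is the leaf set. This contradicts $xy \notin \support^+_R$. Hence every element of $\support_F \setminus \support^+_R$ is a $2$-element set $\{x,y\}$ with $x\neq y$.

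It remains to link this to the $\FR$-extension. The $\FR$-extension of $G$ applies $xy$-extensions exactly for the pairs $xy \in \support_F \setminus \support^+_R$. The $xy$-extension was defined only for two distinct leaves $x,y$, so the operation is always well-defined, and Observation~\ref{obs:xy-extension1} applies at every step.

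We should also check that the iteration stays within hypotheses. Each $xy$-extension yields a DAG on the same leaf set $X$ by Observation~\ref{obs:xy-extension1}, so the next extension is again applied to a DAG on $X$ with distinct leaves $x,y$. There is no genuine obstacle here; the only point requiring care is recalling that $\support^+_R$ contains all singletons $xx$ by definition.
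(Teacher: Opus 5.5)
Your proof is correct and follows the paper's argument: the paper justifies this observation solely by noting that $xx\in\support^+_R$ for every $x\in X$, so no singleton can lie in $\support_F\setminus\support^+_R$. Your extra remarks, that the iterated $xy$-extensions are well-defined and stay DAGs on $X$, are harmless additions that the paper leaves implicit.
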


We are now in the position to provide a first characterization of \AF-realizable pairs $(R,F)$ in
terms of $R$ and the restriction $\FR$ of $F$. 

\begin{proposition}\label{prop:RF_real_iff_RFR_real}
Let $(R,F)$ be a pair of relations on $\pairs(X)$. 
Then, the following statements hold. 
\begin{enumerate}
    \item[(1)] If $(R,F)$ is \AF-realized by a DAG $G$, then $(R,\FR)$ is also \AF-realized by $G$.
    \item[(2)] If $(R,\FR)$ is \AF-realized by a DAG $G$, then $(R,F)$ is \AF-realized by the $\FR$-extension of $G$. 
\end{enumerate}
In particular, $(R,F)$ is \AF-realizable if and only if $(R,\FR)$ is \AF-realizable. 
\end{proposition}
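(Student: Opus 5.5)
Part (1) is immediate from the definitions. The realization of $R$ by $G$ does not depend on the forbidden relation, so it carries over unchanged. Since $\FR\subseteq F$, every $(ab,cd)\in\FR$ with $\lca_G(ab)$ and $\lca_G(cd)$ well-defined is in particular an element of $F$. Hence \axiom{F} for $F$ gives $\lca_G(ab)\nprec_G\lca_G(cd)$, and $(R,\FR)$ is \AF-realized by $G$.

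For part (2), let $G$ \AF-realize $(R,\FR)$ and let $G'$ be its $\FR$-extension. This is obtained by successive $xy$-extensions over the finitely many pairs $xy\in\support_F\setminus\support^+_R$; by Observation~\ref{obs:FR-extension} each such pair has $x\neq y$, so every extension step is defined. Iterating Observation~\ref{obs:xy-extension1} yields three facts:
\begin{itemize}
\item $G'$ is a DAG on $X$.
\item $u\preceq_G v$ if and only if $u\preceq_{G'} v$ for all $u,v\in V(G)$.
\item $\LCA_G(ab)=\LCA_{G'}(ab)$ for every $ab$ not among the extended pairs, in particular for all $ab\in\support^+_R$.
\end{itemize}
For the third fact we must check that a later extension does not destroy the conclusion of an earlier one. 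Each step changes the LCA set only of its own pair and preserves the order among old vertices. Hence an extended pair keeps at least the two new vertices as distinct LCAs; these remain $\preceq$-minimal because later steps add only new roots with leaf children. So $\lca_{G'}(xy)$ is not well-defined for every $xy\in\support_F\setminus\support^+_R$.

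Realization of $R$ then transfers to $G'$. Conditions \axiom{I1} and \axiom{I2} only involve pairs in $\support^+_R$. Their LCAs are the same vertices in $G$ and $G'$, and the ancestor order among these vertices is preserved.

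For \axiom{F}, take $(ab,cd)\in F$ with both $\lca_{G'}(ab)$ and $\lca_{G'}(cd)$ well-defined. Every element of $\support_F$ lies either in $\support^+_R$ or among the extended pairs. Well-definedness rules out the extended pairs, so $ab,cd\in\support^+_R$ and therefore $(ab,cd)\in\FR$. The LCAs coincide with those in $G$, so \axiom{F} for $\FR$ in $G$ gives $\lca_G(ab)\nprec_G\lca_G(cd)$. Order preservation transfers this to $G'$.

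The final equivalence follows from (1) and (2). The only real obstacle is the bookkeeping for the iterated extensions, in particular that later steps do not restore uniqueness of an earlier pair's LCA. I would handle it by induction on the number of extension steps, using Observation~\ref{obs:xy-extension1}(ii),(iii) at each step.
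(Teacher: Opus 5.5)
Your proof is correct and follows essentially the same route as the paper. Part (1) follows from $\FR\subseteq F$. Part (2) uses Observation~\ref{obs:xy-extension1}(ii),(iii) to carry the realization of $R$ and the constraints in $\FR$ over to the extension, and (i) to make the LCAs of all pairs in $\support_F\setminus\support^+_R$ non-unique. The only difference is bookkeeping: the paper proceeds constraint by constraint over $F\setminus\FR$, whereas you perform all extensions at once and check the invariants globally, which makes explicit the (correct) point that later extensions do not restore uniqueness for earlier extended pairs.
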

\begin{proof}
	Let $(R,F)$ be a pair of relations on $\pairs(X)$. Clearly, if $(R,F)$ is \AF-realized by a DAG
	$G$, then $\FR \subseteq F$ implies that $G$ also \AF-realizes $(R,\FR)$. This, in particular,
	proves Statement~(1). 

	We show now that Statement~(2) holds. Hence, suppose that $(R,\FR)$ is \AF-realized by the DAG
	$G$. Let $(ab,cd) \in F \setminus \FR$. By the definition of $\FR$, we have $ab\notin
	\support_R^+$ or $cd\notin \support_R^+$. In other words, none of the elements in $R$ enforces
	both $\lca_H(ab)$ and $\lca_H(cd)$ to be well-defined in any DAG $H$ that \AF-realizes $(R,\FR)$.
	Let $pq\in \{ab,cd\}$ be such that $pq\notin \support_R^+$ and let $G'$ be the graph obtained from
	$G$ by applying a $pq$-extension. Observation~\ref{obs:xy-extension1} implies that $G'$ is a DAG
	and, moreover, that $\lca_{G'}(pq)$ is not well-defined. In addition,
	Observation~\ref{obs:xy-extension1}(ii) and (iii) and the definition of $\FR$ implies that $G'$
	\AF-realizes $(R,\FR)$. Together with $\lca_{G'}(pq)$ not being well-defined and $pq\in
	\{ab,cd\}$, it follows that $G'$ \AF-realizes $(R,\FR \cup \{(ab,cd)\})$. Now let $p'q'\in
	\{ab,cd\}\setminus \{pq\}$ and assume that $p'q' \notin \support^+_R$. In this case, we would not
	only apply a $pq$-extension but also a $p'q'$-extension. However, since $G'$ is a DAG that
	\AF-realizes $(R,\FR \cup \{(ab,cd)\})$, we can re-use an analogous argument as used for $G$ and
	$G'$ to conclude that the $p'q'$-extension $G''$ of $G'$ \AF-realizes $(R,\FR \cup \{(ab,cd)\})$.
	Repeating this procedure for all $(ab,cd) \in F \setminus \FR$, results in a DAG $H$ that
	\AF-realizes $(R, F)$. Note that $H$ corresponds to the $\FR$-extension of $G$ and, therefore,
	Statement~(2) holds.

	Finally Statement~(1) and (2) together imply that $(R,F)$ is \AF-realizable if and only if
	$(R,\FR)$ is \AF-realizable. 
\end{proof}

While interesting in its own right, the characterization in
Proposition~\ref{prop:RF_real_iff_RFR_real} will also be useful later in
Section~\ref{subs:charAFreal}, where we establish a further characterization
of \AF-realizability in terms of two simple conditions, \axiom{Y1} and
\axiom{Y2}, which are similar in flavor to \axiom{X1} and \axiom{X2}.
For now, we continue by deriving further properties of \AF-realizable pairs
$(R,F)$. In particular, we construct a set $\Fcl(R)$ containing all 
LCA-constraints implied by $R$ and $F$, and show that the resulting operator is a
closure operator. Using this closure, we then prove that \axiom{Y1} and
\axiom{Y2} characterize \AF-realizability.

To construct $\Fcl(R)$, we first examine the implications of LCA-constraints that are simultaneously
required and forbidden. At first sight, one might expect that required and forbidden relations must be
disjoint. This, however, need not be the case: the same ordered pair may occur in
both relations whenever the corresponding LCAs are forced to coincide. The following result makes
this observation precise.

\begin{lemma}\label{lem:RF-lca-constraints}
Let $(R,F)$ be a pair of relations on $\pairs(X)$. If $(R, F)$ is \AF-realized by $G$ and
$(ab,xy)\in R\cap F$, then $\lca_G(xy) = \lca_G(ab)$ and it holds that $(xy,ab)\in \tc(R)$.
\end{lemma}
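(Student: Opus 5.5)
The plan is to unpack the definitions and use the dichotomy \axiom{I1}/\axiom{I2} from Definition~\ref{def:real}. Suppose $(R,F)$ is \AF-realized by $G$ and $(ab,xy)\in R\cap F$. Since $(ab,xy)\in R$, both $ab$ and $xy$ lie in $\support_R\subseteq\support^+_R$. By Definition~\ref{def:real}, $\lca_G(ab)$ and $\lca_G(xy)$ are then well-defined, because $G$ realizes $R$.

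Next, I split into the two cases according to whether $(xy,ab)\in\tc(R)$.

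\emph{Case $(xy,ab)\notin\tc(R)$.} Condition \axiom{I1} applied to $(ab,xy)\in R$ yields $\lca_G(ab)\prec_G\lca_G(xy)$. On the other hand, $(ab,xy)\in F$ and both LCAs are well-defined, so Condition \axiom{F} of Definition~\ref{def:AFreal} yields $\lca_G(ab)\nprec_G\lca_G(xy)$. These contradict each other, so this case cannot occur.

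\emph{Case $(xy,ab)\in\tc(R)$.} This is the second claim of the lemma. Moreover, \axiom{I2} applied to $(ab,xy)\in R$ gives $\lca_G(ab)=\lca_G(xy)$, which is the first claim.

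There is no substantial obstacle. The only point requiring care is to check that both LCAs are well-defined, so that \axiom{F} is actually triggered. This follows because both pairs belong to the support of $R$ and realization of $R$ demands well-definedness on $\support^+_R$.
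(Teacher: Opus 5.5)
Your proposal is correct and follows essentially the same argument as the paper: \axiom{I1} combined with \axiom{F} rules out $(xy,ab)\notin\tc(R)$, and \axiom{I2} then gives $\lca_G(ab)=\lca_G(xy)$. You also check explicitly that both LCAs are well-defined via $ab,xy\in\support^+_R$, which the paper uses implicitly.
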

\begin{proof}
	Suppose that $(R, F)$ is \AF-realized by $G$ and that $(ab,xy)\in R\cap F$. Assume, for
	contradiction, that $(xy,ab)\notin \tc(R)$. Since $R$ is realized by $G$, the vertices
	$\lca_G(ab)$ and $\lca_G(xy)$ are well-defined and \axiom{I1} implies that $\lca_G(ab) \prec_G
	\lca_G(xy)$. However, since $G$ \AF-realizes $(R, F)$ and $(ab,xy)\in F$, Condition \axiom{F} must
	hold and, therefore, $\lca_G(ab) \nprec_G \lca_G(xy)$; a contradiction. Hence, $(xy,ab)\in
	\tc(R)$. Thus, \axiom{I2} must hold and we can conclude that $\lca_G(xy) = \lca_G(ab)$. 
\end{proof}

In particular, Lemma~\ref{lem:RF-lca-constraints} motivates the definition of
$F$-conditional-symmetry as specified in Section~\ref{sec:prelim}. To recall, for a pair $(R,F)$ of
relations on $\pairs(X)$, $R$ is $F$-conditional-symmetric (\emph{$F$-csym}, for short) if for all
$a,b,x,y \in X$ (not necessarily distinct) the following statement holds: 
\[
(ab,xy)\in  R \cap F \text{ implies } (xy,ab)\in R. 
\]   

Application of rule \axiom{R4} enforces $(xy,ab) \in R$ whenever $(ab,xy) \in R \cap F$ and ensures,
according to Lemma~\ref{lem:RF-lca-constraints}, that $\lca_G(xy) = \lca_G(ab)$ for any DAG $G$ that
\AF-realizes $(R,F)$. The latter guarantees that the 
strict comparison forbidden by $(ab,xy)$ does not hold in $G$.

As proven in the following results, repeated application of the rules \axiom{R1}--\axiom{R4} to
$(R,F)$ yields the closure $\Fcl(R)$ of $R$ w.r.t.\ $F$, analogous to the closure $\cl(R)$ defined
in Section~\ref{sec:realization-required}. The first three rules propagate consequences of the
required relation, while the new Rule~\axiom{R4} records equalities that become necessary when a
comparison is both implied and forbidden. Intuitively, $\Fcl(R)$ therefore comprises all 
LCA-constraints implied by the constraints contained in $R$ and $F$. As we shall see, this closure plays
a crucial role both in characterizing \AF-realizable pairs and in constructing, whenever possible, a
DAG that \AF-realizes $(R,F)$.

\begin{definition}[The set $\mathfrak{R}_{R,F}$ and the relation $\Fcl(R)$] \label{Fclosure}
Let $(R,F)$ be a pair of relations on $\pairs(X)$. We define $\mathfrak{R}_{R,F}$ as the set of all
relations $R'$ on $\pairs(X)$ that are $\support^+_{R}$-reflexive, transitive, cross-consistent,
$F$-csym, and satisfy $R\subseteq R'$. Moreover, we define
\[
\Fcl(R) \coloneqq \bigcap_{R' \in \mathfrak{R}_{R,F}} R'.
\]
\end{definition}
Note that if $R$ and $F$ are relations on $\pairs(X)$, then $\Fcl(R)$ is considered as a relation on
$\pairs(X)$. Furthermore, observe that if $F=\emptyset$, the latter intersection is taken among
those relations that contain $R$ and are $\support^+_{R}$-reflexive, transitive, and
cross-consistent and, thus, $\cl_{\emptyset}(R)=\cl(R)$. In particular, for all relations $F$ it
holds that $\mathfrak{R}_{R,F} \subseteq \mathfrak{R}_{R,\emptyset}$. We summarize this discussion
into 
\begin{observation}\label{obs:cl_emptyset=cl}
For all relations $R$ and $F$ on $\pairs(X)$, it holds that $\cl(R) = \cl_{\emptyset}(R) \subseteq
\Fcl(R)$. 
\end{observation}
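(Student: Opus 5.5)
The statement reduces to two containments that follow directly from the definition of $\Fcl$ as an intersection, so no rule-based computation is needed.

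\emph{Equality $\cl(R)=\cl_{\emptyset}(R)$.} With $F=\emptyset$, the $F$-csym condition requires that every $(p,q)\in R'\cap\emptyset$ satisfies $(q,p)\in R'$. This holds vacuously for every relation $R'$ on $\pairs(X)$. Hence the defining conditions of $\mathfrak{R}_{R,\emptyset}$ are exactly the four conditions defining $\mathfrak{R}_R$: $\support^+_R$-reflexivity, transitivity, cross-consistency and $R\subseteq R'$. So $\mathfrak{R}_{R,\emptyset}=\mathfrak{R}_R$, and the two intersections agree, giving $\cl_{\emptyset}(R)=\cl(R)$.

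\emph{Inclusion $\cl_{\emptyset}(R)\subseteq\Fcl(R)$.} Every $R'\in\mathfrak{R}_{R,F}$ satisfies all conditions defining $\mathfrak{R}_{R,\emptyset}$, since those conditions are a subset of its own. Thus $\mathfrak{R}_{R,F}\subseteq\mathfrak{R}_{R,\emptyset}$. Intersecting over a smaller family gives a larger or equal set, so
\[
\cl_{\emptyset}(R)=\bigcap_{R'\in\mathfrak{R}_{R,\emptyset}}R'\subseteq\bigcap_{R'\in\mathfrak{R}_{R,F}}R'=\Fcl(R).
\]

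The one degenerate case is an empty family, where the intersection must be read as the full set $\pairs(X)\times\pairs(X)$; the inclusion is then trivial. This case does not actually occur, because $\pairs(X)\times\pairs(X)$ itself satisfies all the defining conditions, so both families are nonempty. Nothing in the argument is a genuine obstacle.
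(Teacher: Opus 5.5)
Your proof is correct and takes the same route as the paper: the $F$-csym condition is vacuous for $F=\emptyset$, so $\mathfrak{R}_{R,\emptyset}=\mathfrak{R}_R$, and the inclusion $\mathfrak{R}_{R,F}\subseteq\mathfrak{R}_{R,\emptyset}$ yields the reverse inclusion of the intersections. Your remark that both families are nonempty, because $\pairs(X)\times\pairs(X)$ belongs to each, is a harmless detail that the paper leaves implicit.
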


As we shall see in Proposition~\ref{prop:closure_operator}, $\Fcl$ is indeed a closure operator. To
establish this result, we first show that $\Fcl(R)$ can be computed in polynomial time using the
four simple rules \axiom{R1}--\axiom{R4} specified in Definition~\ref{def:rules}, 
thereby generalizing \cite[Thm.~16]{LAMSH:25}. More precisely, the next result provides a purely combinatorial procedure for
computing $\Fcl(R)$: starting with $R$, we repeatedly add all consequences forced by the four
inference rules. Since no pair is ever removed, this process terminates as soon as none of the rules
produces a new element.

\begin{theorem} \label{thm:polynomial_construction_Fclosure}
Let $(R,F)$ be a pair of relations on $\pairs(X)$. Let $S$ be a relation obtained from $(R,F)$ by
starting with $S = R$ and first applying the rule $\axiom{R1}$ to $\altsrel$. Afterwards, repeatedly
apply one of the three rules \axiom{R2}, \axiom{R3}, and \axiom{R4} to $(\altsrel,F)$ in any order,
until none of the rules can be applied. Then, $S = \Fcl(R)$ and $\support^+_{R } =
\support_{\Fcl(R)} = \support_S$. Furthermore, $\Fcl(R)$ can be constructed in polynomial time in
$|X|$. 
\end{theorem}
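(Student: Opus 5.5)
The plan is a standard two-inclusion argument, with a support invariant proved along the way. Throughout, let $S$ be the final relation produced by the procedure.

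\textbf{Support invariant.} I first show that $\support_{S'}=\support^+_R$ for every intermediate relation $S'$ of the procedure, by induction on the number of rule applications.
\begin{itemize}
\item After \axiom{R1}, every $p\in\support^+_R$ has $(p,p)\in S'$. Conversely, every pair added by \axiom{R1} has both entries in $\support^+_R$. Hence $\support_{S'}=\support^+_R$.
\item \axiom{R2} adds $(p,r)$ with $p,r$ already in the support.
\item \axiom{R3} adds $(ab,xy)$ with $ab\in\support_{S'}$ by hypothesis and $xy\in\support_{S'}$ because $(ac,xy)\in S'$.
\item \axiom{R4} adds $(q,p)$ where $(p,q)\in S'$.
\end{itemize}
So no rule enlarges the support, and in particular $\support_S=\support^+_R$.

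\textbf{Termination and polynomial time.} Rules only add elements of $\pairs(X)\times\pairs(X)$, which has $O(|X|^4)$ elements. Hence at most $O(|X|^4)$ productive applications occur. Each round of checking all rules is polynomial:
\begin{itemize}
\item \axiom{R2} requires checking triples, which is $O(|X|^6)$.
\item \axiom{R3} requires checking $ab,xy,c,d$, which is $O(|X|^6)$.
\item \axiom{R4} requires scanning $F$.
\end{itemize}
This gives a polynomial bound overall.

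\textbf{$S\in\mathfrak{R}_{R,F}$.} Since $S\supseteq R$ and no rule applies at termination:
\begin{itemize}
\item $S$ is transitive (no \axiom{R2} step applies);
\item $S$ is $F$-csym (no \axiom{R4} step applies);
\item $S$ is $\support^+_R$-reflexive, because \axiom{R1} was applied first and pairs are never removed.
\end{itemize}
It remains to check that $S$ is cross-consistent, i.e., that \axiom{R3} applied to $S$ adds nothing. This is exactly the termination condition, where the applicability condition $ab\in\support_S$ is evaluated with respect to $S$ itself. Hence $\Fcl(R)\subseteq S$.

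\textbf{$S\subseteq R'$ for every $R'\in\mathfrak{R}_{R,F}$.} I argue by induction along the sequence of rule applications that every intermediate $S'$ satisfies $S'\subseteq R'$. The base case is $R\subseteq R'$.
\begin{itemize}
\item \axiom{R1} adds $(p,p)$ with $p\in\support^+_R$, and these lie in $R'$ by $\support^+_R$-reflexivity.
\item \axiom{R2} and \axiom{R4} steps stay inside $R'$ because $R'$ is transitive and $F$-csym, and $S'\subseteq R'$.
\end{itemize}
The step I expect to be the only subtle one is \axiom{R3}. The rule's side condition is $ab\in\support_{S'}$, whereas cross-consistency of $R'$ is a statement about $\support_{R'}$. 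Since $S'\subseteq R'$, we have $\support_{S'}\subseteq\support_{R'}$. Therefore $ab\in\support_{R'}$, and together with $(ac,xy),(bd,xy)\in R'$, cross-consistency of $R'$ yields $(ab,xy)\in R'$. Thus $S\subseteq R'$ for all $R'\in\mathfrak{R}_{R,F}$, so $S\subseteq\Fcl(R)$.

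\textbf{Conclusion.} Combining the two inclusions gives $S=\Fcl(R)$, independently of the order in which the rules were applied. This also yields $\support_{\Fcl(R)}=\support_S=\support^+_R$.
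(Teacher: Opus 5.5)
Your proposal is correct and follows essentially the same route as the paper. You get $\Fcl(R)\subseteq S$ from $S\in\mathfrak{R}_{R,F}$ by exhaustive rule application, prove $S\subseteq R'$ for every $R'\in\mathfrak{R}_{R,F}$ by induction along the rule-application sequence, and obtain $\support_S=\support^+_R$ because no rule after \axiom{R1} enlarges the support. The only difference is that you spell out the \axiom{R3} step explicitly via $\support_{S'}\subseteq\support_{R'}$, where the paper defers it to the earlier work of Lindeberg et al.; this makes your argument self-contained.
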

\begin{proof}  
	We follow here some of the ideas used in the proof of Theorem~16 in~\cite{LAMSH:25}. Let $(R,F)$
	be a pair of relations on $\pairs(X)$, and suppose that $S$ is obtained as described in the
	statement. To simplify notation, put $\mathfrak{R}\coloneqq \mathfrak{R}_{R,F}$ and $\Fcl\coloneqq
	\Fcl(R)$. We first show that $\Fcl \subseteq S$. Since the rules \axiom{R1}-\axiom{R4} are
	exhaustively applied, the final relation $S$ must be $\support^+_{R }$-reflexive, transitive,
	cross-consistent, and $F$-csym and satisfies, by definition, $R\subseteq S$. Thus, $S \in
	\mathfrak{R}$ holds. This together with $\Fcl = \cap_{R' \in \mathfrak{R}} R'$ implies $\Fcl
	\subseteq S$. 

	We now show $S \subseteq \Fcl$ by considering the sequence $R = S_0, S_1, \dots , S_n = S$ of
	relations on $\pairs(X)$, where $S_{i+1}$ is constructed from $S_i$ by application of exactly one
	of the rules \axiom{R1}, \axiom{R2}, \axiom{R3}, or \axiom{R4} for all $i \in \{0, \dots, n-1\}$. 

	Note that all relations $R'\in \mathfrak{R}$ satisfy $R\subseteq R'$ and it, thus, holds that $R
	\subseteq \bigcap_{R' \in \mathfrak{R}} R' = \Fcl$. Hence, $R = S_0 \subseteq \Fcl$ holds. The
	relation $S_1$ is obtained from $S_0$ by applying \axiom{R1}, i.e., $S_1 \coloneqq S_0 \cup
	\{(p,p) \mid p \in \support^+_{R }\}$. The latter equation together with $R\subseteq \Fcl$ and the
	fact that $\Fcl$ is $\support^+_{R }$-reflexive, implies that $S_1 \subseteq \Fcl$.

	Now suppose $S_i \subseteq \Fcl$ for some fixed $i$ with $1 \leq i \leq n-1$. The relation
	$S_{i+1}$ is constructed from $S_i$ by applying one of the rules \axiom{R2}, \axiom{R3}, or
	\axiom{R4}, and we distinguish these three cases. If \axiom{R2} or \axiom{R3} were applied to
	$S_i$, then $S_{i+1} \subseteq \Fcl$ by the same argument as given in the proof of Theorem~16 in
	\cite{LAMSH:25}. Suppose now that $S_{i+1}$ is obtained from $S_i$ by applying \axiom{R4}. By
	construction, $S_{i+1} = S_i \cup \{(q,p)\}$ holds for some $(p,q) \in F \cap S_i$. By assumption,
	$S_i \subseteq \Fcl$ holds and, thus, $(p,q) \in \Fcl$. Hence, $(p,q) \in R'$ holds for all $R'
	\in \mathfrak{R}$. Since every $R' \in \mathfrak{R}$ is $F$-csym and $(p,q) \in F $, we can
	conclude that $(q,p) \in R'$ for all $R' \in \mathfrak{R}$. Hence, $(q,p) \in \Fcl$ and,
	therefore, $S_{i+1} \subseteq \Fcl$. By induction, $S= S_n \subseteq \Fcl$ holds. This together
	with $\Fcl \subseteq S$ implies that $S = \Fcl$. 

	Therefore, we have, in particular, $\support_S = \support_{\Fcl}$. Furthermore, observe that
	$\support_{S_1} = \support^+_{R }$ and by construction, $\support_{S_i} = \support_{S_{i+1}}$ for
	all $i \in \{1, \dots , n-1\}$. Consequently, $\support^+_{R } = \support_S$ holds. 

	Finally, we show that $\Fcl(R)$ can be constructed in polynomial time in $|X|$. By the arguments
	above, $\Fcl(R)$ can be obtained by repeatedly applying rules \axiom{R1}, \axiom{R2}, \axiom{R3},
	and \axiom{R4} starting with $S = R$ and extending it step by step. Since $S,F \subseteq \pairs(X)
	\times \pairs(X)$ and $|\pairs(X)| = \binom{|X|}{2} + |X| \in O(|X|^2)$, it follows that $|S| \in
	O(|X|^4)$ at each step and $|F| \in O(|X|^4)$. Thus, checking whether one of the rules \axiom{R1},
	\axiom{R2}, \axiom{R3}, or \axiom{R4} can be applied to $S$ can be done in polynomial time, and
	adding a pair $(p,q)$ to $S$ requires only constant time. Clearly, the process of constructing
	$\Fcl(R)$ by applying rules \axiom{R1}, \axiom{R2}, \axiom{R3}, or \axiom{R4} must terminate,
	since at most $O(|X|^4)$ pairs can be added to $S$. In summary, $\Fcl(R)$ can be constructed in
	polynomial time in $|X|$. 
\end{proof}

\begin{figure}[h]
    \centering
    \includegraphics[width=0.8\linewidth]{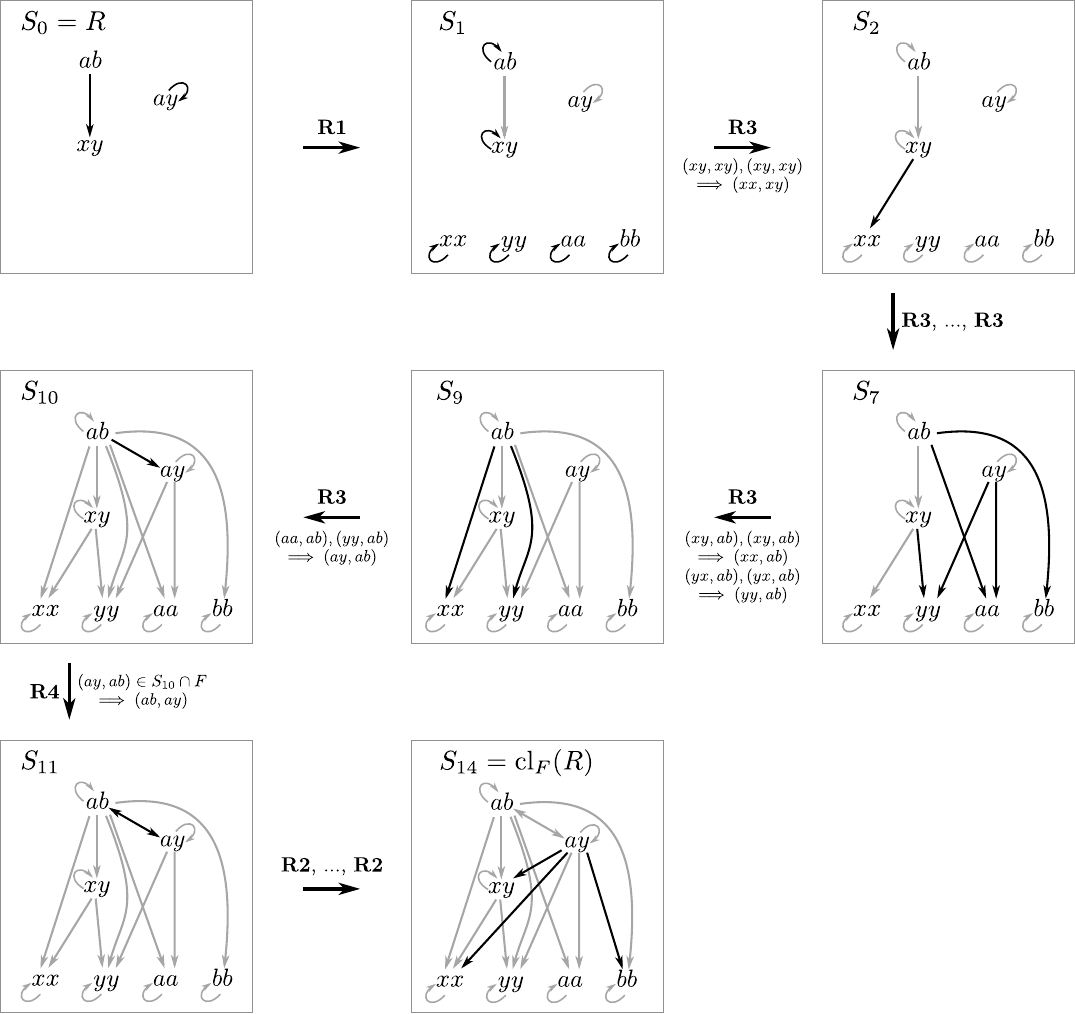}
    \caption{
    Illustrated is the stepwise construction $\Fcl(R)$ of the two relations $R =
    \{(xy,ab),(ay,ay)\}$ and $F = \{(ay,ab)\}$ on $\pairs(X)$ with $X = \{a,b,x,y\}$. For each of
    the shown relations $S_i$, an arc $p\to q$ is drawn precisely if $(q,p) \in S_i$. We start with
    $S_0 = R$ and obtain $\Fcl(R) = S_{14}$ by repeated application of \axiom{R1}, \axiom{R2},
    \axiom{R3}, and \axiom{R4} as indicated by thick arcs. For better visibility, arcs that are
    already contained in $S_{j}$ are shown in gray in $S_i$, $j<i$.}
    \label{fig:working-ex0}
\end{figure}

The stepwise construction of $\Fcl(R)$ for a pair of relations $(R,F)$ using the rules
\axiom{R1}--\axiom{R4}, as described in Theorem~\ref{thm:polynomial_construction_Fclosure}, is
illustrated in Figure~\ref{fig:working-ex0}. We emphasize that our main aim here is to establish
the existence of a polynomial-time construction. We strongly suspect that the procedure described in
the proof can be improved considerably and that sharper runtime bounds can be obtained; developing
such improvements is left for future work. For now,
Theorem~\ref{thm:polynomial_construction_Fclosure} provides the main ingredient needed to show that
$\Fcl$ is indeed a closure operator.

\begin{proposition}\label{prop:closure_operator}
Let $F$ be a relation on $\pairs(X)$. Then, $\Fcl$ is a closure operator, i.e., it satisfies the following three conditions
for all relations $R$ on $\pairs(X)$:
\begin{enumerate}
\item[(i)] Extensivity: $R \subseteq \Fcl(R)$.
\item[(ii)] Monotonicity: $\Fcl(R') \subseteq \Fcl(R)$ for all relations $R'$ on $\pairs(X)$ with $R'\subseteq R$.
\item[(iii)] Idempotency: $\Fcl(\Fcl(R)) =  \Fcl(R)$.
\end{enumerate}
\end{proposition}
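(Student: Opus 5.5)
The plan is to argue directly from Definition~\ref{Fclosure}, treating $\Fcl(R)$ as the intersection over the family $\mathfrak{R}_{R,F}$. The work is in comparing these families for different first arguments. Theorem~\ref{thm:polynomial_construction_Fclosure} supplies the one non-trivial ingredient, the support identity $\support^+_R=\support_{\Fcl(R)}$.

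First I check that $\mathfrak{R}_{R,F}$ is never empty, so that the intersection is a well-defined subset of $\pairs(X)\times\pairs(X)$. The full relation $\pairs(X)\times\pairs(X)$ is reflexive, transitive, cross-consistent and $F$-csym, and it contains $R$, so it belongs to $\mathfrak{R}_{R,F}$. Alternatively, the relation $S$ from Theorem~\ref{thm:polynomial_construction_Fclosure} lies in the family. Extensivity (i) is then immediate: every $R'\in\mathfrak{R}_{R,F}$ satisfies $R\subseteq R'$, hence $R\subseteq\bigcap_{R'\in\mathfrak{R}_{R,F}}R'=\Fcl(R)$.

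For monotonicity (ii), let $R'\subseteq R$ and show $\mathfrak{R}_{R,F}\subseteq\mathfrak{R}_{R',F}$. From $R'\subseteq R$ we get $\support_{R'}\subseteq\support_R$, and hence $\support^+_{R'}\subseteq\support^+_R$. Take any $R''\in\mathfrak{R}_{R,F}$. It is $\support^+_R$-reflexive, and therefore $\support^+_{R'}$-reflexive. It is transitive, cross-consistent and $F$-csym, since these properties do not depend on the first argument. It contains $R\supseteq R'$. So $R''\in\mathfrak{R}_{R',F}$. Intersecting over a larger family gives a smaller set, so $\Fcl(R')\subseteq\Fcl(R)$.

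Idempotency (iii) is the step needing the most care, because the reflexivity requirement in $\mathfrak{R}_{\Fcl(R),F}$ refers to $\support^+_{\Fcl(R)}$ rather than to $\support^+_R$. I resolve this with Theorem~\ref{thm:polynomial_construction_Fclosure}. It gives $\support_{\Fcl(R)}=\support^+_R$, and since $\support^+_R$ already contains all $xx$, it follows that $\support^+_{\Fcl(R)}=\support^+_R$. I then show $\mathfrak{R}_{\Fcl(R),F}=\mathfrak{R}_{R,F}$.
\begin{itemize}
\item $\mathfrak{R}_{R,F}\subseteq\mathfrak{R}_{\Fcl(R),F}$: any $R''\in\mathfrak{R}_{R,F}$ contains $\Fcl(R)$ by the definition of the intersection. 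It has the same reflexivity, transitivity, cross-consistency and $F$-csym properties.
\item $\mathfrak{R}_{\Fcl(R),F}\subseteq\mathfrak{R}_{R,F}$: any member contains $\Fcl(R)\supseteq R$ by (i). It is $\support^+_R$-reflexive by the support identity.
\end{itemize}
Equal families have equal intersections, so $\Fcl(\Fcl(R))=\Fcl(R)$. The main obstacle is exactly this support bookkeeping. Without the identity $\support^+_{\Fcl(R)}=\support^+_R$ from Theorem~\ref{thm:polynomial_construction_Fclosure}, the reflexivity constraint could in principle differ between the two families.
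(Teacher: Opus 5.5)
Your proof is correct and follows the paper's argument. Extensivity and monotonicity are argued identically, and for idempotency both proofs rest on the support identity $\support^+_{\Fcl(R)}=\support^+_R$ from Theorem~\ref{thm:polynomial_construction_Fclosure}. The only difference is that the paper shows $\Fcl(R)\in\mathfrak{R}_{\Fcl(R),F}$ and then invokes extensivity, whereas you show that the families $\mathfrak{R}_{R,F}$ and $\mathfrak{R}_{\Fcl(R),F}$ coincide, which spares you from checking that $\Fcl(R)$ itself is transitive, cross-consistent and $F$-csym.
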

\begin{proof}
	Let $R$ and $F$ be relations on $\pairs(X)$. Note that all relations $R'\in \mathfrak{R}_{R,F}$
	satisfy $R\subseteq R'$ and it, thus, holds that $R \subseteq \bigcap_{R' \in \mathfrak{R}_{R,F}}
	R' = \Fcl(R)$. Hence, extensivity holds.

	Suppose now that $R'$ and $R$ are relations on $\pairs(X)$ such that $R'\subseteq R$. Hence,
	$\support_{R'} \subseteq \support_{R}$ holds. Since $R, R'$ are both relations on $\pairs(X)$,
	also $\support_{R'}^+ \subseteq \support_{R}^+$ is satisfied. Now, let $\tilde R \in
	\mathfrak{R}_{R,F}$. Since $\support_{R'}^+ \subseteq \support_{R}^+$ and since $\tilde R$ is
	$\support_{R}^+$-reflexive, it follows that $\tilde R$ is, in particular,
	$\support_{R'}^+$-reflexive. Moreover, since $\tilde R \in \mathfrak{R}_{R,F}$, the relation
	$\tilde R$ is, in addition, transitive, cross-consistent, $F$-csym, and satisfies $R'\subseteq
	R\subseteq\tilde R$. Taking the latter two arguments together, $\tilde R \in \mathfrak{R}_{R',F}$
	follows. Consequently, $\mathfrak{R}_{R,F} \subseteq \mathfrak{R}_{R',F}$ holds and we obtain
	$\Fcl(R') = \bigcap_{\tilde R \in \mathfrak{R}_{R',F}} \tilde R \subseteq \bigcap_{\tilde R \in
	\mathfrak{R}_{R,F}} \tilde R = \Fcl(R)$. Hence, monotonicity holds.    

	To prove idempotency, note that, by Theorem~\ref{thm:polynomial_construction_Fclosure} and
	\axiom{R1}, $\Fcl(R)$ is $\support^+_{R}$-reflexive and $\support^+_{R} = \support_{\Fcl(R)} =
	\support^+_{\Fcl(R)}$. Therefore, $\Fcl(R)$ is $\support_{\Fcl(R)}^+$-reflexive. Moreover, by
	Theorem~\ref{thm:polynomial_construction_Fclosure}, to obtain $\Fcl(R)$ from $R$ the rules
	\axiom{R1}, \axiom{R2}, \axiom{R3}, and \axiom{R4} are exhaustively applied until none of these
	rules can be applied anymore. Consequently, $\Fcl(R)$ is transitive, cross-consistent, $F$-csym,
	and clearly $\Fcl(R) \subseteq \Fcl(R)$. In summary, $\Fcl(R)\in \mathfrak{R}_{\Fcl(R),F}$ holds.
	Hence, $\Fcl(\Fcl(R)) = \bigcap_{\tilde R \in \mathfrak{R}_{\Fcl(R),F}} \tilde R \subseteq
	\Fcl(R)$. Moreover, by extensivity, $\Fcl(R) \subseteq \Fcl(\Fcl(R))$ holds, and it follows that
	$\Fcl(\Fcl(R)) = \Fcl(R)$, i.e., idempotency holds.  
\end{proof}

Theorem~\ref{thm:polynomial_construction_Fclosure} implies that $\Fcl(R)$ is transitive and, thus,
$\tc(\Fcl(R)) = \Fcl(R)$. This together with $R \subseteq \Fcl(R)$ and monotonicity of $\tc$ 
implies that $\tc(R) \subseteq \Fcl(R)$. 
Moreover, Theorem~\ref{thm:polynomial_construction_Fclosure} also implies that $\Fcl(R)$ is $\support^+_R$-reflexive.
For later reference we provide 
\begin{observation} \label{obs:Fclosure_transitive_and_reflexive}
For all pairs $(R,F)$ of relations on $\pairs(X)$, $\Fcl(R)$ is $\support^+_R$-reflexive and transitive and 
it holds that $\tc(R) \subseteq \Fcl(R)$. 
\end{observation}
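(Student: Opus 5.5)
The statement collects three facts about $\Fcl(R)$, and I will read each off Theorem~\ref{thm:polynomial_construction_Fclosure} and the closure properties in Proposition~\ref{prop:closure_operator}. By that theorem, $\Fcl(R)$ equals the relation $S$ produced by first applying \axiom{R1} to $R$ and then applying \axiom{R2}--\axiom{R4} until none of them changes $S$.

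\emph{Reflexivity.} The first step adds $(p,p)$ for every $p\in\support^+_R$. Since rules only ever add pairs, these reflexive pairs remain in the final relation, so $\Fcl(R)$ is $\support^+_R$-reflexive.

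\emph{Transitivity.} The procedure stops only when \axiom{R2} no longer changes $S$. Hence $(p,q),(q,r)\in\Fcl(R)$ implies $(p,r)\in\Fcl(R)$, so $\Fcl(R)$ is transitive. Alternatively, this follows directly from the definition: every member of $\mathfrak{R}_{R,F}$ is transitive, and an intersection of transitive relations is transitive.

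\emph{Containment of $\tc(R)$.} By transitivity, $\tc(\Fcl(R))=\Fcl(R)$, since $\tc(T)=T$ for any transitive $T$. Extensivity (Proposition~\ref{prop:closure_operator}(i)) gives $R\subseteq\Fcl(R)$. Because $\tc$ is a closure operator and hence monotone, $\tc(R)\subseteq\tc(\Fcl(R))=\Fcl(R)$. Equivalently, $\tc(R)$ is the inclusion-minimal transitive relation containing $R$, and $\Fcl(R)$ is one such relation.

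No step is a real obstacle, since everything is immediate from the previously established results. The only point to check is that \axiom{R1} is applied with respect to $\support^+_R$ rather than the support of the current $S$. This is harmless because Theorem~\ref{thm:polynomial_construction_Fclosure} already gives $\support^+_R=\support_{\Fcl(R)}$.
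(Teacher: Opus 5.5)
Your proposal is correct and follows the paper's argument: transitivity and $\support^+_R$-reflexivity are read off Theorem~\ref{thm:polynomial_construction_Fclosure}, and $\tc(R)\subseteq\Fcl(R)$ follows from $R\subseteq\Fcl(R)$, monotonicity of $\tc$, and $\tc(\Fcl(R))=\Fcl(R)$. Your alternative justification via the intersection over $\mathfrak{R}_{R,F}$ is also valid, since every member of that family is transitive and $\support^+_R$-reflexive.
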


The closure $\Fcl(R)$ also provides information about every DAG that
\AF-realizes $(R,\FR)$. More precisely, the order relation
$\lca_G(ab) \preceq_G \lca_G(xy)$ holds not only for the constraints
$(ab,xy) \in R$, but for all constraints contained in $\Fcl(R)$. Thus,
$\Fcl(R)$ indeed captures LCA-constraints that are implied by the required
and forbidden constraints in $R$ and $F$, respectively.

\begin{lemma} \label{lemma:Fclosure_preceq}
Let $(R,F)$ be a pair of relations on $\pairs(X)$ and let $G$ be a DAG on $X$ that \AF-realizes
$(R,\FR)$. Then, for all $a,b,x,y \in X$, it holds that
\[
(ab,xy) \in \Fcl(R) \implies \lca_G(ab) \preceq_G \lca_G(xy). 
\]
In particular, it holds that $\Fcl(R)\subseteq \rel_G$.
\end{lemma}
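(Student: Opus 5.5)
The plan is to prove the stronger statement that the relation $\rel_G$, restricted appropriately, is itself a member of $\mathfrak{R}_{R,F}$, or more economically, to argue by induction along the rule-application sequence of Theorem~\ref{thm:polynomial_construction_Fclosure}. Concretely, I fix $G$ that \AF-realizes $(R,\FR)$ and consider a sequence $R=S_0,S_1,\dots,S_n=\Fcl(R)$, where $S_1$ arises from $S_0$ by \axiom{R1} and each subsequent $S_{i+1}$ arises from $S_i$ by one application of \axiom{R2}, \axiom{R3} or \axiom{R4}. I show by induction on $i$ that $S_i\subseteq \rel_G$. Throughout I use that $\support_{S_i}=\support^+_R$ (Theorem~\ref{thm:polynomial_construction_Fclosure}), so that every pair occurring in some $S_i$ has a well-defined LCA in $G$, since $G$ realizes $R$.

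For the base case, every $(ab,xy)\in R$ satisfies $\lca_G(ab)\preceq_G\lca_G(xy)$ by \axiom{I1} or \axiom{I2}, so $S_0\subseteq\rel_G$. For \axiom{R1}, the added pairs $(p,p)$ with $p\in\support^+_R$ lie in $\rel_G$ by reflexivity of $\preceq_G$. For \axiom{R2}, transitivity of $\preceq_G$ suffices. For \axiom{R3}, given $ab\in\support_{S_i}=\support^+_R$ and $(ac,xy),(bd,xy)\in S_i\subseteq\rel_G$, the vertex $w=\lca_G(xy)$ is an ancestor of $\lca_G(ac)$ and hence of $a$, and likewise of $b$. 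So $w$ is a common ancestor of $a$ and $b$. Because $\lca_G(ab)$ is well-defined, it is the unique $\preceq_G$-minimal common ancestor. Every common ancestor lies above some minimal one in a finite DAG, so $\lca_G(ab)\preceq_G w$. This is the content of \cite[L~11]{LAMSH:25}, which I may cite instead.

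The step I expect to need care is \axiom{R4}. Suppose $(p,q)\in F\cap S_i$ and $(q,p)$ is added. Since $(p,q)\in S_i$, we have $p,q\in\support_{S_i}=\support^+_R$, so $(p,q)\in\FR$ and both LCAs are well-defined. By induction, $\lca_G(p)\preceq_G\lca_G(q)$. Condition \axiom{F} for $\FR$ gives $\lca_G(p)\nprec_G\lca_G(q)$, so the two LCAs coincide and $(q,p)\in\rel_G$. The key point here is the support identity, which ensures that the forbidden constraint really belongs to the restriction $\FR$. This yields $\Fcl(R)=S_n\subseteq\rel_G$, which is both the implication and the final inclusion.
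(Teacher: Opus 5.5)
Your proposal is correct and follows essentially the same route as the paper: induction along the rule-application sequence $R=S_0,S_1,\dots,S_n=\Fcl(R)$. The only genuinely new step is \axiom{R4}, where pairs occurring in $S_i$ lie in $\support^+_R$, so the forbidden constraint belongs to $\FR$ and \axiom{F} forces the two LCAs to coincide. The paper cites earlier work for \axiom{R2} and \axiom{R3} where you argue them directly. Also, $\support_{S_0}=\support_R$ may be a proper subset of $\support^+_R$, but the inclusion $\support_{S_i}\subseteq\support^+_R$ is all you actually use.
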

\begin{proof}
We follow here some of the ideas used in the proof of Lemma~20
in~\cite{LAMSH:25}. Suppose that
$(R,F)$ is a pair of relations on $\pairs(X)$ and that $G$ is a DAG on $X$ that $\AF$-realizes
$(R,\FR)$. Let $R = S_0, S_1, \dots , S_n = \Fcl(R)$ be a sequence of relations, where $S_{i+1}$ is
obtained from $S_i$ by applying one of the rules \axiom{R1}, \axiom{R2}, \axiom{R3}, or \axiom{R4}
for all $i \in \{0, \dots, n-1\}$ (w.r.t.\ the two relations $S_i$ and $F$ as specified in
Definition~\ref{def:rules}). Such a sequence exists by
Theorem~\ref{thm:polynomial_construction_Fclosure}. We now prove that, for all $i \in \{0, \dots
,n\}$, it holds that 
\begin{align} \label{imp:Fclosure_preceq}
(ab,xy) \in S_i \implies \lca_G(ab) \preceq_G \lca_G(xy).
\end{align}
For $S_0 = R$, let $(ab,xy) \in S_0$. Since $G$ \AF-realizes $(R,\FR)$ and $(ab,xy) \in R$, we can
conclude that $\lca_G(ab)$ and $\lca_G(xy)$ are well-defined. Thus, one of the axioms \axiom{I1} and
\axiom{I2} must be satisfied, and we obtain $\lca_G(ab) \prec_G \lca_G(xy)$ and $\lca_G(ab) =
\lca_G(xy)$, respectively. In both case, it holds that $\lca_G(ab) \preceq_G \lca_G(xy)$ as desired.
To obtain $S_1$, we apply \axiom{R1} to $S_0$, i.e., $S_1 = S_0 \cup \{(p,p) \mid p \in
\support^+_{R }\}$. Clearly, all elements $(p,q)\in S_0\cap S_1$ satisfy $\lca_G(p)\preceq_G
\lca_G(q)$. Moreover, all $(p,q)\in S_1 \setminus S_0$ must satisfy $p=q$ and trivially $\lca_G(p) =
\lca_G(q)$ holds. Thus, Eq.~\ref{imp:Fclosure_preceq} is satisfied for $S_0$ and $S_1$.

Suppose that Eq.~\ref{imp:Fclosure_preceq} holds for $S_i$ for some fixed $i$ with $1 \leq i \leq
n-1$. By construction, $S_{i+1} = S_i \cup \{(p,q)\}$ for some $p,q \in \support^+_R$. Since
Eq.~\ref{imp:Fclosure_preceq} is correct for $S_i$, it suffices to show that
Eq.~\ref{imp:Fclosure_preceq} holds for $(p,q) \in S_{i+1}$ and we distinguish three cases for the
rules that were used to add $(p,q)$. In case of \axiom{R2} and \axiom{R3}, the correctness follows
by the same arguments used in the proof of Lemma~20 in~\cite{LAMSH:25}. Suppose now that rule
\axiom{R4} was applied to obtain $S_{i+1} = S_i \cup \{(p,q)\}$. Hence, $(q,p) \in F \cap S_i$
holds. Since $(q,p) \in S_i$, Eq.~\ref{imp:Fclosure_preceq} implies that $\lca_G(q) \preceq_G
\lca_G(p)$. Moreover, since $(q,p) \in S_i$, it follows by the same arguments as in the proof of
Theorem~\ref{thm:polynomial_construction_Fclosure} that $q,p \in \support^+_R$. This and $(q,p) \in
F$ implies that $(q,p) \in \FR$. Hence, $\lca_G(q) \nprec_G \lca_G(p)$, as $G$ \AF-realizes
$(R,\FR)$. Taking the latter two arguments together, $\lca_G(q) = \lca_G(p)$ must hold. Therefore,
Eq.~\ref{imp:Fclosure_preceq} is satisfied for $(p,q) \in S_{i+1}$. Since $S_{i} = S_{i+1}\setminus
(p,q)$ and since, by induction assumption, Eq.~\ref{imp:Fclosure_preceq} holds for $S_i$ it follows
that Eq.~\ref{imp:Fclosure_preceq} is satisfied for all elements in $S_{i+1}$.

In summary, Eq.~\ref{imp:Fclosure_preceq} holds for all elements in $S_i$ with $i \in \{0, \dots ,
n\}$, and, thus, in particular for all elements in $S_n = \Fcl(R)$. The latter arguments together
with the definition of $\rel_G$ imply $\Fcl(R)\subseteq \rel_G$.
\end{proof}

\subsection{Characterization of \AF-Realizability}
\label{subs:charAFreal}

We are now ready to introduce two conditions, \axiom{Y1} and \axiom{Y2}, that
will characterize \AF-realizable pairs of relations. Their necessity follows
readily from the preceding results; see Lemma~\ref{lemma:AF_realiz_implies_Y1_Y2}.
To prove sufficiency, we construct a specific DAG and show that it
\AF-realizes $(R,F)$ whenever \axiom{Y1} and \axiom{Y2} are satisfied. We
begin by defining these two conditions.

\begin{definition}[Condition \axiom{Y1} and \axiom{Y2}]
For a pair $(R,F)$ of relations on $\pairs(X)$, we define the 
following two conditions: 
\begin{description}
\item[\axiom{Y1}]  For all $a,b,x\in X$: $ab\neq xx$ implies 
$(ab, xx)\notin \Fcl(R)$.
\end{description}
\begin{description}
\item[\axiom{Y2}]  For all $a,b,x,y\in X$: 
$(ab, xy)\in R$ and  $(xy, ab)\not\in \tc(R)$ implies $(xy,ab)\notin \Fcl(R)$. 
\end{description}
\end{definition}

Note that \axiom{Y1} and \axiom{Y2} form a natural generalization of Conditions \axiom{X1} and
\axiom{X2}, which characterize realizable relations. \axiom{Y1} differs from \axiom{X1} by the
condition $(ab, xx)\notin \Fcl(R)$ instead of $(ab, xx)\notin R$. Moreover, \axiom{Y2} differs from
\axiom{X2} in the chosen closures $\Fcl(R)$ and $\cl(R)$. In particular, all these conditions
coincide whenever $F = \emptyset$, as shown next. 

\begin{lemma}\label{lem:XY-eq-emptyF}
For a pair $(R,\emptyset)$ of relations on $\pairs(X)$ it holds that:
\begin{enumerate}
    \item[(1)] $(R,\emptyset)$ satisfies \axiom{Y1} if and only if $R$ satisfies \axiom{X1}.
    \item[(2)] $(R,\emptyset)$ satisfies \axiom{Y2} if and only if $R$ satisfies \axiom{X2}.
\end{enumerate}
\end{lemma}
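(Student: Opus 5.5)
The whole proof rests on one identity: $\cl_{\emptyset}(R)=\cl(R)$, which is Observation~\ref{obs:cl_emptyset=cl}. I will first justify it directly from the definitions. If $F=\emptyset$, then every relation is trivially $F$-csym, because the premise $(p,q)\in F\cap R'$ never holds. Hence $\mathfrak{R}_{R,\emptyset}=\mathfrak{R}_R$, and the two intersections coincide. With this in hand, Statement~(2) is immediate: \axiom{Y2} for $(R,\emptyset)$ and \axiom{X2} for $R$ are literally the same condition once $\Fcl(R)$ is replaced by $\cl(R)$.

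For Statement~(1), the direction ``\axiom{Y1} implies \axiom{X1}'' follows from extensivity, Proposition~\ref{prop:closure_operator}(i). We have $R\subseteq\cl_\emptyset(R)$, so $(ab,xx)\notin\cl_\emptyset(R)$ gives $(ab,xx)\notin R$.

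The converse, ``\axiom{X1} implies \axiom{Y1}'', is the only real content. Assume \axiom{X1} holds, i.e.\ no $(ab,xx)$ with $ab\neq xx$ lies in $R$. I will show that the closure cannot create such a pair either. The plan is to exhibit a single relation in $\mathfrak{R}_{R,\emptyset}$ that avoids every pair $(ab,xx)$ with $ab\neq xx$. Since $\cl_\emptyset(R)$ is contained in every member of $\mathfrak{R}_{R,\emptyset}$, this suffices.

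I would try to take
\[
R^\ast\coloneqq \cl(R)\setminus\{(ab,xx)\mid ab\neq xx\}.
\]
I then need to check the defining properties of $\mathfrak{R}_{R,\emptyset}$ for $R^\ast$:
\begin{itemize}
\item $R\subseteq R^\ast$ holds by \axiom{X1}.
\item Reflexivity holds because only non-diagonal pairs are removed.
\item Transitivity: if $(p,q),(q,xx)\in R^\ast$, then $q=xx$, so $(p,xx)=(p,q)\in R^\ast$ and nothing is lost. Any other composition lands in $\cl(R)$ and has a non-singleton right-hand side, so it is not removed.
\item Cross-consistency: \axiom{R3} with target $xx$ requires $(ac,xx),(bd,xx)\in R^\ast$. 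This forces $ac=xx=bd$, hence $ab=xx$, and the resulting pair is reflexive, so again nothing is lost.
\end{itemize}
So $R^\ast\in\mathfrak{R}_{R,\emptyset}$, which gives $\cl_\emptyset(R)\subseteq R^\ast$ and hence \axiom{Y1}.

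I expect the cross-consistency check to be the main place to be careful, in particular the degenerate cases where $a=b$ or $c,d$ coincide with $x$. The argument handles these uniformly: any $(p,xx)\in R^\ast$ forces $p=xx$.

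An alternative, if one prefers, is to invoke Theorem~\ref{thm:char}. \axiom{X1} is necessary for realizability, but realizability needs \axiom{X2} as well, so that route would not cover the general case. I therefore use the direct $R^\ast$ argument.
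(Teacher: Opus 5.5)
Your proof is correct, and its overall structure matches the paper's. Both arguments rest on $\cl_\emptyset(R)=\cl(R)$ (Observation~\ref{obs:cl_emptyset=cl}), after which Statement~(2) is a literal identity of conditions.

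The difference is in the nontrivial direction of Statement~(1). The paper does not prove it; it cites \cite[L~23]{LAMSH:25} for the fact that $R$ satisfies \axiom{X1} if and only if $\cl(R)$ does. You prove it directly. You show that $R^\ast=\cl(R)\setminus\{(ab,xx)\mid ab\neq xx\}$ still lies in $\mathfrak{R}_{R}$, so the intersection definition gives $\cl(R)\subseteq R^\ast$. This in fact forces $R^\ast=\cl(R)$, so nothing is ever removed.

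Your checks of reflexivity, transitivity and cross-consistency are sound. One step is left implicit in the cross-consistency check: $\support_{R^\ast}\subseteq\support_{\cl(R)}$. You need this so that the premise of \axiom{R3} for $R^\ast$ already yields $(ab,xy)\in\cl(R)$ before you examine the singleton target.

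What your route buys is self-containment. It uses only the intersection definition and the closure properties of $\cl(R)$, with no appeal to the external lemma. The paper's version is shorter but delegates the entire content of Statement~(1) to \cite{LAMSH:25}.
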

\begin{proof}
	Consider a pair $(R,\emptyset)$ of relations on $\pairs(X)$. As shown in \cite[L~23]{LAMSH:25},
	$R$ satisfies \axiom{X1} if and only if $\cl(R)$ satisfies \axiom{X1}. This combined with
	Observation~\ref{obs:cl_emptyset=cl} implies that $R$ satisfies \axiom{X1} if and only if
	$\cl_{\emptyset}(R)$ satisfies \axiom{X1} or, equivalently, if $(R,\emptyset)$ satisfies
	\axiom{Y1}. Hence, Statement (1) holds. Statement (2) is an immediate consequence of
	Observation~\ref{obs:cl_emptyset=cl} and the definition of \axiom{Y2} and \axiom{X2}.
\end{proof}

To provide a characterization of \AF-realizable relations $(R,F)$, we first show that those
relations satisfy \axiom{Y1} and \axiom{Y2} which generalizes \cite[L~24]{LAMSH:25}.

\begin{lemma} \label{lemma:AF_realiz_implies_Y1_Y2}
	If a pair $(R,F)$ of relations on $\pairs(X)$ is \AF-realizable, then $(R,F)$ satisfies \axiom{Y1}
	and \axiom{Y2}.
\end{lemma}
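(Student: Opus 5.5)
Let $G$ be a DAG that \AF-realizes $(R,F)$. By Proposition~\ref{prop:RF_real_iff_RFR_real}(1), $G$ also \AF-realizes $(R,\FR)$. Hence Lemma~\ref{lemma:Fclosure_preceq} applies and gives $\Fcl(R)\subseteq \rel_G$. This means that for every $(p,q)\in\Fcl(R)$ both $\lca_G(p)$ and $\lca_G(q)$ are well-defined and $\lca_G(p)\preceq_G\lca_G(q)$. The plan is to derive \axiom{Y1} and \axiom{Y2} from this inclusion by contradiction, using the realization conditions \axiom{I1} and \axiom{I2} for $R$.

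\emph{Proof of \axiom{Y1}.} Suppose $(ab,xx)\in\Fcl(R)$ with $ab\neq xx$. Since $x$ is a leaf, $\lca_G(xx)=x$. Hence $\lca_G(ab)\preceq_G x$, and since $x$ has no proper descendants, $\lca_G(ab)=x$. Now $x$ is a common ancestor of $a$ and $b$, and as a leaf its only descendant is itself. Therefore $a=b=x$, so $ab=xx$, which is a contradiction. The one point to check is that $\lca_G(xx)=x$ is well-defined; this holds because $xx\in\support^+_R$ and $G$ realizes $R$, or directly since $x$ is the unique minimal ancestor of $x$.

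\emph{Proof of \axiom{Y2}.} Let $(ab,xy)\in R$ with $(xy,ab)\notin\tc(R)$. By \axiom{I1}, $\lca_G(ab)\prec_G\lca_G(xy)$. If also $(xy,ab)\in\Fcl(R)$, then $\Fcl(R)\subseteq\rel_G$ gives $\lca_G(xy)\preceq_G\lca_G(ab)$. Together with the strict relation this produces a directed cycle, or equivalently violates antisymmetry of $\preceq_G$ in the DAG $G$, which is a contradiction. Hence $(xy,ab)\notin\Fcl(R)$.

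The main obstacle has already been handled by Lemma~\ref{lemma:Fclosure_preceq}: it guarantees that the extra inferences made by rule \axiom{R4} remain sound in every realization. This is exactly why the argument first passes to $(R,\FR)$ via Proposition~\ref{prop:RF_real_iff_RFR_real}. After that, both conditions follow by the short arguments above, mirroring \cite[L~24]{LAMSH:25}.
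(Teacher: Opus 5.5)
Your proposal is correct and follows essentially the same route as the paper. Both arguments pass to $(R,\FR)$ via Proposition~\ref{prop:RF_real_iff_RFR_real} and then use Lemma~\ref{lemma:Fclosure_preceq}: for \axiom{Y1} you force $\lca_G(ab)=x$ via the leaf, and for \axiom{Y2} you contradict the strict relation $\lca_G(ab)\prec_G\lca_G(xy)$ given by \axiom{I1}.
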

\begin{proof} 
	Let $(R,F)$ be a pair of relations on $\pairs(X)$ which is \AF-realized by a DAG $G$ on $X$. By
	Proposition~\ref{prop:RF_real_iff_RFR_real}, $G$ \AF-realizes $(R,\FR)$. Assume, for contradiction,
	that \axiom{Y1} is not satisfied for $(R,F)$. Hence, there exists $a,b,x \in X$ with $ab \neq xx$
	such that $(ab, xx) \in \Fcl(R)$. This together with Lemma~\ref{lemma:Fclosure_preceq} implies
	$\lca_G(ab) \preceq_G \lca_G(xx)$. Since $x$ is a leaf, $\lca_G(ab) = \lca_G(xx) = x$ must hold.
	Hence, $a = b = x$; a contradiction to $ab \neq xx$. 

	To show that \axiom{Y2} is satisfied, suppose there are $a,b,x,y\in X$ such that $(ab, xy)\in R$ and
	$(xy, ab)\not\in \tc(R)$. This and the fact that $G$ realizes $R$ together with \axiom{I1} implies
	that $\lca_G(ab) \prec_G \lca_G(xy)$. Contraposition of Lemma~\ref{lemma:Fclosure_preceq} thus
	implies that $(xy, ab) \notin \Fcl(R)$. Hence, \axiom{Y2} holds.
\end{proof}

To prove that \axiom{Y1} and \axiom{Y2} in fact characterize \AF-realizable relations $(R,F)$, we
construct the \emph{canonical DAG} $\GG_{R,F}$ based on properties of $\Fcl(R)$. 
The construction has a simple interpretation: elements $ab$ and $cd$ that occur in 
symmetric pairs $(ab,cd)$ and $(cd,ab)$ in
$\Fcl(R)$ must have the same LCA and are therefore represented by one vertex; the remaining order
between these elementsdetermines the arcs of the canonical DAG.
As we shall see in
Theorem~\ref{thm:characterization_AF_realized}, if $(R,F)$ is \AF-realizable, then a slightly
modified version of $\GG_{R,F}$ \AF-realizes $(R,F)$. The idea of a canonical DAG derived from
properties of an underlying closure is not new, and we largely follow the approach
of~\cite{LAMSH:25}.

For the definition of the canonical DAG associated with $(R,F)$, we use a standard procedure that
associates a poset with a reflexive and transitive relation (also known as a \emph{preorder} or
\emph{quasi-order}) by identifying precisely those elements that violate anti-symmetry. 
By Observation~\ref{obs:Fclosure_transitive_and_reflexive}, $\Fcl(R)$ is a $\support_R^+$-reflexive and transitive relation on $\pairs(X)$. 
Since $\support_{\Fcl(R)} = \support^+_R$ by Theorem~\ref{thm:polynomial_construction_Fclosure}, 
we can consider $\Fcl(R)$ as a relation on $\support^+_R$. 
In this case, $\Fcl(R)$ is reflexive and transitive and, thus, a preorder. 
To associate now a poset with this initial preorder $\Fcl(R)$, we put two
elements $p$ and $q$ in relation $\sim_{\Fcl(R)}$ whenever both $(p,q)\in \Fcl(R)$ and $(q,p)\in
\Fcl(R)$ hold. This, in turn yields an equivalence relation $\sim_{\Fcl(R)}$ whose equivalence
classes will correspond to the vertices of $\GG_{R,F}$. We then define a partial order
$\leq_{\Fcl(R)}$ on these equivalence classes which is used to define the arcs of $\GG_{R,F}$.

\begin{definition}[{The quotient poset $(Q,\le_{\Fcl(R)})$ of $\Fcl(R)$}]
	Let $(R,F)$ be a pair of relations on $\pairs(X)$. We define the equivalence relation
	$\sim_{\Fcl(R)}$ on $\support^+_R$ by putting, for all $p,q \in \support^+_R$, \[p \sim_{\Fcl(R)}
	q \iff (p,q) \in \Fcl(R) \text{ and } (q,p) \in \Fcl(R). \] Let $[p]$ denote the equivalence class
	of $\sim_{\Fcl(R)}$ that contains $p\in\support_R^+$, and let $Q$ denote the set of all such
	equivalence classes. Define the partial order $\leq_{\Fcl(R)}$ on $Q$ by putting, for all classes
	$[p]$ and $[q]$ in $Q$, \[[p] \leq_{\Fcl(R)} [q] \iff (p,q) \in \Fcl(R).\] We refer to the poset
	$(Q,\le_{\Fcl(R)})$ as the \emph{quotient poset of $\Fcl(R)$}.
\end{definition}

The well-definedness and the facts that $\sim_{\Fcl(R)}$ is an equivalence relation and that
$(Q,\le_{\Fcl(R)})$ is a poset follows from the standard ``quotient'' construction that turns a
preorder into a partial order; see \cite[Prop~5.2.4]{Schroder:03} for a full proof. Based on the
equivalence relation $\sim_{\Fcl(R)}$ and partial order $\leq_{\Fcl(R)}$, we define the canonical
DAG $\GG_{R,F}$, similar to the definition of a canonical DAG given in \cite[Def~26]{LAMSH:25}.

\begin{definition}[Canonical DAG]\label{def:canonDAGN}
Let $(R,F)$ be a pair of relations on $\pairs(X)$ and let $(Q,\leq_{\Fcl(R)})$ be the quotient poset
of $\Fcl(R)$. The \emph{canonical DAG} $\GG_{R,F}$ is defined as the DAG obtained from the Hasse
diagram $\Hasse(Q,\leq_{\Fcl(R)})$ by relabeling all vertices $[aa]\in Q$ with $a$.
\end{definition}

Illustrative examples of the canonical DAG are provided in
Figures~\ref{fig:working-ex1} and~\ref{fig:GrealR-notRF.jpg}; see also
Examples~\ref{exmpl:F=FR} and~\ref{exmpl:Fcsym}.

We emphasize that if $(R,F)$ does not satisfy \axiom{Y1}, then $\cl_F(R)$ may contain constraints of
the form $(ab,xx)$. In this case, it is even possible that, for example, $[aa]=[bb]\in Q$. Hence, in
the construction of $\GG_{R,F}$, the vertex $[aa]$ would have to be relabeled by $a$, while the same
vertex, viewed as $[bb]$, would have to be relabeled by $b$. Thus, the construction of $\GG_{R,F}$
would no longer be well-defined. One can circumvent this ambiguity by choosing, for each such class,
a single representative leaf by which it is replaced (e.g. by imposing a linear order on $X$ and
choosing the representative as the smallest element). Nevertheless, we will only explicitly
construct $\GG_{R,F}$ for pairs $(R,F)$ of relations on $\pairs(X)$ that satisfy \axiom{Y1}. Under
this assumption, the ambiguity described above cannot occur. Consequently, $\GG_{R,F}$ is a
well-defined DAG and, as we shall see in Proposition~\ref{prop:properties_of_canonical_DAG}, has
leaf set $X$.

We further note in passing that $\GG_{R,\emptyset}$ coincides with the canonical DAG $\GG_R$ as
defined in~\cite{LAMSH:25}. Furthermore, observe that the canonical DAG $\GG_{R,F}$ is only based on
properties of the relation $\Fcl(R)$ and similarly, the construction of $\GG_{\Fcl(R),F}$ depends
only on the relation $\Fcl(\Fcl(R))$. By Proposition~\ref{prop:closure_operator}, $\Fcl(\Fcl(R)) =
\Fcl(R)$ and we obtain
\begin{observation}\label{obs:canDAG_eq_canDAG_of_closure}
For all pairs $(R,F)$ of relations on $\pairs(X)$, we have $\GG_{R,F} = \GG_{\Fcl(R),F}$.    
\end{observation}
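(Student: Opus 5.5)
The statement says that for every pair $(R,F)$ the canonical DAG satisfies $\GG_{R,F} = \GG_{\Fcl(R),F}$. My plan is to show that both canonical DAGs are built from one and the same quotient poset. I would then check that the relabelling step treats both identically.

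\emph{Step 1: same relation.} By Definition~\ref{def:canonDAGN}, $\GG_{R,F}$ depends on $R$ only through:
\begin{itemize}
\item the relation $\Fcl(R)$;
\item its ground set $\support^+_R$, on which $\sim_{\Fcl(R)}$ and the quotient poset $(Q,\le_{\Fcl(R)})$ are defined;
\item the relabelling $[aa]\mapsto a$.
\end{itemize}
Likewise, $\GG_{\Fcl(R),F}$ is built from $\Fcl(\Fcl(R))$ and the ground set $\support^+_{\Fcl(R)}$. Idempotency in Proposition~\ref{prop:closure_operator} gives $\Fcl(\Fcl(R))=\Fcl(R)$, so both constructions start from the same relation.

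\emph{Step 2: same ground set.} This is the one point needing a little care, since the quotient is formed on the support and not on all of $\pairs(X)$. Theorem~\ref{thm:polynomial_construction_Fclosure} gives $\support^+_R = \support_{\Fcl(R)}$. That set already contains every $xx$, because $xx\in\support^+_R$ by definition. Hence $\support^+_{\Fcl(R)} = \support_{\Fcl(R)}\cup\{xx\mid x\in X\} = \support^+_R$.

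\emph{Step 3: same quotient poset and same DAG.} Since the relation and the ground set agree, the equivalences $\sim_{\Fcl(R)}$ and $\sim_{\Fcl(\Fcl(R))}$ coincide. So do their classes, the set $Q$, and the partial orders, because each is defined purely by membership of pairs in the same relation. Therefore the Hasse diagrams are identical. The relabelling of classes $[aa]$ by $a$ is applied to identical posets, so it yields identical results. This holds whether the relabelling is unambiguous (under \axiom{Y1}) or resolved by a fixed choice of representative, since both sides use the same rule. Thus $\GG_{R,F}=\GG_{\Fcl(R),F}$.

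I expect no real obstacle. The only subtle point is Step 2, checking that the ground sets agree, and Theorem~\ref{thm:polynomial_construction_Fclosure} handles it.
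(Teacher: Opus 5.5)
Your proposal is correct and follows the paper's argument: the canonical DAG depends only on the closure, and idempotency of $\Fcl$ (Proposition~\ref{prop:closure_operator}) gives $\Fcl(\Fcl(R))=\Fcl(R)$. Your Step~2 uses Theorem~\ref{thm:polynomial_construction_Fclosure} to check that $\support^+_{\Fcl(R)}=\support^+_R$, so both quotients are formed on the same ground set; this makes explicit a detail the paper leaves implicit.
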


\begin{figure}
    \centering
    \includegraphics[width=0.6\textwidth]{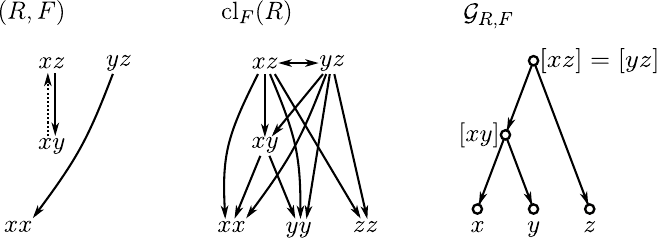}
    \caption{On the left, we give a graphical representation of the relations 
    $R= \{(xy,xz), (xx,yz)\}$ and $F=\{(xz,xy)\}$ on $\pairs(X)$ with $X=\{x,y,z\}$. Here we 
    draw a solid (resp.\ dashed) arc $p\to q$ precisely if $(q,p) \in R$
    (resp.\ $(q,p) \in F$).  In addition, the graphical representation 
    of $\Fcl(R)$ is provided where arcs $(p,p)$  are omitted for all $p\in \support_{\Fcl(R)}$. Furthermore,
    the canonical DAG $\GG_{R,F}$ which \AF-realizes $(R,F)$
    is shown.
    }
    \label{fig:working-ex1}
\end{figure}

We illustrate the definition using the following two examples.
The first example shows that the original three rules may already suffice, whereas the second
demonstrates why the additional $F$-conditional-symmetry rule is necessary.
\begin{example}[\AF-realization]\label{exmpl:F=FR}
Consider the relation $R= \{(xy,xz), (xx,yz)\}$ and $F=\{(xz,xy)\}$ on $\pairs(X)$ with
$X=\{x,y,z\}$, which are presented in Figure~\ref{fig:working-ex1}. In this example, we have
$F=\FR$. It is straightforward to verify that \axiom{R4} is not applied in deriving $\Fcl(R)$. In
other words, only \axiom{R1}, \axiom{R2}, and \axiom{R3} are used. Since these rules are independent
of $F$, it follows that $\Fcl(R) = \cl_{\emptyset}(R)$, i.e., $\GG_{R,\emptyset} = \GG_{R,F}$. 
In this example, $\GG_{R,F}$ \AF-realizes $(R,F)$.   
\end{example}

\begin{example}[Rule \axiom{R4} is indispensable]\label{exmpl:Fcsym}
In contrast to the relations as in Example~\ref{exmpl:F=FR}, the relations $R= \{(xy,xz), (yy,yz)\}$
and $F=\{(yz,xz)\}$ on $\pairs(X)$ with $X=\{x,y,z\}$ as illustrated in
Figure~\ref{fig:GrealR-notRF.jpg} are not \AF-realized by $\GG_{R,\emptyset}$. Here we have $F=\FR$.
In particular, applying \axiom{R1}, \axiom{R2}, and \axiom{R3} yields $\cl_{\emptyset}(R)$, for
which $(yz,xz)\in \cl_{\emptyset}(R)$ but $(xz,yz)\notin \cl_{\emptyset}(R)$. Consequently, the
$\sim_{\cl_{\emptyset}(R)}$-classes $[xz]$ and $[yz]$, which form the vertices of
$\GG_{R,\emptyset}$, are distinct. In particular, as can be seen in
Figure~\ref{fig:GrealR-notRF.jpg}, $[yz] = \lca_{\GG_{R,\emptyset}}(yz) \prec_{\GG_{R,\emptyset}}
\lca_{\GG_{R,\emptyset}}(xz) = [xz]$. Together with $(yz,xz)\in F$, this shows that
$\GG_{R,\emptyset}$ does not satisfy \axiom{F} and, thus, does not \AF-realize $(R,F)$. However,
applying \axiom{R1}--\axiom{R4} yields $(yz,xz), (xz,yz)\in \Fcl(R)$ and, thus, the
$\sim_{\Fcl(R)}$-classes $[xz]$ and $[yz]$ coincide, forming a single vertex in $\GG_{R,F}$. Hence,
we obtain $\lca_{\GG_{R,F}}(xz) = [xz] = [yz] = \lca_{\GG_{R,F}}(yz)$ and $\GG_{R,F}$ \AF-realizes
$(R,F)$. Therefore, in general, \axiom{R4} cannot be omitted in the construction of $\GG_{R,F}$.
\end{example}

\begin{figure}
    \centering
    \includegraphics[width=0.8\textwidth]{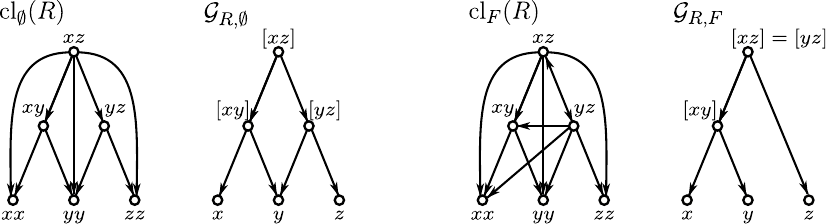}
    \caption{Shown is the graphical representation of $\cl_{\emptyset}(R)$
    and $\Fcl(R)$ for the relations $R = \{(xy,xz),(yy,yz)\}$ and $F=\{(yz,xz)\}$. 
    Here, we draw an arc $p\to q$ precisely if $(q,p) \in \cl$ but 
    omitted arcs of the form $(p,p)$, $\cl\in \{\cl_{\emptyset}(R),\Fcl(R)\}$.
    In addition, the canonical DAGs $\GG_{R,\emptyset}$
    and $\GG_{R,F}$ are shown. The DAG $\GG_{R,\emptyset}$
     \AF-realizes $(R,\emptyset)$
    but not $(R,F)$. Moreover, $\GG_{R,F}$ \AF-realizes $(R,F)$, 
    see Example~\ref{exmpl:Fcsym} for further details.} 
    \label{fig:GrealR-notRF.jpg}
\end{figure}

We emphasize that $\GG_{R,F}$ \AF-realizes $(R,F)$ if $F = \FR$ as it was the case in
Example~\ref{exmpl:F=FR} and \ref{exmpl:Fcsym}. In general, however, $F\neq \FR$ in which case a
slightly modified version (the $\FR$-extension) of $\GG_{R,F}$ must be used to obtain a DAG that
\AF-realizes $(R,F)$ whenever $(R,F)$ is \AF-realizable; see
Theorem~\ref{thm:characterization_AF_realized}.

Before giving one of the main results of this paper, we establish
some properties of the canonical DAG needed for its proof.

\begin{proposition}\label{prop:properties_of_canonical_DAG}
Let $(R,F)$ be a pair of relations on $\pairs(X)$ that satisfies \axiom{Y1} or that is
\AF-realizable. Then, the canonical DAG $\GG_{R,F}$ is a well-defined DAG on $X$ such that $[ab] =
\lca_{\GG_{R,F}}(ab)$ for all vertices $[ab]$ of $\GG_{R,F}$ with $a \neq b$ and $x =
\lca_{\GG_{R,F}}(xx)$ for all $x \in X$. In particular, $\GG_{R,F}$ is 2-lca-relevant, phylogenetic,
and \AF-realizes $(\Fcl(R),\FR)$. Moreover, the $\FR$-extension of $\GG_{R,F}$ is phylogenetic.
\end{proposition}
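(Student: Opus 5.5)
The plan is to reduce the statement to the corresponding result for required constraints only, namely the properties of the canonical DAG $\GG_R$ in \cite{LAMSH:25}. The proof there uses only that $\cl(R)$ is $\support^+_R$-reflexive, transitive and cross-consistent, satisfies \axiom{X1} and has support $\support^+_R$. All of these hold for $\Fcl(R)$ by Theorem~\ref{thm:polynomial_construction_Fclosure} and Observation~\ref{obs:Fclosure_transitive_and_reflexive}, together with \axiom{Y1}.

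\textbf{Reducing to \axiom{Y1}.} If $(R,F)$ is \AF-realizable, then Lemma~\ref{lemma:AF_realiz_implies_Y1_Y2} gives \axiom{Y1}. So I assume \axiom{Y1} throughout.

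\textbf{Well-definedness and leaves.} By \axiom{Y1}, no class $[aa]$ contains another element. Hence the relabelling of $[aa]$ by $a$ is unambiguous, and the classes $[aa]$ are pairwise distinct. Again by \axiom{Y1}, nothing lies strictly below $[xx]$, so each $[xx]$ is a leaf of the Hasse diagram.

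Conversely, let $[ab]$ with $a\neq b$. Rule \axiom{R3} applied to $ab\in\support_{\Fcl(R)}$, using $(aa,aa)$ and $(bb,bb)$ with $xy=ab$, is not quite the right tool. Instead I would show $(aa,ab),(bb,ab)\in\Fcl(R)$. Apply \axiom{R3} with target $(ab,ab)$: since $aa\in\support$ and $(ab,ab),(ab,ab)\in\Fcl(R)$ (taking $c=b$ and $d=a$ with $b$'s role), this yields $(aa,ab)$, and symmetrically $(bb,ab)$. This is the same trick as in \cite{LAMSH:25}. So $[aa]<[ab]$ strictly, and $[ab]$ is not a leaf. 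Therefore $\GG_{R,F}$ is a DAG on $X$.

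\textbf{LCAs.} In the Hasse diagram, $u\preceq v$ iff $u\leq_{\Fcl(R)} v$. Thus $[ab]$ is a common ancestor of $a$ and $b$. Let $[p]$ be any common ancestor of $a$ and $b$, with $p=xy$. Then $(aa,xy),(bb,xy)\in\Fcl(R)$. Cross-consistency, applied with $ab\in\support$ and $c=a$, $d=b$, gives $(ab,xy)\in\Fcl(R)$, i.e. $[ab]\le[p]$. So $[ab]$ is the unique least common ancestor, which settles the LCA claims. Since every vertex is some $[ab]$, $\GG_{R,F}$ is 2-lca-relevant.

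\textbf{Phylogenetic.} Suppose a vertex $v=[ab]$ with $a\ne b$ had out-degree $1$, with unique child $w$. Then $a$ and $b$ both lie below $w\prec v$, so $w$ is a common ancestor, contradicting $v=\lca(ab)$. Leaves have out-degree $0$. So no vertex has out-degree $1$, and $\GG_{R,F}$ is phylogenetic regardless of in-degree.

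The $\FR$-extension only adds new vertices $u,v$, each with two distinct children $x\neq y$ (Observation~\ref{obs:FR-extension}), and raises the in-degrees of leaves. Out-degrees of old vertices are unchanged, so the extension stays phylogenetic.

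\textbf{\AF-realizing $(\Fcl(R),\FR)$.} Note that $\support^+_{\Fcl(R)}=\support^+_R$ and $\Fcl(\Fcl(R))=\Fcl(R)$, so all relevant LCAs are well-defined. For $(p,q)\in\Fcl(R)$ we have $[p]\le[q]$.

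\begin{itemize}
\item \axiom{I2}: if $(q,p)\in\tc(\Fcl(R))=\Fcl(R)$, then $[p]=[q]$.
\item \axiom{I1}: if $(q,p)\notin\Fcl(R)$, then $[p]\neq[q]$, so the LCAs are strictly ordered.
\item \axiom{F}: take $(p,q)\in\FR$ with $\lca(p)\prec\lca(q)$. Then $(p,q)\in\Fcl(R)$, and $F$-csym gives $(q,p)\in\Fcl(R)$. Hence $[p]=[q]$, a contradiction.
\end{itemize}

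The main obstacle is the careful \axiom{R3} instantiations (showing $(aa,ab)\in\Fcl(R)$ and the LCA-minimality). I would mirror \cite[Prop.~27]{LAMSH:25} here, noting that \axiom{R4} plays no role in these steps.
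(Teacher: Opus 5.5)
Your proposal is correct and follows essentially the same route as the paper's proof: reduce to \axiom{Y1}, use cross-consistency both for $[aa]\le_{\Fcl(R)}[ab]$ and for LCA-minimality, use the quotient order to verify \axiom{I1}/\axiom{I2}, and use $F$-csym for \axiom{F}. The differences are cosmetic: you prove phylogeneticity directly via the out-degree-one argument instead of citing that 2-lca-relevant DAGs are phylogenetic, and your garbled \axiom{R3} instantiation for $(aa,ab)$ should read as follows: take the pair $aa$ with $c=d=b$, and use $(ab,ab)\in\Fcl(R)$ (from \axiom{R1}) twice.
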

\begin{proof}
	We closely follow the proof of \cite[Prop~28]{LAMSH:25}. It differs, in essence, only from the
	part showing that $\GG_{R,F}$ \AF-realizes $(\Fcl(R),\FR)$. Let $(R,F)$ be a pair of relations on
	$\pairs(X)$ that satisfies \axiom{Y1} or that is \AF-realizable. In the latter case,
	Lemma~\ref{lemma:AF_realiz_implies_Y1_Y2} implies that $(R,F)$ satisfies \axiom{Y1}. To simplify
	writing, let $\GG \coloneqq \GG_{R,F}$, $\sim \coloneqq \sim_{\Fcl(R)}$, $\leq \coloneqq
	\leq_{\Fcl(R)}$, and $\Fcl\coloneqq \Fcl(R)$. Moreover, let $Q$ be the set of all $\sim$-classes. 

	We start by showing that $\GG$ has leaf set $X$. By \axiom{Y1}, no class $[xx]\in Q$ can coincide
	with a class $[ab]\in Q$ for which $ab\neq xx$. Moreover, by \axiom{R1}, $[xx]\in Q$ for all $x\in
	X$. Hence, the two sets $Q_1=\{[xx]\colon x\in X\}$ and $Q_2 =\{[ab]\in Q\colon a\neq b\}$ form a
	bipartition of $Q$, i.e., $Q =Q_1\cup Q_2$ and $Q_1\cap Q_2 =\emptyset$. By the latter arguments,
	we have $x,y\in X$ and $x\neq y$ if and only if $[xx]\neq [yy]$ and $[xx],[yy]\in Q_1\subseteq Q$.
	Thus, there is a 1-to-1 correspondence between the vertices in $X$ and the classes in $Q_1$.
	Moreover, by \axiom{Y1}, $(ab, xx)\not\in \Fcl$ for all $a,b,x\in X$ with $ab \neq xx $. Hence,
	$[ab] \not \le [xx]$ for all $a,b,x\in X$ with $ab \neq xx$ and by construction, $\GG$ does not
	contain arcs $(v,v)$ for any $v\in V(\GG)$. Therefore, one never adds an arc of the form $x\to v$
	for any $x\in X$ and $v \in V(\GG)$ in $\GG$. In summary, the set $X$ forms a subset of leaves in
	$\GG$. It remains to show that all vertices $[xy]\in Q$ with $x\neq y$ are non-leaf vertices in
	$\GG$. To this end, observe that cross-consistency of $\Fcl$ ensures that $[xx]\leq [xy]$ for all
	classes $[xy]$ in $Q$. Hence, for all such classes $[xy]$ with $x\neq y$, the vertex $x$ satisfies
	$x\prec_{\GG} [xy]$. Thus, classes $[xy]$ of $Q$ with $x\neq y$ must be non-leaf vertices in
	$\GG$. Hence, $X$ is precisely the leaf set of $\GG$. 

	Now let $v\in V(\GG)$. Suppose that $v=[ab]$ for some class $[ab]\in Q$ with $a\neq b$. We show
	that $v$ is the unique least common ancestor of $a$ and $b$ in $\GG$. By the previous arguments,
	$[ab]$ is a common ancestor of $a$ and $b$. Now suppose that $w \in V(\GG)$ is a common ancestor
	of $a$ and $b$ with $w \neq [ab]$. Then $w$ must be in $Q_2$, i.e., $w=[xy]\in Q$ with $x\neq y$.
	Since $w$ is a common ancestor of $a$ and $b$, there are directed paths $[xy]\leadsto a$ and
	$[xy]\leadsto b$ in $\GG$. By construction of $\GG$ and since $\le$ is transitive (as $(Q,\le)$
	is a poset), we must have $[aa]\le [xy]$ and $[bb]\le [xy]$. By definition of $\le$, we
	obtain $(aa,xy) \in \Fcl$ and $(bb,xy) \in \Fcl$. Since $[ab]\in Q$, we have, by construction of
	$\sim$, $ab\in\support_{R}^+$. By Theorem~\ref{thm:polynomial_construction_Fclosure},
	$\support_{R}^+ =\support_{\Fcl}$ and, therefore, $ab\in\support_{\Fcl}$ holds. This together with
	the fact that $\Fcl$ is cross-consistent implies that $(ab,xy) \in \Fcl$. Again, by definition of
	$\le$, we have $[ab]\le [xy] =w$ and we obtain $[ab]\preceq_\GG w$ by definition of $\GG$. Since
	$w=[xy]$ was an arbitrary common ancestor of $a$ and $b$, $[ab] \preceq_\GG w$ holds for
	\emph{all} common ancestors $w$ of $a$ and $b$. Hence, $[ab]$ is the \emph{unique} least common
	ancestor of $a$ and $b$. Moreover, if $v=a$ for some $a\in X$, the unique least common ancestor
	$\lca_\GG(aa)$ is $a$, since $a\in X$ is a leaf. In summary, $\GG$ is a DAG in which
	$\lca_\GG(aa)=a$ for all $a\in X$ and $\lca_\GG(ab)=[ab]$ for all $[ab]\in Q$ with $a\neq b$. 

	By the latter arguments and since each vertex in $\GG$ is either of the form $[ab]\in Q_2$ or $x$
	for some $[xx]\in Q_1$, it follows that, for each vertex $v$ in $\GG$, there are $x,y\in X$ such
	the $v=\lca_\GG(xy)$. Therefore, $\GG$ is 2-lca-relevant. This observation allows us to apply
	\cite[L~3.10]{HL:24} to conclude that $\GG$ is phylogenetic. Moreover, by
	Observation~\ref{obs:FR-extension}, the $\FR$-extension $G$ of $\GG$ is based on $xy$-extensions
	for all $xy \in \support_F \setminus \support^+_R$ with $x\neq y$. In particular, any
	$xy$-extension yields two new vertices $u$ and $v$ that have out-degree two. It is now an easy
	task to verify that $G$ remains phylogenetic.

	We show now that $\GG$ realizes $\Fcl$. By definition, $\Fcl$ is transitive and thus,
	$\tc(\Fcl)=\Fcl$. Let $(ab,cd)\in \Fcl$. Suppose that $(cd,ab)\notin \tc(\Fcl)=\Fcl$. Hence,
	$[ab]\neq [cd]$ and $[ab]\le [cd]$. If $a\neq b$, the vertex $v = [ab]$ exists in $\GG$ and if
	$a=b$, the vertex $v=a\in X$ exists in $\GG$. Either way, $[ab]\neq [cd]$, $[ab]\leq [cd]$, and
	the construction of $\GG$ ensures that $\lca_{\GG}(ab)=v\prec_{\GG}[cd]=\lca_{\GG}(cd)$. In other
	words, \axiom{I1} holds for $\GG$ and $\Fcl$. If $(cd,ab)\in \tc(\Fcl)=\Fcl$, then $ab\sim cd$
	and, thus, $[ab]=[cd]$. This together with the latter arguments implies that $\lca_\GG(ab) =
	\lca_\GG(cd)$ holds. Thus, \axiom{I2} holds for $\GG$ and $\Fcl$. In summary, $\GG$ realizes
	$\Fcl$.

	It remains to argue that $\GG$ satisfies \axiom{F} w.r.t.\ $\FR$. To this end, suppose that
	$(ab,cd) \in \FR$. Note that $\sim$ is a relation on $\support^+_{R}$ and $ab,cd \in
	\support^+_{R}$. Thus, $\lca_\GG(ab)$ and $\lca_\GG(cd)$ are well-defined. Now assume for
	contradiction, that $\lca_\GG(ab) \prec_\GG \lca_\GG(cd)$. Hence, there exists a directed path
	$[cd] = w_0 \to w_1 \to \dots \to w_k = v$ in $\GG$, where we define $v\coloneqq a$ if $a = b$ and
	otherwise, $v\coloneqq [ab]$. Note that each vertex $w_i$ corresponds to a $\sim$-class $[p_i]$.
	In particular, for each arc $w_i \to w_{i+1}$, $0 \leq i \leq k-1$, it holds, by construction of
	$\GG$, that there are $xy\in [p_i] $ and $x'y'\in [p_{i+1}]$ such that $(x'y',xy) \in \Fcl$. This
	together with the fact that $\Fcl$ is transitive implies that $(ab,cd) \in \Fcl$. Since $\Fcl$ is
	$F$-csym and $(ab,cd) \in \FR \cap \Fcl$, it holds that $(cd,ab) \in \Fcl$. As shown in the
	previous paragraph, in this case $\lca_\GG(ab) = \lca_\GG(cd)$ holds; a contradiction to
	$\lca_\GG(ab) \prec_\GG \lca_\GG(cd)$. Hence, \axiom{F} is satisfied and $\GG$ \AF-realizes
	$(\Fcl, \FR)$.
\end{proof}

We are now in the position to provide one of the main results of this contribution, namely a
characterization of \AF-realizable pairs $(R,F)$. This result generalizes and extends
\cite[Thm~31]{LAMSH:25}. It links three complementary viewpoints: realizability itself, the two
closure conditions \axiom{Y1} and \axiom{Y2}, and an explicit realization obtained from the
canonical DAG and its $\FR$-extension.

\begin{theorem}\label{thm:characterization_AF_realized}
For a pair $(R,F)$ of relations on $\pairs(X)$ the following statements are equivalent: 
\begin{enumerate}[label={(\arabic*)}]
    \item $(R,F)$ is \AF-realizable. 
    \item $(R,F)$ satisfies \axiom{Y1} and \axiom{Y2}. 
    \item $(R,\FR)$ is \AF-realized by its canonical DAG $\GG_{R,F}$.
    \item $(R,F)$ is \AF-realized by the $\FR$-extension of its canonical DAG $\GG_{R,F}$.
    \item $(R,F)$ is \AF-realized by the phylogenetic network $N$ obtained from the $\FR$-extension of $\GG_{R,F}$ according to Lemma~\ref{lem:DAG2Network}.
    \item $(R,F)$ satisfies \axiom{Y2} and $\Fcl(R)$ is realizable.
\end{enumerate}
\end{theorem}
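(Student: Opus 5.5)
I will prove the cycle (1)$\Rightarrow$(2)$\Rightarrow$(3)$\Rightarrow$(4)$\Rightarrow$(5)$\Rightarrow$(1) and then attach (6) via (2)$\Leftrightarrow$(6).

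The implication (1)$\Rightarrow$(2) is exactly Lemma~\ref{lemma:AF_realiz_implies_Y1_Y2}. Both (4)$\Rightarrow$(5) and (5)$\Rightarrow$(1) are immediate. Lemma~\ref{lem:DAG2Network} preserves $\preceq$ on $V(G)$ and keeps every well-defined LCA, and it also preserves the fact that an LCA is \emph{not} well-defined, since $\LCA_G(xy)=\LCA_N(xy)$ whenever it is nonempty. Hence \axiom{I1}, \axiom{I2} and \axiom{F} transfer to $N$. Phylogeneticity of $N$ follows from Proposition~\ref{prop:properties_of_canonical_DAG} and Lemma~\ref{lem:DAG2Network}. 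The implication (3)$\Rightarrow$(4) is Proposition~\ref{prop:RF_real_iff_RFR_real}(2).

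The core step is (2)$\Rightarrow$(3). By \axiom{Y1} and Proposition~\ref{prop:properties_of_canonical_DAG}, $\GG\coloneqq\GG_{R,F}$ is a well-defined DAG on $X$ with $\lca_\GG(p)=[p]$ (or the leaf) for all $p\in\support_R^+$, and it \AF-realizes $(\Fcl(R),\FR)$. In particular \axiom{F} holds for $\FR$, and every pair in $\support_R^+=\support_{\Fcl(R)}$ has a well-defined LCA. It remains to check that $\GG$ realizes $R$ itself, because the dichotomy in \axiom{I1}/\axiom{I2} refers to $\tc(R)$ and not to $\Fcl(R)$. Take $(ab,xy)\in R\subseteq\Fcl(R)$, so that $[ab]\le[xy]$ and hence $\lca_\GG(ab)\preceq_\GG\lca_\GG(xy)$.
\begin{itemize}
\item If $(xy,ab)\in\tc(R)\subseteq\Fcl(R)$ (Observation~\ref{obs:Fclosure_transitive_and_reflexive}), then $[ab]=[xy]$ and the two LCAs coincide, so \axiom{I2} holds.
\item If $(xy,ab)\notin\tc(R)$, then \axiom{Y2} gives $(xy,ab)\notin\Fcl(R)$. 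Thus $[ab]\neq[xy]$, and the inequality is strict, so \axiom{I1} holds.
\end{itemize}
This is the one place where \axiom{Y2} is used. I expect it to be the main subtlety: the point is to recognise that realizing $\Fcl(R)$ is not automatically realizing $R$, and that \axiom{Y2} is exactly what bridges the two.

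For (6), the plan is as follows.

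For (2)$\Rightarrow$(6), the proof of Proposition~\ref{prop:properties_of_canonical_DAG} shows that $\GG$ realizes $\Fcl(R)$. It uses transitivity of $\Fcl(R)$ to get $\tc(\Fcl(R))=\Fcl(R)$, and well-definedness of the LCAs on $\support^+_{\Fcl(R)}=\support^+_R$. So $\Fcl(R)$ is realizable.

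For (6)$\Rightarrow$(2), it suffices to derive \axiom{Y1}. Since $\Fcl(R)$ is realizable, Theorem~\ref{thm:char} gives \axiom{X1} for $\Fcl(R)$, that is, $(ab,xx)\notin\Fcl(R)$ whenever $ab\neq xx$. This is precisely \axiom{Y1}.

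Together with the cycle, this closes all equivalences.
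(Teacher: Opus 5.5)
Your proof is correct and follows the paper's decomposition essentially step for step. The one deviation is in (6)$\Rightarrow$(2), where you read off \axiom{Y1} directly as \axiom{X1} for $\Fcl(R)$ via Theorem~\ref{thm:char}; this is a slightly shorter equivalent of the paper's detour through Lemma~\ref{lemma:AF_realiz_implies_Y1_Y2} for $(\Fcl(R),\emptyset)$ together with $\Fcl(R)\subseteq\cl_{\emptyset}(\Fcl(R))$.

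There is one small imprecision in (4)$\Rightarrow$(5). Lemma~\ref{lem:DAG2Network} does not preserve non-well-definedness when $\LCA_G(xy)=\emptyset$, because then $\rho$ becomes the unique LCA in $N$. You should add, as the paper does, that in the $\FR$-extension every $xy\in\support_F\setminus\support^+_R$ has $|\LCA_G(xy)|>1$ by Observation~\ref{obs:xy-extension1}(i), and every pair in $\support^+_R$ has a well-defined LCA.
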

\begin{proof}
	Let $(R,F)$ be a pair of relations on $\pairs(X)$. By Lemma~\ref{lemma:AF_realiz_implies_Y1_Y2},
	Statement~(1) implies Statement~(2). 

	Now suppose that Statement~(2) is satisfied. Let $\GG \coloneqq \GG_{R,F}$ be the canonical DAG of
	$(R,F)$. Since $(R,F)$ satisfies \axiom{Y1}, Proposition~\ref{prop:properties_of_canonical_DAG}
	implies that $\GG$ \AF-realizes $(\Fcl(R),\FR)$. To show that $\GG$ realizes $R$, we must verify
	\axiom{I1} and \axiom{I2}. To this end, suppose that $(ab,xy) \in R \subseteq \Fcl(R)$. Assume
	first that $(xy,ab) \notin \tc(R)$. Since $(R,F)$ satisfies \axiom{Y2}, $(xy,ab) \notin \Fcl(R)$
	holds. Since $(ab,xy) \in \Fcl(R)$ and $(xy,ab) \notin \Fcl(R) = \tc(\Fcl(R))$ and $\GG$ and
	$\Fcl(R)$ satisfy \axiom{I1}, $\lca_\GG(ab) \prec_\GG \lca_\GG(xy)$ holds. Therefore, $\GG$ and
	$R$ satisfy \axiom{I1}. Assume now that $(xy,ab) \in \tc(R)$. Since $\tc(R) \subseteq \Fcl(R)$ by
	Observation~\ref{obs:Fclosure_transitive_and_reflexive}, we have $(ab,xy), (xy,ab) \in \Fcl(R) = \tc(\Fcl(R))$.
	Since $\GG$ and $\Fcl(R)$ satisfy \axiom{I2}, $\lca_\GG(ab) = \lca_\GG(xy)$ holds. Hence,
	\axiom{I2} is satisfied, and $\GG$ realizes $R$. Moreover, since $\GG$ \AF-realizes $(\Fcl(R),
	\FR)$, it follows that \axiom{F} holds. In summary, $\GG$ \AF-realizes $(R,\FR)$. Thus, Statement
	(2) implies (3). Moreover, Statement (3) implies (4) by
	Proposition~\ref{prop:RF_real_iff_RFR_real}.

	We now show that Statement (4) implies Statement (5). Suppose $(R,F)$ is \AF-realized by the
	$\FR$-extension $G$ of its canonical DAG $\GG_{R,F}$. Note that, since $(R,F)$ is \AF-realizable,
	Proposition~\ref{prop:properties_of_canonical_DAG} implies that $\GG_{R,F}$ and, therefore, $G$ is
	well-defined. Let $N$ be the network obtained from $G$ by Lemma~\ref{lem:DAG2Network} and denote
	its unique root by $\rho$. By Proposition~\ref{prop:properties_of_canonical_DAG}, the DAG $G$ is
	phylogenetic and, thus, $N$ is phylogenetic by Lemma~\ref{lem:DAG2Network}. It remains to show
	that $N$ \AF-realizes $(R,F)$. Suppose $(ab,cd) \in R$. Since $G$ realizes $R$, the vertices
	$\lca_G(ab)$ and $\lca_G(cd)$ are well-defined. By Lemma~\ref{lem:DAG2Network}, $\lca_G(ab) =
	\lca_N(ab)$ and $\lca_G(cd) = \lca_N(cd)$ follows as well as $N$ satisfying \axiom{I1} respectively
	\axiom{I2} for $(ab,cd)$. Hence, $N$ realizes $R$. Now suppose $(ab,cd) \in F$. Assume first that
	$(ab,cd) \in \FR$. Thus, $ab,cd \in \support^+_R$ and since $G$ \AF-realizes $(R,F)$, the vertices
	$\lca_G(ab)$ and $\lca_G(cd)$ are well-defined and satisfy $\lca_G(ab) \nprec_G \lca_G(cd)$. By
	Lemma~\ref{lem:DAG2Network}, $\lca_N(ab) \nprec_N \lca_N(cd)$ holds and $N$ satisfies \axiom{F}
	for $(ab,cd)$. Now suppose that $(ab,cd) \in F \setminus \FR$. By construction of the
	$\FR$-extension, $|\LCA_G(ab)|>1$ or $|\LCA_G(cd)|>1$ holds. Suppose w.l.o.g.\ that
	$|\LCA_G(ab)|>1$. By Lemma~\ref{lem:DAG2Network}, $\LCA_G(ab) = \LCA_N(ab)$ holds. Thus,
	$\lca_N(ab)$ is not well-defined and $N$ satisfies \axiom{F} for $(ab,cd)$. In summary, $N$
	\AF-realizes $(R,F)$.

	Statement (5) clearly implies (1). To summarize so far, Statement (1), (2), (3), (4), and (5) are
	equivalent. 

	To finish the proof, we show that Statement (2) and (6) are equivalent. Suppose Statement (2)
	holds and, thus, that $(R,F)$ satisfies in particular \axiom{Y2}. By
	Proposition~\ref{prop:properties_of_canonical_DAG} and \axiom{Y1}, $(\Fcl(R),\FR)$ is
	\AF-realizable and, thus, $\Fcl(R)$ is realizable. Hence, Statement (2) implies (6). Suppose now
	that Statement (6) holds. Thus, $(R,F)$ satisfies \axiom{Y2} and $\Fcl(R)$ is realizable. It
	remains to show that $(R,F)$ satisfies \axiom{Y1}. Since $\Fcl(R)$ is realizable, it follows that
	the pair $(\Fcl(R),\emptyset)$ is \AF-realizable. Lemma~\ref{lemma:AF_realiz_implies_Y1_Y2}
	implies that $(\Fcl(R),\emptyset)$ satisfies \axiom{Y1}. Hence, for all $a,b,x\in X$ it holds that
	$ab\neq xx$ implies $(ab, xx)\notin \cl_\emptyset(\Fcl(R))$. Since $\Fcl(R) \subseteq
	\cl_\emptyset(\Fcl(R))$ by Proposition~\ref{prop:closure_operator}, $(ab, xx)\notin \Fcl(R)$ and
	$(R,F)$ satisfies \axiom{Y1}. Therefore, Statement (6) implies (2), which completes this proof. 
\end{proof}

Theorem~\ref{thm:characterization_AF_realized} is both structural and constructive. Conditions
\axiom{Y1} and \axiom{Y2} provide the recognition criterion, while the canonical DAG and the
$\FR$-extension provide an explicit witness whenever the criterion is satisfied.

Before showing that the realizability problem can be solved in polynomial time in $|X|$, we provide
two examples to illustrate $\AF$-realizability.

\begin{figure}
    \centering
    \includegraphics[width=0.4\textwidth]{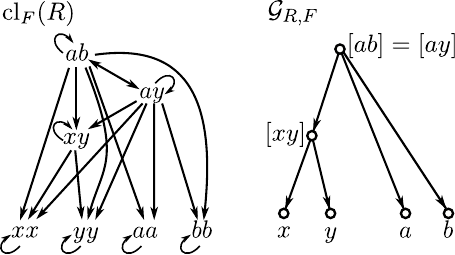}
    \caption{ 
    Shown is the closure $\Fcl(R)$ as constructed in Figure~\ref{fig:working-ex0} for the
    two relations $R = \{(xy,ab),(ay,ay)\}$ and $F = \{(ay,ab)\}$ on $\pairs(X)$ with $X =
    \{a,b,x,y\}$. Here, an arc $p\to q$ is drawn precisely if $(q,p) \in \Fcl(R)$. In
    addition, the canonical DAG $\GG_{R,F}$ which $\AF$-realizes $(R,F)$ is
    shown. For more details, see Example~\ref{exmpl:Y1andY2}.}
    \label{fig:Y1andY2}
\end{figure}

\begin{example}[\axiom{Y1} and \axiom{Y2}]\label{exmpl:Y1andY2}
Consider the two relations $R = \{(xy,ab),(ay,ay)\}$ and $F = \{(ay,ab)\}$ on $\pairs(X)$ with $X =
\{a,b,x,y\}$ and see Figure~\ref{fig:Y1andY2} for $\Fcl(R)$ and the canonical DAG $\GG_{R,F}$. Note
that the stepwise construction of $\Fcl(R)$ was discussed in Figure~\ref{fig:working-ex0}. It is now
easy to verify that $(R,F)$ satisfies \axiom{Y1}, as no pair $(cd,zz)$ exists in $\Fcl(R)$ for
$c,d,z \in X$ with $cd \neq zz$. Additionally, since $(ab,xy)\notin \Fcl(R)$ and $(ay,ay) \in \tc(R)
\cap \Fcl(R)$, the pair $(R,F)$ also satisfies \axiom{Y2}. Hence, $(R,F)$ is \AF-realizable by
Theorem~\ref{thm:characterization_AF_realized}. Observe further that $F = \FR$ and, thus, $(R,F)$ is
\AF-realized by its canonical DAG $\GG_{R,F}$,   
\end{example}

\begin{figure}
    \centering
    \includegraphics[width=0.7\textwidth]{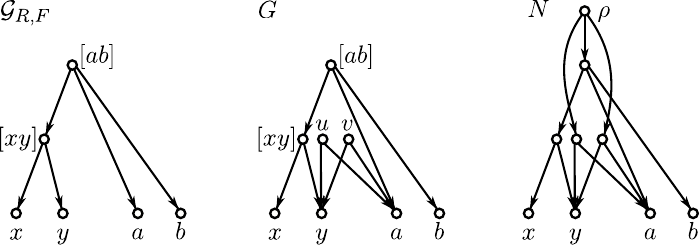}
    \caption{ 
    Shown is the canonical DAG $\GG_{R,F}$ of the two relations $R= \{(xy,ab)\}$ and $F=\{(xy,ay)\}$
    on $\pairs(X)$ with $X = \{a,b,x,y\}$ and its $\FR$-extension $G$. While $\GG_{R,F}$
    \AF-realizes $(R,\FR)=(R,\emptyset)$, it does not \AF-realize $(R,F)$ which, in turn, is
    \AF-realized by $G$, see Example~\ref{exmpl:FR-extension} for more details. In addition, the
    network $N$ obtained from $G$ is shown. 
    }
    \label{fig:FR-extension}
\end{figure}

The following example illustrates that the $\FR$-extension of $\GG_{R,F}$ cannot, in general, be
omitted when constructing a DAG that \AF-realizes $(R,F)$. 

\begin{example}[$\FR$-extension]\label{exmpl:FR-extension}
    Consider the two DAGs $\GG\coloneqq \GG_{R,F}$ and $G$ as shown in
    Figure~\ref{fig:FR-extension}. Here, $\GG$ is the canonical DAG of the two relations $R=
    \{(xy,ab)\}$ and $F=\{(xy,ay)\}$ on $\pairs(X)$ with $X = \{a,b,x,y\}$. This DAG realizes $R$
    and \AF-realizes $(R,\FR)=(R,\emptyset)$. However, $\GG$ does not \AF-realize $(R,F)$ since
    $(xy,ay)\in F$ but $\lca_{\GG}(xy) \prec_{\GG} \lca_{\GG} (ay)$. The DAG $G$ is the
    $\FR$-extension of $\GG$ which is obtained from $\GG$ by a single $ay$-extension. This ensures
    that $\lca_{G} (ay)$ is not well-defined and, therefore, that \axiom{F} is satisfied by $G$ and
    $F$. Hence, in accordance with Theorem~\ref{thm:characterization_AF_realized}, $G$ \AF-realizes
    $(R,F)$. A phylogenetic network that \AF-realizes $(R,F)$ can be obtained from $G$ according to
    Lemma~\ref{lem:DAG2Network}, see the network $N$ in Figure~\ref{fig:FR-extension}.
\end{example}

We now argue that it can be decided in polynomial time if a pair of relations is \AF-realizable and,
in the affirmative case, a network \AF-realizing the pair can be constructed.
Algorithm~\ref{alg:AF-REAL} follows the characterization directly: it computes the closure,
checks the two conditions \axiom{Y1} and \axiom{Y2}, and then builds the canonical DAG and its 
$\FR$-extension together with a canonical network.

\begin{figure}
\centering
\begin{minipage}{0.9\textwidth}
\begin{algorithm}[H] 
\caption{\textsc{[Strict] \AF-Realizability}}
  \begin{algorithmic}[1]
  \Require  A pair $(R,F)$ of relations on $\pairs(X)$
    \Ensure The $\FR$-extension $G$ of the canonical DAG $\GG_{R,F}$ and a phylogenetic network $N$ [strictly] \AF-realizing $(R,F)$ if $(R,F)$ is 
    [strictly] \AF-realizable and, otherwise, \texttt{false} is returned
    \State Compute $\tc(R)$
    \State Compute $\Fcl(R)$ according to Theorem~\ref{thm:polynomial_construction_Fclosure} \label{l:R+}
	 \If{$(R,F)$ satisfies Condition \axiom{Y1} and \axiom{Y2} [and $\tc(R)$ is asymmetric]}   \label{l:Y12}
     \State Compute the canonical DAG $\GG_{R,F}$ as in Definition~\ref{def:canonDAGN}
     \State Compute the $\FR$-extension $G$ of $\GG_{R,F}$ 
     \State Compute the network $N$ obtained from $G$ according to Lemma~\ref{lem:DAG2Network}
	  \State \Return $G$ and  $N$ 
 	 \Else \ 	\Return \texttt{false} \EndIf
  \end{algorithmic}
  \label{alg:AF-REAL}
\end{algorithm}		
\end{minipage}
\end{figure}

\begin{theorem}\label{thm:alg:AF-REAL}
For a pair $(R,F)$ of relations on $\pairs(X)$, verifying whether $(R,F)$ is \AF-realizable and, if
so, constructing a DAG or phylogenetic network that \AF-realizes $(R,F)$ can be done in polynomial
time in $|X|$ using Algorithm~\ref{alg:AF-REAL} without application of ``[and $\tc(R)$ is
asymmetric]" in Line~\ref{l:Y12}. In particular, the canonical DAG $\GG_{R,F}$ can be constructed in
polynomial time. 
\end{theorem}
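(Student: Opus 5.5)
Correctness and running time are handled separately, and both reduce to Theorem~\ref{thm:characterization_AF_realized} together with Theorem~\ref{thm:polynomial_construction_Fclosure}.

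\textbf{Correctness.} Algorithm~\ref{alg:AF-REAL}, with the bracketed test omitted, returns \texttt{false} exactly when \axiom{Y1} or \axiom{Y2} fails. By the equivalence (1)$\Leftrightarrow$(2) of Theorem~\ref{thm:characterization_AF_realized}, this happens exactly when $(R,F)$ is not \AF-realizable. If both conditions hold, then $(R,F)$ is \AF-realizable. By Proposition~\ref{prop:properties_of_canonical_DAG}, $\GG_{R,F}$ is then a well-defined DAG on $X$. By Statements~(4) and~(5) of Theorem~\ref{thm:characterization_AF_realized}, the $\FR$-extension $G$ and the network $N$ obtained from it via Lemma~\ref{lem:DAG2Network} both \AF-realize $(R,F)$, and $N$ is phylogenetic. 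So the output is correct.

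\textbf{Running time.} I bound each line in terms of $n=|X|$, using $|\pairs(X)|\in O(n^2)$.
\begin{itemize}
\item Computing $\tc(R)$ on a ground set of size $O(n^2)$ takes polynomial time, e.g.\ via Floyd--Warshall in $O(n^6)$.
\item By Theorem~\ref{thm:polynomial_construction_Fclosure}, $\Fcl(R)$ is obtained in polynomial time.
\item Checking \axiom{Y1} means scanning the $O(n^3)$ triples $a,b,x$ and testing membership in $\Fcl(R)$.
\item Checking \axiom{Y2} means scanning the $O(n^4)$ pairs $(ab,xy)\in R$ and testing membership in $\tc(R)$ and $\Fcl(R)$.
\item For the canonical DAG, the classes of $\sim_{\Fcl(R)}$ are computed by testing $(p,q),(q,p)\in\Fcl(R)$ for all $p,q\in\support^+_R$, which takes $O(n^4)$ time. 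This gives at most $O(n^2)$ classes.
\item The order $\le_{\Fcl(R)}$ is read off from $\Fcl(R)$ using representatives.
\item The Hasse diagram is obtained by keeping an arc $([q],[p])$ for $[p]<[q]$ only if no $[r]$ lies strictly between them. Checking all triples takes $O(n^6)$ time.
\item Relabeling $[aa]$ by $a$ takes linear time.
\item The $\FR$-extension adds two vertices and four arcs for each $xy\in\support_F\setminus\support^+_R$, which are at most $O(n^2)$ pairs.
\item Lemma~\ref{lem:DAG2Network} adds one root and arcs to the current roots. Finding these roots is linear in the size of the graph.
\end{itemize}
All intermediate objects have polynomial size, so the total running time is polynomial in $|X|$.

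\textbf{Main obstacle.} The only subtle point is that the Hasse-diagram construction must act on the quotient poset, so that it is well-defined. This is guaranteed because the algorithm reaches this step only after \axiom{Y1} has been verified, which is exactly the hypothesis of Proposition~\ref{prop:properties_of_canonical_DAG}.
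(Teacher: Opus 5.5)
Your proposal is correct and follows the paper's proof essentially step by step. Correctness comes from the equivalence of Statements (1), (2), (4) and (5) in Theorem~\ref{thm:characterization_AF_realized}, and the running time is bounded line by line for $\tc(R)$, $\Fcl(R)$, the \axiom{Y1}/\axiom{Y2} checks, the quotient poset and Hasse diagram, the $\FR$-extension, and the added root. (Your $O(|X|^2)$ bound on the number of $xy$-extensions is tighter than the paper's, and what \axiom{Y1} actually secures is the relabeling of the classes $[aa]$ by $a$, not the well-definedness of the Hasse diagram, which always exists.)
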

\begin{proof}
	Let $(R,F)$ be a pair of relations on $\pairs(X)$. We briefly discuss the correctness of
	Algorithm~\ref{alg:AF-REAL}. Observe first that the $\FR$-extension $G$ of $\GG_{R,F}$ and the
	network $N$ is returned if and only if $(R,F)$ satisfies Condition \axiom{Y1} and \axiom{Y2} which
	is, by Theorem~\ref{thm:characterization_AF_realized}, precisely if $(R,F)$ is \AF-realizable. In
	particular, Theorem~\ref{thm:characterization_AF_realized} implies that, in this case, $G$ and $N$
	\AF-realize $(R,F)$ and that $N$ is phylogenetic.

	We discuss now the runtime of Algorithm~\ref{alg:AF-REAL}. To this end, observe that the
	transitive closure $\tc(R)$ of $R$ can be computed in $O(|\support_R|^ 3)$-time using e.g.\ the
	Floyd-Warshall Algorithm \cite{cormen2022introduction}. Since $\support_R \subseteq \pairs(X)$ and
	$|\pairs(X)| = \binom{|X|}{2} + |X| \in O(|X|^2)$, it follows that $\tc(R)$ can be computed in
	polynomial time in $|X|$. By Theorem~\ref{thm:polynomial_construction_Fclosure}, $\Fcl(R)$ can be
	computed in polynomial time in $|X|$. In particular, as outlined at the end of the proof of
	Theorem~\ref{thm:polynomial_construction_Fclosure}, $|\Fcl(R)|\in O(|X|^4)$. We now analyze the
	time complexity to check \axiom{Y1} and \axiom{Y2}. For \axiom{Y1}, we simply scan through all
	$O(|X|^4)$ elements of $\Fcl(R)$ and check that it is not of the form $(ab,xx)$ with $ab\neq xx$.
	For \axiom{Y2}, we check for all $(ab,cd) \in R$ with $(cd,ab) \notin \tc(R)$ that it holds that
	$(cd,ab) \notin \Fcl(R)$. Since we can compute the transitive closure in polynomial time in $|X|$
	and since $|R|, |\tc(R)|,|\Fcl(R)|\in O(|X|^4)$, \axiom{Y2} can be checked in polynomial time in
	$|X|$. To compute the canonical DAG $\GG_{R,F}$, we must first determine $\sim_{\Fcl(R)}$, the set
	$Q$ of $\sim_{\Fcl(R)}$-classes, and $\le_{\Fcl(R)}$ which can all be determined in polynomial
	time in $|X|$, since their computation requires only a finite number of comparison of elements in
	$\Fcl(R)$. It is now straightforward to verify that the canonical DAG $\GG_{R,F}$ can be computed
	in polynomial time in $|X|$. To obtain the $\FR$-extension $G$ of $\GG_{R,F}$, we must first
	determine the sets $\support_F$ and $\support^+_R$. The set $\support^+_R$ has already been
	determined when computing $\Fcl(R)$. The set $\support_F$ can be determined by simply scanning all
	$O(|X|^4)$ elements in $F$. Within the same time complexity, $\support_F \setminus \support^+_R$
	can be computed. Now we apply for all $xy \in \support_F \setminus \support^+_R$ an
	$xy$-extension. In each such $xy$-extension two new vertices $u,v$ and the arcs
	$\{(u,x),(u,y),(v,x),(v,y)\}$ are added; a task that can be done in constant time for each single
	$xy$-extension. Since $\support_F\in O(|X|^4)$, it follows that the $\FR$-extension $G$ of
	$\GG_{R,F}$ can be computed in polynomial time in $|X|$. Furthermore, it is an easy task to verify
	that the network $N$ obtained from $G$ according to Lemma~\ref{lem:DAG2Network} can be constructed
	in polynomial time in $|X|$. In summary, Algorithm~\ref{alg:AF-REAL} runs in polynomial time in
	$|X|$.
\end{proof}

\subsection{Characterization of Strict \AF-Realizability}

In analogy to the strengthening of realization to strict realization, we now 
introduce strict
\AF-realization. The forbidden part remains unchanged, while strictness concerns only the
realization of the required relation $R$. We first show that strict \AF-realizability is a special
case of \AF-realizability and then use this connection to characterize strict \AF-realizability in 
the remainder of this section.

\begin{definition}[Strict \AF-Realization]
A pair $(R, F)$ of relations on $\pairs(X)$ is \emph{strict \AF-realizable} if there is a DAG $G$ on
$X$ such that $R$ is strictly realized by $G$ and \axiom{F} holds for $G$ and $F$. In this case, we
say that $(R,F)$ \emph{is strictly \AF-realized by} $G$. 
\end{definition}

Similar to Lemma~\ref{lem:3.3-lamsh}, we show that strict \AF-realizability is just a special case
of \AF-realizability. 

\begin{lemma} \label{lemma:AFstrictly_iff_AF_asymmetric}
Let $(R,F)$ be a pair of relations on $\pairs(X)$. Then $(R,F)$ is strictly \AF-realized by $G$ if
and only if $(R,F)$ is \AF-realized by $G$ and $\tc(R)$ is asymmetric. 
\end{lemma}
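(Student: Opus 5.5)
The plan is to reduce the statement directly to Lemma~\ref{lem:3.3-lamsh}, since both strict \AF-realization and \AF-realization consist of a condition on $R$ together with the identical forbidden condition \axiom{F} for $G$ and $F$. Only the first part differs between the two notions. Concretely, by definition, $(R,F)$ is strictly \AF-realized by $G$ if and only if
\begin{itemize}
\item[(a)] $R$ is strictly realized by $G$, and
\item[(b)] \axiom{F} holds for $G$ and $F$.
\end{itemize}
Likewise, $(R,F)$ is \AF-realized by $G$ if and only if
\begin{itemize}
\item[(a')] $R$ is realized by $G$, and
\item[(b)] \axiom{F} holds for $G$ and $F$.
\end{itemize}

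For the forward direction, I assume that $(R,F)$ is strictly \AF-realized by $G$. Applying Lemma~\ref{lem:3.3-lamsh} to (a) gives that $R$ is realized by $G$ and that $\tc(R)$ is asymmetric. Together with (b), which is retained unchanged, this shows that $(R,F)$ is \AF-realized by $G$ and that $\tc(R)$ is asymmetric.

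For the converse, I assume that $(R,F)$ is \AF-realized by $G$ and that $\tc(R)$ is asymmetric. Then $R$ is realized by $G$ and $\tc(R)$ is asymmetric, so the reverse direction of Lemma~\ref{lem:3.3-lamsh} yields that $R$ is strictly realized by $G$. Again (b) carries over verbatim, so $(R,F)$ is strictly \AF-realized by $G$.

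I do not expect a genuine obstacle here. The only point to check is that the well-definedness requirements match. Both strict realization and realization of $R$ require $\lca_G(ab)$ to be well-defined for all $ab\in\support^+_R$. Condition \axiom{F} is formulated conditionally on well-definedness, so it imposes no additional requirement that could differ between the two settings. Hence the equivalence for the $R$-part transfers unchanged to the pair $(R,F)$.
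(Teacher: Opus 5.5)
Your proof is correct and matches the paper's argument exactly: in both directions you keep condition \axiom{F}, which is identical in the two notions, and apply Lemma~\ref{lem:3.3-lamsh} to the part concerning $R$. Your remark about the well-definedness requirements is a sound extra check that the paper leaves implicit.
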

\begin{proof}
	Let $(R,F)$ be a pair of relations on $\pairs(X)$ which is strictly \AF-realized by $G$. By
	definition, $R$ is strictly realized by $G$. Lemma~\ref{lem:3.3-lamsh} implies that $\tc(R)$ is
	asymmetric and that $R$ is realized by $G$. Moreover, \axiom{F} is satisfied by assumption and,
	thus, $(R,F)$ is \AF-realized by $G$.

	Suppose now that $G$ \AF-realizes $(R,F)$ and that $\tc(R)$ is asymmetric. By assumption,
	\axiom{F} must hold. By definition, $G$ realizes $R$. Hence, we can apply
	Lemma~\ref{lem:3.3-lamsh} to conclude that $R$ is strictly realized by $G$. In summary, $(R,F)$ is
	strictly \AF-realized by $G$.
\end{proof}

The latter result also allows us to generalize \cite[Thm~33]{LAMSH:25}
and to characterize strict \AF-realizability as follows.

\begin{theorem}\label{thm:character_strictly_AF}
For a pair $(R,F)$ of relations on $\pairs(X)$ the following statements are equivalent: 
\begin{enumerate}[label={(\arabic*)}]
    \item $(R,F)$ is strictly \AF-realizable. 
    \item $(R,F)$ satisfies \axiom{Y1} and \axiom{Y2} and $\tc(R)$ is asymmetric. 
    \item $(R,F)$ is \AF-realized by the $\FR$-extension of its canonical DAG $\GG_{R,F}$ and $\tc(R)$ is asymmetric.
    \item $(R,F)$ is strictly \AF-realized by the $\FR$-extension of its canonical DAG $\GG_{R,F}$. 
    \item $(R,F)$ is strictly \AF-realized by the phylogenetic network $N$ obtained from the $\FR$-extension of $\GG_{R,F}$ according to Lemma~\ref{lem:DAG2Network}.
\end{enumerate}
\end{theorem}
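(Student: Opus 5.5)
The plan is to prove the cycle of implications (1)$\Rightarrow$(2)$\Rightarrow$(3)$\Rightarrow$(4)$\Rightarrow$(5)$\Rightarrow$(1). The two main inputs are Lemma~\ref{lemma:AFstrictly_iff_AF_asymmetric}, which says that strict \AF-realization is the same as \AF-realization together with asymmetry of $\tc(R)$, and Theorem~\ref{thm:characterization_AF_realized}. Asymmetry of $\tc(R)$ depends only on $R$ and not on any DAG, so it can be carried along through every step.

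\emph{(1)$\Rightarrow$(2).} Suppose $(R,F)$ is strictly \AF-realized by some DAG $G$. By Lemma~\ref{lemma:AFstrictly_iff_AF_asymmetric}, $G$ \AF-realizes $(R,F)$ and $\tc(R)$ is asymmetric. Then Theorem~\ref{thm:characterization_AF_realized}, via (1)$\Rightarrow$(2) there, gives \axiom{Y1} and \axiom{Y2}.

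\emph{(2)$\Rightarrow$(3).} Since \axiom{Y1} and \axiom{Y2} hold, Theorem~\ref{thm:characterization_AF_realized}, via (2)$\Rightarrow$(4) there, shows that the $\FR$-extension of $\GG_{R,F}$ \AF-realizes $(R,F)$. Asymmetry of $\tc(R)$ is simply kept from (2).

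\emph{(3)$\Rightarrow$(4).} This is the backward direction of Lemma~\ref{lemma:AFstrictly_iff_AF_asymmetric}, applied with $G$ equal to the $\FR$-extension of $\GG_{R,F}$.

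\emph{(4)$\Rightarrow$(5).} By Lemma~\ref{lemma:AFstrictly_iff_AF_asymmetric}, the $\FR$-extension \AF-realizes $(R,F)$ and $\tc(R)$ is asymmetric. Theorem~\ref{thm:characterization_AF_realized}, via (4)$\Rightarrow$(5) there, then shows that the phylogenetic network $N$ obtained by Lemma~\ref{lem:DAG2Network} \AF-realizes $(R,F)$. Applying Lemma~\ref{lemma:AFstrictly_iff_AF_asymmetric} once more, now to $N$, yields that $N$ strictly \AF-realizes $(R,F)$.

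\emph{(5)$\Rightarrow$(1).} This is immediate, since $N$ is a DAG on $X$.

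I do not expect a real obstacle. The only point to check is that every object invoked is well defined at the moment it is used. In particular, $\GG_{R,F}$ requires \axiom{Y1} (or \AF-realizability), and this holds at each step where it is used: in (2)$\Rightarrow$(3) \axiom{Y1} is assumed directly, and in (3) and (4) the $\FR$-extension is assumed to realize $(R,F)$, so $(R,F)$ is \AF-realizable and Proposition~\ref{prop:properties_of_canonical_DAG} applies.
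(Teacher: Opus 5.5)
Your proof is correct and follows essentially the paper's argument: both combine Lemma~\ref{lemma:AFstrictly_iff_AF_asymmetric} with Theorem~\ref{thm:characterization_AF_realized}, and both carry the asymmetry of $\tc(R)$ along unchanged. The only difference is cosmetic: you arrange the implications as a single cycle and reach (5) from (4), whereas the paper first shows that (1)--(4) are equivalent and then derives (5) from (3).
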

\begin{proof}
	Let $(R,F)$ be a pair of relations on $\pairs(X)$. By
	Lemma~\ref{lemma:AFstrictly_iff_AF_asymmetric}, $(R,F)$ is strictly \AF-realizable if and only if
	$(R,F)$ is \AF-realizable and $\tc(R)$ is asymmetric. The latter holds if and only if
	Statement~(2) or equivalently Statement (3) holds (cf.\
	Theorem~\ref{thm:characterization_AF_realized}). In addition, by
	Lemma~\ref{lemma:AFstrictly_iff_AF_asymmetric}, Statement (3) and (4) are equivalent. To summarize
	so far, Statement (1) to (4) are equivalent. Suppose now that Statement~(3) holds. This together
	with Theorem~\ref{thm:characterization_AF_realized} implies that $(R,F)$ is \AF-realized by the
	phylogenetic network $N$ obtained from the $\FR$-extension of $\GG_{R,F}$ according to
	Lemma~\ref{lem:DAG2Network} and that $\tc(R)$ is asymmetric. Then,
	Lemma~\ref{lemma:AFstrictly_iff_AF_asymmetric} implies Statement (5). Conversely, Statement (5)
	implies Statement~(1) which completes this proof.
\end{proof}

Theorem~\ref{thm:character_strictly_AF} yields an immediate algorithmic consequence. Indeed, strict
\AF-realizability differs from \AF-realizability only by the additional requirement that
$\tc(R)$ be asymmetric. Since this condition can be checked in polynomial time, strict
\AF-realizability can likewise be decided in polynomial time. Moreover, whenever $(R,F)$ is
strictly \AF-realizable, a DAG or phylogenetic network that strictly \AF-realizes $(R,F)$ can be
constructed within the same asymptotic time bound.

\begin{theorem}\label{thm:alg:strictAF-REAL}
For a pair $(R,F)$ of relations on $\pairs(X)$, verifying whether $(R,F)$ is strictly \AF-realizable
and, if so, constructing a DAG or phylogenetic network that strictly \AF-realizes $(R,F)$ can be
done in polynomial time in $|X|$ using Algorithm~\ref{alg:AF-REAL} with application of ``[and
$\tc(R)$ is asymmetric]" in Line~\ref{l:Y12}.
\end{theorem}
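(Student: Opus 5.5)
The argument mirrors the proof of Theorem~\ref{thm:alg:AF-REAL}; only the extra asymmetry test in Line~\ref{l:Y12} and its effect on correctness need to be handled.

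\emph{Correctness.} By Theorem~\ref{thm:character_strictly_AF}, $(R,F)$ is strictly \AF-realizable if and only if $(R,F)$ satisfies \axiom{Y1} and \axiom{Y2} and $\tc(R)$ is asymmetric. With the bracketed clause included, Line~\ref{l:Y12} tests exactly this conjunction. Hence Algorithm~\ref{alg:AF-REAL} returns \texttt{false} precisely when $(R,F)$ is not strictly \AF-realizable. Otherwise it returns the $\FR$-extension $G$ of $\GG_{R,F}$ and the network $N$ obtained from $G$ via Lemma~\ref{lem:DAG2Network}. By Statements~(4) and~(5) of Theorem~\ref{thm:character_strictly_AF}, both $G$ and $N$ strictly \AF-realize $(R,F)$, and $N$ is phylogenetic.

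\emph{Running time.} Every step except the new asymmetry check was shown in the proof of Theorem~\ref{thm:alg:AF-REAL} to run in polynomial time in $|X|$. This covers computing $\tc(R)$, computing $\Fcl(R)$, checking \axiom{Y1} and \axiom{Y2}, building $\GG_{R,F}$, building its $\FR$-extension, and building $N$. For asymmetry, $\tc(R)$ has already been computed in Line~1, and $|\tc(R)|\in O(|X|^4)$. For each $(p,q)\in\tc(R)$ we test whether $(q,p)\in\tc(R)$. Using an adjacency-matrix representation indexed by $\pairs(X)$, each membership test takes constant time, so the whole check is polynomial in $|X|$.

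Nothing here is a real obstacle. The only care needed concerns reflexive pairs: asymmetry forbids any pair $(p,p)$ in $\tc(R)$, so the check must not skip such pairs. This agrees with the convention used in Lemma~\ref{lem:3.3-lamsh} and Theorem~\ref{thm:character_strictly_AF}, so testing literal asymmetry of $\tc(R)$ is exactly the required condition.
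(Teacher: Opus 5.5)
Your proposal is correct and follows the paper's own proof. The paper also derives correctness directly from Theorem~\ref{thm:character_strictly_AF}, and it obtains the polynomial running time from Theorem~\ref{thm:alg:AF-REAL} plus the asymmetry check on $\tc(R)$. You additionally spell out that check, including the remark on reflexive pairs, which the paper leaves as easy to verify.
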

\begin{proof}
	The correctness of Algorithm~\ref{alg:AF-REAL} with application of ``[and $\tc(R)$ is asymmetric]"
	in Line~\ref{l:Y12} is a direct consequence of Theorem~\ref{thm:character_strictly_AF} and similar
	arguments as used in Theorem~\ref{thm:alg:AF-REAL}. The only overhead in terms of runtime compared
	to Algorithm~\ref{alg:AF-REAL} without application of ``[and $\tc(R)$ is asymmetric]" in
	Line~\ref{l:Y12} is checking whether $\tc(R)$ is asymmetric. It is easy to verify that this task
	can be performed in polynomial time in $|X|$. This together with Theorem~\ref{thm:alg:AF-REAL}
	implies the polynomial runtime. 
\end{proof}
%

%
%

\section{Alternative Definitions of Forbidden LCA-constraints}
\label{sec:alternative}

The interested reader may have wondered why we impose the condition
\begin{itemize}
    \item[]\axiom{F}\ \ \ If $(ab,cd) \in F$ and $\lca_G(ab)$ and $\lca_G(cd)$ are well-defined, then $\lca_G(ab) \nprec_G \lca_G(cd)$
\end{itemize}
instead of the following natural variant
\begin{itemize}
    \item[]\axiom{F$^\npreceq$}\ \  If $(ab,cd) \in F$ and $\lca_G(ab)$ and $\lca_G(cd)$ are well-defined, then $\lca_G(ab) \npreceq_G \lca_G(cd)$
\end{itemize}
or 
\begin{itemize}
    \item[]\axiom{F$^{\mathbf{lca}}$}\ \  If $(ab,cd) \in F$, then $\lca_G(ab)$ and $\lca_G(cd)$ are well-defined and $\lca_G(ab) \nprec_G \lca_G(cd)$. 
\end{itemize}

The latter variant is particularly relevant, since it rules out the convenient escape route used in
Section~\ref{sec:AFLCA}, where an LCA occurring in a forbidden constraint is made non-unique
whenever possible. 
Thus, in the spirit of completeness and to reassure any reader still mildly suspicious of our 
previous choice, we also provide

\begin{definition}[\AFeq-Realization and \AFlca-Realization]
	A pair $(R, F)$ of relations on $\pairs(X)$ is \emph{\AFeq-realizable} (resp.\
	\emph{\AFlca-realizable}) if there is a DAG $G$ on $X$ such that $R$ is realized by $G$ and
	\axiom{F$^\npreceq$} (resp.\ \axiom{F$^{\mathbf{lca}}$}) holds for $G$ and $F$. In this case, we say
	that $(R,F)$ is \emph{\AFeq-realized} (resp.\ \emph{\AFlca-realized}) by $G$.  
\end{definition}

In the following, we characterize realizability under these two alternative definitions and, in
each affirmative case, provide an explicit construction of a DAG that \AFeq-realizes or,
respectively, \AFlca-realizes $(R,F)$. We conclude the section by showing that both notions are
proper special cases of \AF-realizability.

We begin with \AFeq-realizability, whose characterization is considerably simpler than that of
\AF-realizability. Clearly, if $(R,F)$ is \AFeq-realizable, then $R$ must be realizable. In this
case, the closure $\cl_{\emptyset}(R)$ contains precisely the LCA-constraints that are satisfied by
every DAG realizing $R$. In particular, for every $(ab,cd)\in\cl_{\emptyset}(R)$ and every DAG
$G$ realizing $R$, we have
\[
\lca_G(ab)\preceq_G\lca_G(cd);
\]
see Lemma~20 in \cite{LAMSH:25}. Consequently, if
$(ab,cd)\in F\cap\cl_{\emptyset}(R)$, then membership in
$\cl_{\emptyset}(R)$ requires
$\lca_G(ab)\preceq_G\lca_G(cd)$, whereas \axiom{F$^\npreceq$} requires
$\lca_G(ab)\npreceq_G\lca_G(cd)$. Hence,
$F\cap\cl_{\emptyset}(R)=\emptyset$ is necessary. As the following theorem shows, this condition
together with the realizability of $R$ is also sufficient.

\begin{theorem}\label{thm:AFeq_realizable}
A pair $(R,F)$ of relations is \AFeq-realizable if and only if $R$ is realizable and $F\cap
\cl_{\emptyset}(R)=\emptyset$. In this case, the $\FR$-extension of the canonical DAG
$\GG_{R,\emptyset}$ \AFeq-realizes $(R,F)$.
\end{theorem}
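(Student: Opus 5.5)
The plan is to prove the two directions separately. For sufficiency we take the $\FR$-extension of $\GG_{R,\emptyset}$ as the witness and check \axiom{F$^\npreceq$} on it directly.

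\emph{Necessity.} Suppose $G$ \AFeq-realizes $(R,F)$. Then $G$ realizes $R$ by definition, so $R$ is realizable. Assume for contradiction that $(ab,cd)\in F\cap\cl_{\emptyset}(R)$. Every element of $\cl_{\emptyset}(R)$ lies in $\support_{\cl_\emptyset(R)}=\support^+_R$ by Theorem~\ref{thm:polynomial_construction_Fclosure} with $F=\emptyset$. Since $G$ realizes $R$, both $\lca_G(ab)$ and $\lca_G(cd)$ are well-defined. Lemma~\ref{lemma:Fclosure_preceq}, applied with $F=\emptyset$ so that $(R,\emptyset_{|R})=(R,\emptyset)$ is \AF-realized by $G$, gives $\lca_G(ab)\preceq_G\lca_G(cd)$. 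This contradicts \axiom{F$^\npreceq$}, so $F\cap\cl_\emptyset(R)=\emptyset$.

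\emph{Sufficiency.} Assume $R$ is realizable and $F\cap\cl_\emptyset(R)=\emptyset$. Then $(R,\emptyset)$ is \AF-realizable. Applying Theorem~\ref{thm:characterization_AF_realized} with $F=\emptyset$, statement (3) shows that $\GG\coloneqq\GG_{R,\emptyset}$ realizes $R$. Proposition~\ref{prop:properties_of_canonical_DAG} shows that $\lca_\GG(p)$ is the vertex corresponding to $[p]$ for every $p\in\support^+_R$, with $[xx]$ relabeled as $x$.

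The key step is the equivalence, for $p,q\in\support^+_R$,
\[
\lca_\GG(p)\preceq_\GG\lca_\GG(q)\iff (p,q)\in\cl_\emptyset(R).
\]
For the forward direction, a directed path from $[q]$ to $[p]$ in the Hasse diagram yields a chain of $\leq_{\cl_\emptyset(R)}$-relations. Transitivity of $\cl_\emptyset(R)$ then gives $(p,q)\in\cl_\emptyset(R)$, exactly as in the argument for \axiom{F} at the end of the proof of Proposition~\ref{prop:properties_of_canonical_DAG}. The trivial case $[p]=[q]$ also gives $(p,q)\in\cl_\emptyset(R)$, by the definition of $\sim$.

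Now let $(ab,cd)\in\FR$. Both LCAs are well-defined, and $(ab,cd)\notin\cl_\emptyset(R)$ by assumption. The forward direction above therefore gives $\lca_\GG(ab)\npreceq_\GG\lca_\GG(cd)$. Hence $\GG$ realizes $R$ and satisfies \axiom{F$^\npreceq$} for $\FR$.

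Next let $G$ be the $\FR$-extension of $\GG$, obtained by successive $xy$-extensions for all $xy\in\support_F\setminus\support^+_R$, where $x\neq y$ by Observation~\ref{obs:FR-extension}. Observation~\ref{obs:xy-extension1}(ii),(iii) shows that every $\LCA$ set of a pair in $\support^+_R$ is unchanged, and so is the order between these LCAs. Hence $G$ still realizes $R$ and still satisfies \axiom{F$^\npreceq$} for every element of $\FR$. For $(ab,cd)\in F\setminus\FR$, one of $ab,cd$ lies outside $\support^+_R$. Observation~\ref{obs:xy-extension1}(i) shows that its LCA in $G$ is not well-defined, so \axiom{F$^\npreceq$} holds vacuously. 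This follows the structure of Proposition~\ref{prop:RF_real_iff_RFR_real}(2), with \axiom{F} replaced by \axiom{F$^\npreceq$}.

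The main point needing care is the forward direction of the displayed equivalence. Reachability in the relabeled Hasse diagram must be translated back into membership in $\cl_\emptyset(R)$, including the case where $p$ is a leaf $aa$. This reduces to transitivity of the closure together with the fact, from Proposition~\ref{prop:properties_of_canonical_DAG}, that vertices correspond bijectively to $\sim$-classes. The remaining steps are routine invocations of earlier results.
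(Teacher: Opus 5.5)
Your proof is correct and follows the paper's argument. For necessity, both use the fact that every DAG realizing $R$ satisfies $\cl_\emptyset(R)\subseteq\rel_G$: you take it from Lemma~\ref{lemma:Fclosure_preceq} with $F=\emptyset$, while the paper cites the original from Lindeberg et al. For the witness, both use $\GG_{R,\emptyset}$, the fact that reachability in $\GG_{R,\emptyset}$ implies membership in $\cl_\emptyset(R)$, and the $\FR$-extension; you reprove the middle fact, though you could cite Lemma~\ref{lemma:preceq_implies_in_Fclosure} with $F=\emptyset$.

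The only difference is organizational. The paper first proves sufficiency with a DAG $G$ satisfying $\rel_G=\cl_\emptyset(R)$ \cite[L~43]{LAMSH:25} and then checks the $\FR$-extension of $\GG_{R,\emptyset}$ separately. You let that explicit witness prove sufficiency directly, so the extra step is not needed.
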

\begin{proof}
	We make frequent use of the fact that $\cl_{\emptyset}(R)=\cl(R)$, cf.\
	Observation~\ref{obs:cl_emptyset=cl}. Suppose that $(R,F)$ is \AFeq-realized by the DAG $G$. By
	definition, $R$ is realized by $G$. Assume, for contradiction, that there is some $(ab,xy)\in
	F\cap \cl_{\emptyset}(R)$. Then, \cite[L~20]{LAMSH:25} and $(ab,xy) \in \cl(R)$ implies that
	$\lca_G(ab)$ and $\lca_G(xy)$ are well-defined and $\lca_G(ab)\preceq_G \lca_G(xy)$. But then $G$
	and $F$ do not satisfy \axiom{F$^\npreceq$}; a contradiction to the definition of
	\AFeq-realizability. 

	Assume now that $R$ is realizable and $F\cap \cl_{\emptyset}(R)=\emptyset$. By
	\cite[L~43]{LAMSH:25}, there is a DAG $G$ that realizes $R$ and satisfies $\cl_\emptyset(R) =
	\rel_G$. Assume, for contradiction, that \axiom{F$^\npreceq$} is violated and thus that, for some
	$(ab,xy)\in F$, it holds that $\lca_G(ab)$ and $\lca_G(xy)$ are well-defined and
	$\lca_G(ab)\preceq_G \lca_G(xy)$. By definition of $\rel_G$, it holds that $(ab,xy)\in
	\rel_G=\cl_{\emptyset}(R)$; a contradiction to $F\cap \cl_{\emptyset}(R)=\emptyset$. 

	Suppose now that $(R,F)$ is \AFeq-realizable. By definition, $R$ is realizable. Hence,
	$(R,\emptyset)$ is \AF-realizable, since \axiom{F} is vacuously true. By
	Theorem~\ref{thm:characterization_AF_realized}, $\GG \coloneqq \GG_{R,\emptyset}$ \AF-realizes
	$(R,\emptyset)$ and, thus, $\GG$ realizes in particular $R$. Now, let $(ab,xy) \in \FR$ and, thus,
	$ab, xy \in \support^+_R$. This implies that the vertices $\lca_\GG(ab)$ and $\lca_\GG(xy)$ are
	well-defined. Suppose, for contradiction, that $\lca_\GG(ab) \preceq_\GG \lca_\GG(xy)$. Note that
	$\GG$ coincides with the canonical DAG $\GG_R$ as defined in \cite{LAMSH:25}. This allows us to
	apply \cite[L~41]{LAMSH:25}, which implies that $(ab,xy) \in \cl_\emptyset(R) = \cl(R)$. Since
	$\FR\subseteq F$ it follows that $(ab,xy) \in F \cap \cl_\emptyset(R)$. As shown in the previous
	paragraphs, this implies that $(R,F)$ is not \AFeq-realizable; a contradiction to the assumption.
	Hence, $\lca_\GG(ab) \not\preceq_\GG \lca_\GG(xy)$ must hold for all $(ab,xy) \in \FR$. Therefore,
	\axiom{F$^\npreceq$} holds for $\FR$ and $\GG$. This together with the fact that $\GG$ realizes
	$R$ implies that $\GG$ \AFeq-realizes $(R,\FR)$. We can now re-use similar arguments as used in
	the proof of Proposition~\ref{prop:RF_real_iff_RFR_real}(2) to conclude that the $\FR$-extension
	of $\GG$ \AFeq-realizes $(R,F)$.
\end{proof}

We continue by characterizing \AFlca-realizable pairs $(R,F)$ of relations. In contrast to
\axiom{F} and \axiom{F$^\npreceq$}, Condition~\axiom{F$^{\mathbf{lca}}$} also requires the LCAs
$\lca_G(ab)$ to be well-defined for all $ab\in\support_F$. To enforce this requirement, we consider
the extension $R^{\lca}$ of $R$ obtained by adding the constraints $(ab,ab)$ for all
$ab\in\support_F$. Thus,
\[
R^{\lca}\coloneqq R\cup\{(ab,ab)\mid ab\in\support_F\}.
\]
It follows that
$
\support^+_{R^{\lca}}=\support^+_R\cup\support_F$,
and, in every DAG $G$ realizing $R^{\lca}$, the LCA $\lca_G(ab)$ is well-defined for each
$ab\in\support_F$. This allows us to characterize \AFlca-realizability in terms of
\AF-realizability applied to the extended pair $(R^{\lca},F)$.

\begin{theorem}\label{thm:characterization_AFlca_realized}
Let $(R,F)$ be a pair of relations on $\pairs(X)$ and let $R^{\lca} \coloneqq R \cup \{(ab,ab) \mid
ab \in \support_F\}$ be a relation on $\pairs(X)$. Then a DAG $G$ \AFlca-realizes $(R,F)$ if and only if $G$ \AF-realizes
$(R^{\lca},F)$.
\end{theorem}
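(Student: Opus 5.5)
The plan is to unfold the definitions on both sides and compare them directly; no closure machinery is needed. The central observation is that $\support^+_{R^{\lca}}=\support^+_R\cup\support_F$. Realizing $R^{\lca}$ therefore forces well-definedness of exactly the LCAs that \axiom{F$^{\mathbf{lca}}$} additionally demands. The only other point to check is that the added reflexive pairs $(ab,ab)$ do not change the conditions \axiom{I1} and \axiom{I2} for the pairs already in $R$.

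As a preliminary step I would relate $\tc(R^{\lca})$ to $\tc(R)$. Adding loops $(p,p)$ to a relation only adds those loops to its transitive closure, so
\[
\tc(R^{\lca})=\tc(R)\cup\{(ab,ab)\mid ab\in\support_F\}.
\]
Consequently, for $(p,q)\in R$ we have $(q,p)\in\tc(R^{\lca})$ if and only if $(q,p)\in\tc(R)$. The case $p=q$ is trivial, because then $(p,p)\in R\subseteq\tc(R)$. For $p\neq q$ the pair $(q,p)$ is not a loop, so it lies in $\tc(R^{\lca})$ exactly when it lies in $\tc(R)$. Hence \axiom{I1} and \axiom{I2} applied to an element of $R$ say the same thing with respect to $R$ and to $R^{\lca}$. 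For each new element $(ab,ab)$, \axiom{I2} only requires $\lca_G(ab)=\lca_G(ab)$, which holds trivially once $\lca_G(ab)$ is well-defined.

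($\Rightarrow$) Let $G$ \AFlca-realize $(R,F)$. Then $G$ realizes $R$, so all LCAs of pairs in $\support^+_R$ are well-defined, and \axiom{F$^{\mathbf{lca}}$} makes the LCAs of all pairs in $\support_F$ well-defined. Thus every LCA of a pair in $\support^+_{R^{\lca}}$ is well-defined, and by the preliminary step \axiom{I1} and \axiom{I2} hold for $R^{\lca}$. So $G$ realizes $R^{\lca}$. Condition \axiom{F} follows immediately from \axiom{F$^{\mathbf{lca}}$}.

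($\Leftarrow$) Let $G$ \AF-realize $(R^{\lca},F)$. Then $R^{\lca}$ is realized by $G$, so the LCAs of all pairs in $\support^+_R\subseteq\support^+_{R^{\lca}}$ are well-defined, and the preliminary step transfers \axiom{I1} and \axiom{I2} back to $R$. Hence $G$ realizes $R$. For $(ab,cd)\in F$ we have $ab,cd\in\support_F\subseteq\support^+_{R^{\lca}}$, so both LCAs are well-defined. Condition \axiom{F} then gives $\lca_G(ab)\nprec_G\lca_G(cd)$, which is exactly \axiom{F$^{\mathbf{lca}}$}.

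The main obstacle, though mild, is the transitive-closure bookkeeping in the preliminary step. I would treat it carefully: a transitive path that uses an added loop can be shortened by deleting that loop, so no non-loop pair enters $\tc(R^{\lca})$ beyond those already in $\tc(R)$.
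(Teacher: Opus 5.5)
Your proposal is correct and follows essentially the same route as the paper. Like the paper, you use $\tc(R^{\lca})=\tc(R)\cup\{(ab,ab)\mid ab\in\support_F\}$ to transfer \axiom{I1}/\axiom{I2} between $R$ and $R^{\lca}$, and match well-definedness with \axiom{F$^{\mathbf{lca}}$} via $\support^+_{R^{\lca}}=\support^+_R\cup\support_F$; your loop-deletion argument supplies the transitive-closure identity that the paper calls straightforward.
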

\begin{proof}
	Let $(R,F)$ be a pair of relations on $\pairs(X)$ and let $R^{\lca} \coloneqq R \cup \{(ab,ab)
	\mid ab \in \support_F\}$. It is straightforward to verify that $\tc(R^{\lca}) = \tc(R) \cup
	\{(ab,ab) \mid ab \in \support_F\}$. Suppose that $(R,F)$ is \AFlca-realized by the DAG $G$. By
	definition, $G$ and $F$ satisfy \axiom{F$^{\mathbf{lca}}$} and, thus, $G$ and $F$ satisfy
	\axiom{F}. Moreover, by definition, $G$ realizes $R$. To prove that $G$ realizes $R^{\lca}$, let
	$(p,q) \in R^{\lca}$. We distinguish now between the two cases $(p,q) \in R$ and $(p,q) \notin R$.
	Suppose first that $(p,q) \in R$. Since $G$ realizes $R$, the vertices $\lca_G(p)$ and $\lca_G(q)$
	are well-defined and one of the Conditions \axiom{I1} and \axiom{I2} is satisfied by $G$ and $R$.
	Suppose now that $(q,p) \notin \tc(R^{\lca})$. Then $\tc(R) \subseteq \tc(R^{\lca})$ implies that
	$(q,p) \notin \tc(R)$. Since $G$ and $R$ satisfy \axiom{I1}, we have $\lca_G(p)\prec_G\lca_G(q)$
	and it follows that $G$ and $R^{\lca}$ satisfy \axiom{I1}. Suppose now that $(q,p) \in
	\tc(R^{\lca}) = \tc(R) \cup \{(ab,ab) \mid ab \in \support_F\}$. Assume, for contradiction, that
	$(q,p) \notin \tc(R)$. Then, $(q,p) \in \{(ab,ab) \mid ab \in \support_F\}$ and $q = p$ holds.
	Since $(p,q) \in R$, we obtain $(q,p) \in R \subseteq \tc(R)$; a contradiction to the assumption.
	Hence, $(q,p) \in \tc(R)$ holds. Since $G$ and $R$ satisfy \axiom{I2}, we have $\lca_G(p)
	=\lca_G(q)$ and it follows that $G$ and $R^{\lca}$ satisfy \axiom{I2}. Now, let $(p,q) \in
	R^{\lca} \setminus R$. Hence, $(p,q)$ is of the form $(ab,ab)$ for some $ab \in \support_F$. Thus,
	$(q,p) \in R^{\lca} \subseteq \tc(R^{\lca})$ holds. In particular, since $G$ \AFlca-realizes
	$(R,F)$ and $ab \in \support_F$, the vertex $\lca_G(ab)$ is well-defined. Clearly, $\lca_G(ab) =
	\lca_G(ab)$ holds and $G$ and $R^{\lca}$ satisfy \axiom{I2}. In summary, $G$ \AF-realizes
	$(R^{\lca},F)$.

	Now suppose that $(R^{\lca},F)$ is \AF-realized by a DAG $G$. To show that $G$ realizes $R$, let
	$(p,q) \in R$. Suppose that $(q,p) \notin \tc(R)$. We first argue that $(q,p) \notin
	\tc(R^{\lca})$. Assume, for contradiction, that $(q,p) \in \tc(R^{\lca})$. Since $\tc(R^{\lca}) =
	\tc(R) \cup \{(ab,ab) \mid ab \in \support_F\}$ and $(q,p) \notin \tc(R)$, we obtain $(q,p) \in
	\{(ab,ab) \mid ab \in \support_F\}$ and, therefore, $p = q$. This and $(p,q) \in R$ implies $(q,p)
	\in R \subseteq \tc(R)$; a contradiction to the assumption. Hence, $(q,p) \notin \tc(R^{\lca})$
	must hold. Since $G$ and $R^{\lca}$ satisfy \axiom{I1}, we have $\lca_G(p)\prec_G\lca_G(q)$ and it
	follows that $G$ and $R$ also satisfy \axiom{I1}. Suppose now that $(q,p) \in \tc(R) \subseteq
	\tc(R^{\lca})$. Since $G$ and $R^{\lca}$ satisfy \axiom{I2}, $\lca_G(p)=\lca_G(q)$ holds and it
	follows that $G$ and $R$ also satisfy \axiom{I2}. Therefore, $G$ realizes $R$. Moreover, since $G$
	\AF-realizes $(R^{\lca},F)$, the vertex $\lca_G(ab)$ is well-defined for all $ab \in \support_F$.
	This together with the fact that $G$ and $F$ satisfy \axiom{F} implies that $G$ and $F$ satisfy
	\axiom{F$^{\mathbf{lca}}$}. In summary, $G$ \AFlca-realizes $(R,F)$.
\end{proof}

As a direct consequence of Theorems~\ref{thm:characterization_AF_realized}, \ref{thm:alg:AF-REAL},
\ref{thm:AFeq_realizable}, and \ref{thm:characterization_AFlca_realized}, we obtain

\begin{corollary}
For a pair $(R,F)$ of relations on $\pairs(X)$, verifying whether $(R,F)$ is \AFeq-realizable,
resp., \AFlca-realizable, and, if so, constructing a DAG or phylogenetic network that
\AFeq-realizes, resp., \AFlca-realizes $(R,F)$ can be done in polynomial time in $|X|$.
\end{corollary}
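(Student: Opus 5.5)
The corollary splits into the \AFeq-case and the \AFlca-case, and in both I would reduce to the algorithmic machinery already established for \AF-realizability.

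\emph{\AFeq-realizability.} By Theorem~\ref{thm:AFeq_realizable}, $(R,F)$ is \AFeq-realizable if and only if $R$ is realizable and $F\cap\cl_{\emptyset}(R)=\emptyset$. The algorithm has three steps.
\begin{enumerate}
\item Decide realizability of $R$. This is the same as \AF-realizability of $(R,\emptyset)$, since \axiom{F} is vacuous when $F=\emptyset$. So I would run Algorithm~\ref{alg:AF-REAL} on $(R,\emptyset)$, which is polynomial by Theorem~\ref{thm:alg:AF-REAL}.
\item Compute $\cl_{\emptyset}(R)$ via Theorem~\ref{thm:polynomial_construction_Fclosure} with $F=\emptyset$, which is also polynomial.
\item Test $F\cap\cl_{\emptyset}(R)=\emptyset$ by scanning the $O(|X|^4)$ elements of $F$.
\end{enumerate}
In the affirmative case, Theorem~\ref{thm:AFeq_realizable} says the $\FR$-extension of $\GG_{R,\emptyset}$ is a witness. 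It is computable in polynomial time exactly as in the proof of Theorem~\ref{thm:alg:AF-REAL}: build the canonical DAG, then apply at most $O(|X|^4)$ constant-size $xy$-extensions.

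For the network, I would apply Lemma~\ref{lem:DAG2Network}. The lemma preserves $\preceq$ among old vertices and preserves $\LCA$-sets whenever they are nonempty. Hence realization of $R$ is preserved. Condition \axiom{F$^\npreceq$} is also preserved: a pair whose LCA was well-defined keeps it, and a pair with $|\LCA|>1$ keeps that set. Phylogeneticity follows from Proposition~\ref{prop:properties_of_canonical_DAG} (applied with $F=\emptyset$, plus the $\FR$-extension argument) together with Lemma~\ref{lem:DAG2Network}.

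\emph{\AFlca-realizability.} I would first construct $R^{\lca}$, which is polynomial because $|\support_F|\in O(|X|^2)$. By Theorem~\ref{thm:characterization_AFlca_realized}, the DAGs that \AFlca-realize $(R,F)$ are exactly those that \AF-realize $(R^{\lca},F)$. So it suffices to run Algorithm~\ref{alg:AF-REAL} on $(R^{\lca},F)$. By Theorem~\ref{thm:alg:AF-REAL} this decides the question in polynomial time and, if positive, returns a DAG $G$ and a phylogenetic network $N$ that \AF-realize $(R^{\lca},F)$ (Theorem~\ref{thm:characterization_AF_realized}(4),(5)). Applying Theorem~\ref{thm:characterization_AFlca_realized} to each of them shows they \AFlca-realize $(R,F)$.

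The only point needing care is the network in the \AFeq-case. Theorem~\ref{thm:characterization_AF_realized}(5) covers only \axiom{F}, so \axiom{F$^\npreceq$} has to be checked directly through the order- and $\LCA$-preservation of Lemma~\ref{lem:DAG2Network}, as sketched above. I expect this to be routine.
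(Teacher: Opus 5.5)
Your proposal is correct and takes the same route as the paper, which obtains the corollary directly from Theorems~\ref{thm:characterization_AF_realized}, \ref{thm:alg:AF-REAL}, \ref{thm:AFeq_realizable} and \ref{thm:characterization_AFlca_realized}; your direct check of \axiom{F$^\npreceq$} for the network in the \AFeq-case fills in a detail the paper leaves implicit. To make that check airtight, say explicitly that in the $\FR$-extension every $ab\in\support_F$ falls into one of two cases: either $ab\in\support^+_R$ and its LCA is well-defined, or an $ab$-extension was applied and $|\LCA_G(ab)|\geq 2$. Hence no pair in $\support_F$ has an empty $\LCA$-set, which is the one case where the new root $\rho$ of Lemma~\ref{lem:DAG2Network} could become a well-defined LCA and create a violation.
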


Additionally, it is easy to verify that if $G$ and $F$ satisfy \axiom{F$^\npreceq$} or
\axiom{F$^{\mathbf{lca}}$} than they satisfy \axiom{F}. In other words, the class of
\AFeq-realizable, resp., \AFlca-realizable pairs of relations forms a subclass of the class of
\AF-realizable pairs of relations. Moreover, one can find \AF-realizable pairs that are not
\AFeq-realizable or \AFlca-realizable as shown in the following 

\begin{example}[\AFeq-realizability and \AFlca-realizability are \emph{proper} special cases of \AF-realizability] \label{exmpl:AFlca}
First, consider the pair $(R,F)$ of relations $R = \{(xy,xz),(yy,yz)\}$ and $F=\{(yz,xz)\}$ as in
Example~\ref{exmpl:Fcsym}, see also Figure~\ref{fig:GrealR-notRF.jpg}. In this example, $(R,F)$ is
\AF-realizable, but since $(yz,xz) \in F \cap \cl_{\emptyset}(R)$ and by
Theorem~\ref{thm:AFeq_realizable}, $(R,F)$ is not \AFeq-realizable. In particular, since
$(xz,yz),(yz,xz) \in \Fcl(R)$ and $F =\FR$, Lemma~\ref{lemma:Fclosure_preceq} implies that
$\lca_G(xz)=\lca_G(yz)$ holds for any DAG $G$ that \AF-realizes $(R,F)$ and, thus, also for all DAGs
$G$ that \AFeq-realizes $(R,F)$. This, in turn, implies that \axiom{F$^\npreceq$} can never be
satisfied for $(R,F)$. In summary, \AFeq-realizability is a proper special case of
\AF-realizability.

Now, consider the relations $R= \{(xy,ab)\}$ and $F=\{(xy,ay),(ay,ab)\}$ on $\pairs(X)$ with
$X=\{a,b,x,y\}$ as illustrated in Figure~\ref{fig:AFlca}. Since $(R,F)$ satisfies \axiom{Y1} and
\axiom{Y2}, Theorem~\ref{thm:characterization_AF_realized} implies that the $\FR$-extension $G$ of
the canonical DAG $\GG_{R,F}$ \AF-realizes $(R,F)$. Now consider $R^{\lca} = \{(xy,ab),
(ay,ay)\}$. Then, $(R^{\lca},F)$ does not satisfy \axiom{Y2}, since $(xy,ab) \in
R^{\lca}$ and $(ab,xy) \notin \tc(R^{\lca})$, but $(ab,xy) \in \Fcl(R^{\lca})$. Hence, by
Theorem~\ref{thm:characterization_AF_realized}, $(R^{\lca},F)$ is not \AF-realizable and
Theorem~\ref{thm:characterization_AFlca_realized} implies that $(R,F)$ is not \AFlca-realizable.
Therefore, also \AFlca-realizability is a proper special case of \AF-realizability.
\end{example}

\begin{figure}
    \centering
    \includegraphics[width=0.8\linewidth]{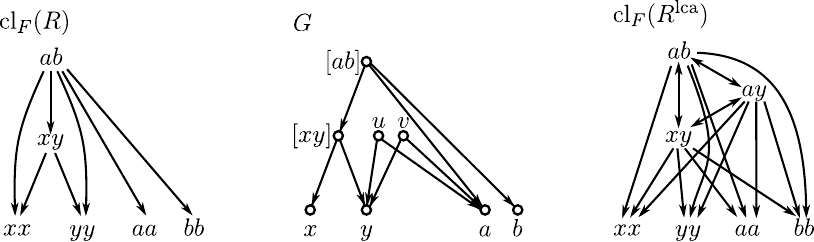}
    \caption{
    Shown are $\Fcl(R)$, $\Fcl(R^{\lca})$, and the $\FR$-extension $G$ of $\GG_{R,F}$
    for the relations  $R= \{(xy,ab)\}$,  
    $F=\{(xy,ay),(ay,ab)\}$ and $R^{\lca} = \{(xy,ab), (xy,xy), (ab,ab), (ay,ay)\}$ on $\pairs(X)$ with $X = \{a,b,x,y\}$.
    Here we draw an arc $p\to q$ precisely if $(q,p) \in \Fcl$ but 
    omitted arcs of the form $(p,p)$, $\Fcl\in \{\Fcl(R), \Fcl(R^{\lca})\}$.
    While $G$ \AF-realizes $(R,F)$, the pair 
    $(R,F)$ is not \AFlca-realizable, 
    see Example~\ref{exmpl:AFlca} for more details.
    }
    \label{fig:AFlca}
\end{figure}

%
%
\section{Classical Closure} 
\label{sec:classical-closure}

In the preceding sections, we introduced the closure $\Fcl(R)$ of a relation $R$ with respect to a
relation $F$ and used it to construct the canonical DAG associated with an \AF-realizable pair of relations.
For such \AF-realizable pairs, however, there is another, arguably more intrinsic, way to define such a
closure. This alternative mirrors the standard notion of closure for triplets and quartets on trees
\cite{SH:18,MaayanLevy24,Bryant1997,BS:95}. More precisely, let $\FCL(R)$ be the intersection of all relations
\[\rel_G = \{(ab,xy) \mid \lca_G(ab), \lca_G(xy) \text{ are well-defined and } \lca_G(ab)\preceq_G
\lca_G(xy)\}\] 
induced by DAGs $G$ that \AF-realize $(R,F)$. 
Since $\rel_G$ consists precisely of the LCA-constraints satisfied by $G$,
the relation $\FCL(R)$ comprises exactly those constraints that are
satisfied by every DAG that \AF-realizes $(R,F)$.

The main result of this section shows that the two notions of closure coincide for every
\AF-realizable pair $(R,F)$, that is, $ \FCL(R)=\Fcl(R); $ see Theorem~\ref{thm:closure-classic}.
Thus, the constraints obtained combinatorially from $R$ and $F$ by means of the closure $\Fcl(R)$
are precisely the constraints shared by all \AF-realizations of $(R,F)$. 
We start by giving the formal definition of the alternative closure.

\begin{definition}[Classical Closure]
Let $(R,F)$ be an \AF-realizable pair of relations on $\pairs(X)$, and let $\mathfrak{G}$ be the set
of all DAGs on $X$ that \AF-realize $(R,F)$. Then, we define the following intersection
\[
\FCL(R) \coloneqq \bigcap_{G \in \mathfrak{G}} \rel_G.  
\]
\end{definition}

Note that, by definition, $\FCL(R)$ is a relation on $\pairs(X)$. 
In what follows, for every \AF-realizable pair $(R,F)$, we first establish that
$\Fcl(R)$ is precisely the set of LCA-constraints satisfied by the canonical DAG
$\GG_{R,F}$ and then use these auxiliary results to prove that
$\FCL(R)=\Fcl(R)$.

\begin{lemma} \label{lemma:preceq_implies_in_Fclosure}
Let $(R,F)$ be an \AF-realizable pair of relations on $\pairs(X)$ and let $\GG_{R,F}$ be the
canonical DAG of $(R,F)$. Then, for all $ab,xy \in \support^+_{R}$, $\lca_{\GG_{R,F}}(ab)$ and
$\lca_{\GG_{R,F}}(xy)$ are well-defined and the following equivalence holds:
\begin{equation}\label{eq:equiv}
\lca_{\GG_{R,F}}(ab) \preceq_{\GG_{R,F}} \lca_{\GG_{R,F}}(xy) \iff (ab,xy) \in \Fcl(R).    
\end{equation}
\end{lemma}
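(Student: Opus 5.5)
The plan is to read everything off the construction of the canonical DAG in Proposition~\ref{prop:properties_of_canonical_DAG}. Write $\GG\coloneqq\GG_{R,F}$. Since $(R,F)$ is \AF-realizable, Proposition~\ref{prop:properties_of_canonical_DAG} applies. It shows that $\GG$ is a well-defined DAG on $X$ with $\lca_\GG(ab)=[ab]$ for $a\neq b$ and $\lca_\GG(xx)=x$, the vertex corresponding to $[xx]$. Every $p\in\support^+_R$ lies in some class of $Q$, because $\sim_{\Fcl(R)}$ is defined on $\support^+_R=\support_{\Fcl(R)}$ (Theorem~\ref{thm:polynomial_construction_Fclosure}). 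Hence $\lca_\GG(p)$ is well-defined and equals the vertex representing $[p]$. This settles the first claim. It also reduces Eq.~\eqref{eq:equiv} to the statement that, for $p,q\in\support^+_R$, the vertex $[p]$ is a descendant of $[q]$ in $\GG$ if and only if $(p,q)\in\Fcl(R)$.

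For the direction ``$\Leftarrow$'', I would use that $\GG$ realizes $\Fcl(R)$, again by Proposition~\ref{prop:properties_of_canonical_DAG}. If $(p,q)\in\Fcl(R)$, then either \axiom{I1} or \axiom{I2} applies, so $\lca_\GG(p)\preceq_\GG\lca_\GG(q)$. Alternatively, one can argue directly: $[p]\le_{\Fcl(R)}[q]$, and in a Hasse diagram of a finite poset, comparability in the order yields a directed path from $[q]$ down to $[p]$. In the case $[p]=[q]$ the relation holds trivially.

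For the direction ``$\Rightarrow$'', suppose $\lca_\GG(p)\preceq_\GG\lca_\GG(q)$. If the two vertices are equal, then $[p]=[q]$, so $p\sim q$ and hence $(p,q)\in\Fcl(R)$. Otherwise there is a directed path $[q]=w_0\to w_1\to\dots\to w_k=[p]$ in the Hasse diagram, where the leaves are relabelled but correspond to the classes $[xx]$. By the definition of the Hasse diagram, each arc $w_i\to w_{i+1}$ gives $w_{i+1}\le_{\Fcl(R)} w_i$. Transitivity of the partial order then gives $[p]\le_{\Fcl(R)}[q]$, which by the definition of $\le_{\Fcl(R)}$ means $(p,q)\in\Fcl(R)$. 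This is the same path argument already used in the last paragraph of the proof of Proposition~\ref{prop:properties_of_canonical_DAG}.

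The only delicate point is bookkeeping around the relabelling $[aa]\mapsto a$. One must ensure that the leaf $x$ really plays the role of the class $[xx]$, so that the order on vertices of $\GG$ is exactly $\le_{\Fcl(R)}$ transported along the bijection between $Q$ and $V(\GG)$. This is where \axiom{Y1} enters: it holds by Lemma~\ref{lemma:AF_realiz_implies_Y1_Y2}, and it guarantees that $[xx]\neq[ab]$ for $ab\neq xx$, so the bijection is well-defined. With this in hand, $u\preceq_\GG v$ holds if and only if $u\le_{\Fcl(R)} v$ for all $u,v\in Q$, because the reachability relation of a Hasse diagram of a finite poset is exactly that order. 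Both directions of Eq.~\eqref{eq:equiv} then follow immediately.
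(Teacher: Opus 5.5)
Your proposal is correct and follows essentially the same route as the paper: the direction from $\lca_{\GG_{R,F}}(ab)\preceq_{\GG_{R,F}}\lca_{\GG_{R,F}}(xy)$ to $(ab,xy)\in\Fcl(R)$ is the same Hasse-path-plus-transitivity argument on $(Q,\le_{\Fcl(R)})$. For the converse, you use that $\GG_{R,F}$ realizes $\Fcl(R)$ (Proposition~\ref{prop:properties_of_canonical_DAG}), while the paper cites Lemma~\ref{lemma:Fclosure_preceq} together with Theorem~\ref{thm:characterization_AF_realized}; both are equally valid one-line justifications.
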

\begin{proof}
	Let $(R,F)$ be an \AF-realizable pair of relations on $\pairs(X)$ and $\GG_{R,F}$ be the canonical
	DAG of $(R,F)$. Suppose $ab \in \support^+_{R}$. By
	Theorem~\ref{thm:characterization_AF_realized}, $\GG_{R,F}$ realizes, in particular, $R$ and,
	therefore, $\lca_{\GG_{R,F}}(ab)$ is well-defined.

	Moreover, the \emph{if}-direction of Eq.~\ref{eq:equiv} follows from
	Lemma~\ref{lemma:Fclosure_preceq} and Theorem~\ref{thm:characterization_AF_realized}. For the
	\emph{only-if}-direction of Eq.~\ref{eq:equiv}, let $ab, xy \in \support_{R}^+$ and suppose that
	$v\coloneqq \lca_{\GG_{R,F}}(ab) \preceq_{\GG_{R,F}} \lca_{\GG_{R,F}}(xy) \eqqcolon w$. Let $q =
	[ab]$ and $q'=[xy]$ denote the equivalence classes in the quotient poset $(Q,\le_{\Fcl(R)})$. By
	Proposition~\ref{prop:properties_of_canonical_DAG}, $\lca_{\GG_{R,F}}(ab) = [ab] =q$ if $a \neq b$
	and, otherwise, $\lca_{\GG_{R,F}}(ab) = a$ in which case $q=[aa]$. Similarly,
	$\lca_{\GG_{R,F}}(xy) = [xy] =q'$ if $x \neq y$ and, otherwise, $\lca_{\GG_{R,F}}(xy) = x$ and
	$q'=[xx]$. Since $v \preceq_{\GG_{R,F}} w$, there is a $wv$-path in $\GG_{R,F}$. Since $\GG_{R,F}$
	only differs from $\Hasse(Q,\le_{\Fcl(R)})$ by the leaves $[aa]$ which are relabeled by $a$ and
	since $\leq_{\Fcl(R)}$ is transitive, it is straightforward to verify that $q \leq_{\Fcl(R)} q'$.
	By definition of $\leq_{\Fcl(R)}$, it holds that $(ab,xy) \in \Fcl(R)$.
\end{proof}

Lemma~\ref{lemma:Fclosure_preceq} shows that $\Fcl(R)\subseteq \rel_G$ for all DAGs $G$ that
\AF-realize $(R,F)$. We can go even further and show that for every \AF-realizable pair $(R,F)$ of
relations, there exists a DAG $G$ that \AF-realizes $(R,F)$ and satisfies $\Fcl(R) = \rel_G$.
This result will, in particular, be useful to show that $\FCL(R) = \Fcl(R)$.

\begin{lemma} \label{lemma:existence_of_G_with_rel_=_Fclosure} 
For every \AF-realizable pair $(R,F)$ of relations, there exists a DAG $G$ that \AF-realizes $(R,F)$
and satisfies $\rel_G = \Fcl(R)$. 
\end{lemma}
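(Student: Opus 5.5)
The natural candidate is the $\FR$-extension $G$ of the canonical DAG $\GG\coloneqq\GG_{R,F}$. Since $(R,F)$ is \AF-realizable, Theorem~\ref{thm:characterization_AF_realized} tells us that $G$ \AF-realizes $(R,F)$. So it remains to prove $\rel_G=\Fcl(R)$, and we will show both inclusions.

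The inclusion $\Fcl(R)\subseteq\rel_G$ follows from Lemma~\ref{lemma:Fclosure_preceq}. By Proposition~\ref{prop:RF_real_iff_RFR_real}(1), $G$ also \AF-realizes $(R,\FR)$, so the lemma applies. Note that it includes the well-definedness of the LCAs for all pairs in $\Fcl(R)$, because $\support_{\Fcl(R)}=\support^+_R$ by Theorem~\ref{thm:polynomial_construction_Fclosure}.

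For the reverse inclusion, let $(ab,xy)\in\rel_G$. Then $\lca_G(ab)$ and $\lca_G(xy)$ are well-defined and $\lca_G(ab)\preceq_G\lca_G(xy)$. The key step is to show that $ab,xy\in\support^+_R$, i.e., that no other pair has a unique LCA in $G$. We split into three cases.

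\emph{Case $pq\in\support_F\setminus\support^+_R$.} Here $G$ contains a $pq$-extension. Subsequent extensions preserve the LCA sets of $pq$ by Observation~\ref{obs:xy-extension1}(iii), so $|\LCA_G(pq)|>1$ by Observation~\ref{obs:xy-extension1}(i).

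\emph{Case $pq\notin\support_F\cup\support^+_R$, with $p\neq q$.} We must show that $\LCA_{\GG}(pq)$ is not a singleton. This is the main obstacle. Suppose $\lca_{\GG}(pq)=[cd]$ for some class $[cd]$; this vertex is not a leaf, since $p\neq q$. Then $[pp]\le[cd]$ and $[qq]\le[cd]$, i.e., $(pp,cd),(qq,cd)\in\Fcl(R)$. Since $pq\notin\support_{\Fcl(R)}$, cross-consistency does not fire, and we need some other way to reach a contradiction.

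One idea is to invoke the analogous statement from \cite{LAMSH:25} for the canonical DAG, namely that there $\rel_{\GG_R}=\cl(R)$ (cf.\ \cite[L~43]{LAMSH:25}). That argument seems to rely on a modification of the DAG rather than on $\GG_R$ itself. So a more promising plan is to modify $G$ further: apply a $pq$-extension for every $pq\in\pairs(X)\setminus\support^+_R$ with $p\neq q$, not only for those in $\support_F$. Call the result $G^*$.

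By Observation~\ref{obs:xy-extension1}, the following hold for $G^*$:
\begin{itemize}
\item the ancestor order on $V(\GG)$ is unchanged;
\item $\LCA_{G^*}(ab)=\LCA_{\GG}(ab)$ for all $ab\in\support^+_R$;
\item every pair outside $\support^+_R$ with distinct leaves has at least two LCAs.
\end{itemize}
The argument of Proposition~\ref{prop:RF_real_iff_RFR_real}(2) then shows that $G^*$ still \AF-realizes $(R,F)$. The extra extensions only destroy well-definedness for pairs outside $\support^+_R$, which can only help with \axiom{F}, and they leave the realization of $R$ intact.

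Finally, singletons $pp$ always lie in $\support^+_R$. Hence in $G^*$ any $(ab,xy)\in\rel_{G^*}$ has $ab,xy\in\support^+_R$. Moreover,
\[
\lca_{G^*}(ab)=\lca_{\GG}(ab)\preceq_{\GG}\lca_{\GG}(xy)=\lca_{G^*}(xy),
\]
so Lemma~\ref{lemma:preceq_implies_in_Fclosure} gives $(ab,xy)\in\Fcl(R)$. Thus $\rel_{G^*}\subseteq\Fcl(R)$. Combined with the first inclusion applied to $G^*$, this yields $\rel_{G^*}=\Fcl(R)$, and we take $G\coloneqq G^*$ as the required DAG.
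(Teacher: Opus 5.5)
Your final construction $G^*$, which applies an extension at every pair in $\pairs(X)\setminus\support^+_R$, is exactly the DAG the paper uses, and your two inclusions via Lemma~\ref{lemma:Fclosure_preceq} and Lemma~\ref{lemma:preceq_implies_in_Fclosure} match its argument. You also correctly diagnosed that the $\FR$-extension alone fails, since a pair outside $\support^+_R\cup\support_F$ can still have a unique LCA in $\GG_{R,F}$. The only minor difference is that you verify \AF-realization of $(R,F)$ by reusing the argument of Proposition~\ref{prop:RF_real_iff_RFR_real}(2), whereas the paper derives \axiom{I1} and \axiom{I2} directly from $\rel_G=\Fcl(R)$ together with \axiom{Y2}; both routes are valid.
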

\begin{proof}
	Let $(R,F)$ be an \AF-realizable pair of relations on $\pairs(X)$. We follow now some of the ideas
	as in the proof of \cite[L~43]{LAMSH:25}.

	By Theorem~\ref{thm:characterization_AF_realized}, $(R,\FR)$ is \AF-realized by the canonical DAG
	$\GG \coloneqq \GG_{R,F}$. Let $G$ be the directed graph obtained from $\GG$ by applying an
	$xy$-extension for every $xy \in \pairs(X) \setminus \support^+_{R}$. Note that, since $\GG$
	realizes $R$, for all $ab \in \support^+_{R}$ the LCA $\lca_{\GG}(ab)$ is well-defined. Note that
	no $ab$-extension was applied to obtain $G$ whenever $ab \in \support^+_{R}$. The latter two
	arguments together with Observation~\ref{obs:xy-extension1} imply that $G$ remains a DAG on $X$
	where $\lca_\GG(ab) = \lca_G(ab)$ is well-defined for all $ab \in \support^+_{R}$. Moreover,
	observe that $V(\GG) \subseteq V(G)$. 

	We now prove that $\rel_G = \Fcl(R)$ and start with showing that $\Fcl(R) \subseteq \rel_G$. Let
	$(ab,xy) \in \Fcl(R)$. By Theorem~\ref{thm:polynomial_construction_Fclosure}, $ab,xy \in
	\support_{\Fcl(R)} = \support^+_R$. Since $\GG$ \AF-realizes $(R,\FR)$, the vertices
	$\lca_\GG(ab)$ and $\lca_\GG(xy)$ are well-defined and by Lemma~\ref{lemma:Fclosure_preceq},
	$\lca_\GG(ab) \preceq_\GG \lca_\GG(xy)$. By the arguments in the first paragraph of this proof and
	Observation~\ref{obs:xy-extension1}, we can conclude that $\lca_\GG(ab) = \lca_G(ab)\preceq_G
	\lca_G(xy) = \lca_\GG(xy)$. Hence, $(ab,xy) \in \rel_G$ and, thus, $\Fcl(R) \subseteq \rel_G$. We
	show now that $\rel_G \subseteq \Fcl(R)$. To this end, let $(ab,xy) \in \rel_G$. By definition of
	$\rel_G$, the LCAs $\lca_G(ab)$ and $\lca_G(xy)$ are well-defined and $\lca_G(ab) \preceq_G
	\lca_G(xy)$ holds. This and Observation~\ref{obs:xy-extension1} implies that no $ab$ and
	$xy$-extension was applied on $\GG$ to obtain $G$ and, consequently, $ab, xy \in \support^+_R$.
	This and the arguments in the first paragraph of this proof imply that $\lca_G(ab) = \lca_\GG(ab)
	\preceq_\GG \lca_\GG(xy) = \lca_G(xy)$. This together with
	Lemma~\ref{lemma:preceq_implies_in_Fclosure} implies $(ab,xy) \in \Fcl(R)$ and hence, $\rel_G
	\subseteq \Fcl(R)$. In summary, $\Fcl(R) = \rel_G$ holds. 
 
	It remains to prove that $G$ \AF-realizes $(R,F)$. We start with showing that $G$ realizes $R$ and,
	thus, that \axiom{I1} and \axiom{I2} is satisfied for $G$ and $R$. Let $(ab,cd) \in R \subseteq
	\Fcl(R) = \rel_G$. Hence, $(ab,cd) \in \rel_G$ which implies that $\lca_G(ab)$ and $\lca_G(cd)$
	are well-defined and $\lca_G(ab) \preceq_G \lca_G(cd)$. First, suppose that $(cd,ab) \notin
	\tc(R)$. Since $(R,F)$ is \AF-realizable, $(R,F)$ satisfies \axiom{Y2} and, therefore, it holds
	that $(cd, ab) \notin \Fcl(R)=\rel_G$. Hence, $(cd,ab) \notin \rel_G$ and $\lca_G(cd) \npreceq_G
	\lca_G(ab)$ follows. This together with $\lca_G(ab) \preceq_G \lca_G(cd)$ implies $\lca_G(ab)
	\prec_G \lca_G(cd)$. Therefore, \axiom{I1} is fulfilled. Now suppose $(cd, ab) \in \tc(R)$. By
	Observation~\ref{obs:Fclosure_transitive_and_reflexive}, $\tc(R)\subseteq \Fcl(R)$ and, thus, $(cd,ab) \in
	\Fcl(R) = \rel_G$. Hence, $\lca_G(cd) \preceq_G \lca_G(ab)$ holds. This and $\lca_G(ab) \preceq_G
	\lca_G(cd)$ implies that $\lca_G(ab) = \lca_G(cd)$ must hold. Hence, \axiom{I2} is satisfied.
	In summary, $G$ realizes $R$.

	It remains to show that $G$ and $F$ satisfy \axiom{F}. Note that, by definition, for all $(ab,cd)
	\in \FR$, it holds that $ab,cd \in \support_{R}^+$ and by the arguments in the first paragraph of
	this proof, $\lca_G(ab) = \lca_\GG(ab) \in V(\GG)$ and $\lca_G(cd) = \lca_\GG(cd) \in V(\GG)$. Now
	assume, for contradiction, that there exists $(ab,cd) \in \FR$ such that $\lca_G(ab) \prec_G
	\lca_G(cd)$. By Observation~\ref{obs:xy-extension1}, $\lca_G(ab) \prec_G \lca_G(cd)$ implies
	$\lca_\GG(ab) \preceq_\GG \lca_\GG(cd)$. However, if $\lca_\GG(ab) = \lca_\GG(cd)$, then
	$\lca_G(ab) = \lca_G(cd)$ would follow. Thus, $\lca_\GG(ab) \prec_\GG \lca_\GG(cd)$ must hold; a
	contradiction to $\GG$ \AF-realizing $(R,\FR)$. In summary, $G$ \AF-realizes $(R,\FR)$. Lastly,
	since we applied an $xy$-extension for all $xy \in \pairs(X) \setminus \support^+_{R}$ and, thus,
	in particular, for all $xy \in \support_F \setminus \support^+_R$ to obtain $G$,
	Observation~\ref{obs:xy-extension1} implies that $\lca_G(xy)$ is not well-defined for all $xy \in
	\support_F \setminus \support^+_R$. Hence, $G$ satisfies \axiom{F} for all $(ab,xy) \in F$. In
	conclusion, $G$ \AF-realizes $(R,F)$.
\end{proof}

We now prove that $\Fcl(R) = \FCL(R)$ and thereby generalize \cite[Thm~44]{LAMSH:25}.

\begin{theorem}\label{thm:closure-classic}
Let $F$ be a relation on $\pairs(X)$ and let $\mathcal D_F\coloneqq \{R \mid R \text{ is a relation
on } \pairs(X) \text{ and } (R,F)\text{ is \AF-realizable}\}. $ Then, for every $R\in\mathcal D_F$,
it holds that $ \FCL(R)=\Fcl(R). $ Moreover, $\FCL\colon\mathcal D_F\to\mathcal D_F$ is a closure
operator and can be computed in polynomial time in $|X|$.
\end{theorem}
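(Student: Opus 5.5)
The equality $\FCL(R)=\Fcl(R)$ follows by showing both inclusions. The closure-operator and complexity statements are then transferred from the corresponding properties of $\Fcl$.

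Fix $R\in\mathcal D_F$ and let $\mathfrak G$ be the set of DAGs that \AF-realize $(R,F)$; it is non-empty since $(R,F)$ is \AF-realizable. By Proposition~\ref{prop:RF_real_iff_RFR_real}(1), every $G\in\mathfrak G$ also \AF-realizes $(R,\FR)$. Lemma~\ref{lemma:Fclosure_preceq} then gives $\Fcl(R)\subseteq\rel_G$ for every $G\in\mathfrak G$, and hence $\Fcl(R)\subseteq\FCL(R)$. For the reverse inclusion, Lemma~\ref{lemma:existence_of_G_with_rel_=_Fclosure} supplies a $G^*\in\mathfrak G$ with $\rel_{G^*}=\Fcl(R)$. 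Therefore $\FCL(R)=\bigcap_{G\in\mathfrak G}\rel_G\subseteq\rel_{G^*}=\Fcl(R)$.

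Before checking the closure axioms, I verify that $\FCL$ maps $\mathcal D_F$ into itself, i.e., that $(\Fcl(R),F)$ is \AF-realizable whenever $R\in\mathcal D_F$. The plan is to show that $(\Fcl(R),F)$ satisfies \axiom{Y1} and \axiom{Y2} and then apply Theorem~\ref{thm:characterization_AF_realized}.
\begin{itemize}
\item For \axiom{Y1}: by idempotency (Proposition~\ref{prop:closure_operator}), $\Fcl(\Fcl(R))=\Fcl(R)$. Condition \axiom{Y1} for $(R,F)$ is therefore literally \axiom{Y1} for $(\Fcl(R),F)$.
\item For \axiom{Y2}: since $\Fcl(R)$ is transitive, $\tc(\Fcl(R))=\Fcl(R)$. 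So if $(ab,xy)\in\Fcl(R)$ and $(xy,ab)\notin\tc(\Fcl(R))$, then $(xy,ab)\notin\Fcl(R)=\Fcl(\Fcl(R))$, and \axiom{Y2} holds trivially.
\end{itemize}
Alternatively, Lemma~\ref{lemma:existence_of_G_with_rel_=_Fclosure} says the DAG $G^*$ realizes $\Fcl(R)$. Also $\support^+_{\Fcl(R)}=\support^+_R$ by Theorem~\ref{thm:polynomial_construction_Fclosure}, so $F_{|\Fcl(R)}=\FR$, and $G^*$ directly \AF-realizes $(\Fcl(R),F)$. Either route works; I would use the \axiom{Y1}/\axiom{Y2} one since it is purely formal. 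Consequently $\FCL(R)\in\mathcal D_F$.

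The closure axioms now transfer from $\Fcl$ on the domain $\mathcal D_F$.
\begin{itemize}
\item Extensivity and monotonicity are immediate from Proposition~\ref{prop:closure_operator}(i),(ii) via $\FCL=\Fcl$ on $\mathcal D_F$.
\item Idempotency: since $\FCL(R)\in\mathcal D_F$, we have $\FCL(\FCL(R))=\Fcl(\Fcl(R))=\Fcl(R)=\FCL(R)$.
\end{itemize}
For computation, given $R$ one first decides membership $R\in\mathcal D_F$ in polynomial time by Theorem~\ref{thm:alg:AF-REAL}. One then computes $\Fcl(R)$ by Theorem~\ref{thm:polynomial_construction_Fclosure}, which equals $\FCL(R)$.

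The only genuinely delicate step is the well-definedness of $\FCL$ as a map $\mathcal D_F\to\mathcal D_F$. The quantifier $\mathfrak G$ depends on the argument, so idempotency requires knowing that $(\Fcl(R),F)$ is again \AF-realizable. I expect the argument above via idempotency of $\Fcl$ and transitivity of $\Fcl(R)$ to settle this.
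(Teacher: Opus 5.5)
Your proof is correct. Its skeleton matches the paper's: two inclusions, with Lemma~\ref{lemma:existence_of_G_with_rel_=_Fclosure} giving $\FCL(R)\subseteq\Fcl(R)$, then transfer of the closure axioms once $\FCL$ is known to map $\mathcal D_F$ into itself. Two sub-steps, however, use different tools.

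For $\Fcl(R)\subseteq\FCL(R)$, the paper checks directly that $\rel_G\in\mathfrak{R}_{R,F}$ for every \AF-realizing $G$. It gets reflexivity, transitivity and cross-consistency from results of~\cite{LAMSH:25}, and adds a short argument that $\rel_G$ is $F$-csym because $G$ satisfies \axiom{F}. You instead combine Proposition~\ref{prop:RF_real_iff_RFR_real}(1) with Lemma~\ref{lemma:Fclosure_preceq}. That is shorter and reuses the derivation-sequence induction already done. The paper's version is the natural intersection argument and shows how \axiom{R4} is reflected in every realization.

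For the self-mapping step, the paper takes the DAG $G$ from Lemma~\ref{lemma:existence_of_G_with_rel_=_Fclosure} and uses that every DAG realizes its own $\rel_G$ (Lemma~6 of~\cite{LAMSH:25}). So $G$ itself witnesses that $(\rel_G,F)=(\FCL(R),F)$ is \AF-realizable, without the full characterization. Your argument is purely formal: idempotency and transitivity reduce \axiom{Y1} and \axiom{Y2} for $(\Fcl(R),F)$ to \axiom{Y1} for $(R,F)$ plus a tautology. The price is invoking Theorem~\ref{thm:characterization_AF_realized}. You should also say explicitly that \axiom{Y1} for $(R,F)$ holds by Lemma~\ref{lemma:AF_realiz_implies_Y1_Y2}.

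Your ``alternative'' route via $G^*$ is exactly the paper's, with two caveats. Lemma~\ref{lemma:existence_of_G_with_rel_=_Fclosure} does not by itself say that $G^*$ realizes $\Fcl(R)$; you need the extra fact that a DAG realizes $\rel_G$. And the remark $F_{|\Fcl(R)}=\FR$ is unnecessary, since \axiom{F} does not depend on the required relation.
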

\begin{proof}
	Let $(R,F)$ be an \AF-realizable pair of relations on $\pairs(X)$, and let $\mathfrak{G}$ be the
	set of all DAGs on $X$ that \AF-realize $(R,F)$. We first show that $\Fcl(R) \subseteq \FCL(R)$.
	Let $G \in \mathfrak{G}$ and we argue that $\rel_G \in \mathfrak{R}_{R,F}$. By Proposition~21 in
	\cite{LAMSH:25}, $\rel_G$ is $\support^+_{\rel_G}$-reflexive, transitive, and cross-consistent.
	Moreover, since $G$ realizes $R$, Lemma~7 in \cite{LAMSH:25} implies that $R \subseteq \rel_G$
	and, therefore, $\support_R^+\subseteq\support_{\rel_G}^+$. Hence, $\rel_G$ is
	$\support_R^+$-reflexive. It remains to argue that $\rel_G$ is $F$-csym. Let $(p,q) \in F \cap
	\rel_G$. Since $(p,q) \in \rel_G$, the LCAs $\lca_G(p)$ and $\lca_G(q)$ are well-defined and
	satisfy $\lca_G(p) \preceq_G \lca_G(q)$. This together with $(p,q) \in F$ and the fact that $G$
	\AF-realizes $(R,F)$ implies that $\lca_G(p) \nprec_G \lca_G(q)$. Hence, $\lca_G(p) = \lca_G(q)$
	and, therefore, $(q,p) \in \rel_G$. Since the pair $(p,q) \in F \cap \rel_G$ was chosen
	arbitrarily, it follows that $\rel_G$ is $F$-csym. In summary, $\rel_G \in \mathfrak{R}_{R,F}$.
	Consequently, $\Fcl(R) = \bigcap_{R' \in \mathfrak{R}_{R,F}} R' \subseteq \rel_G$ for every $G \in
	\mathfrak{G}$. Hence, $\Fcl(R) \subseteq \bigcap_{G \in \mathfrak{G}} \rel_G = \FCL(R)$. We now
	show that $\FCL(R) \subseteq \Fcl(R)$. Since $(R,F)$ is \AF-realizable,
	Lemma~\ref{lemma:existence_of_G_with_rel_=_Fclosure} implies that there exists a DAG $G$ that
	\AF-realizes $(R,F)$ and satisfies $\rel_G = \Fcl(R)$. In particular, $G \in \mathfrak{G}$. This
	and $\FCL(R) = \bigcap_{G' \in \mathfrak{G}} \rel_{G'}$ implies that $\FCL(R) \subseteq \rel_G =
	\Fcl(R)$. In summary, $\FCL(R) = \Fcl(R)$. By Theorem~\ref{thm:polynomial_construction_Fclosure},
	$\FCL(R) = \Fcl(R)$ can be constructed in polynomial time in $|X|$ whenever $(R,F)$ is
	\AF-realizable.

  Let $F$ be a relation on $\pairs(X)$.
	It remains to show that $\FCL\colon\mathcal{D}_F\to\mathcal{D}_F$ is a closure operator. First, we
	verify that $\FCL$ maps $\mathcal{D}_F$ into itself. Let $R\in\mathcal{D}_F$. By
	Lemma~\ref{lemma:existence_of_G_with_rel_=_Fclosure}, there exists a DAG $G$ that \AF-realizes
	$(R,F)$ and satisfies $ \FCL(R)=\Fcl(R)=\rel_G$. By \cite[L~6]{LAMSH:25}, $G$ realizes $\rel_G$
	and it follows that
	$G$ also \AF-realizes $(\rel_G,F)=(\FCL(R),F)$. Hence, $\FCL(R)\in\mathcal{D}_F$. Extensivity
	and monotonicity of $\FCL$ on $\mathcal{D}_F$ now follow from
	Proposition~\ref{prop:closure_operator} and the equality $\FCL(R)=\Fcl(R)$ for every
	$R\in\mathcal{D}_F$. Finally, since $\FCL(R)\in\mathcal{D}_F$, we obtain $ \FCL(\FCL(R)) =
	\Fcl(\Fcl(R)) = \Fcl(R) = \FCL(R)$.  Thus, $\FCL$ is idempotent and, therefore,
	$\FCL\colon\mathcal{D}_F\to\mathcal{D}_F$ is a closure operator.
\end{proof}

%
%
\section{Summary and Outlook}
\label{sec:summary}

In this contribution, we introduced and studied three variants of the realization problem for pairs
$(R,F)$ of required and forbidden LCA-constraints, namely, \AF-realizability,
\AFeq-realizability, and \AFlca-realizability. For each of these variants, we obtained a
characterization of the pairs $(R,F)$ that admits a realization and derived a polynomial-time
algorithm for deciding whether such a realization exists. Whenever the answer is affirmative, our
algorithms construct, in polynomial time, a DAG  and phylogenetic network that realizes all required
constraints in $R$ while satisfying the corresponding interpretation of the forbidden constraints
in $F$. 
The algorithms developed in this paper are implemented in the freely available Python package \texttt{RealLCA}, which is available at
\url{https://github.com/AnnaLindeberg/RealLCA}.

A central role in our analysis is played by the closure $\Fcl(R)$ of $R$ with respect to $F$. We
showed that $\Fcl(R)$ can be computed in polynomial time by repeatedly applying a small collection
of simple inference rules. This closure captures the consequences of the required and forbidden
constraints and forms the basis for both the characterization of \AF-realizability and the
construction of the associated canonical DAG. Moreover, for every \AF-realizable pair $(R,F)$, we
proved that
\[
\Fcl(R)
=
\bigcap_{G \text{ \AF-realizes } (R,F)} \rel_G .
\]
Thus, the combinatorially defined closure $\Fcl(R)$ coincides precisely with the set of 
LCA-constraints satisfied by every \AF-realization of $(R,F)$. This, in particular, shows that the
inference rules capture all LCA-constraints logically implied by $R$ and $F$.

The constructions and algorithms presented in this work are primarily intended to establish the
polynomial-time solvability of the considered realization problems. In particular, the procedures
obtained from our proofs are not claimed to be optimal with respect to their runtime. We
expect that more efficient implementations and considerably sharper runtime bounds can be obtained.
Developing such algorithms, as well as determining the precise computational complexity of the
individual construction steps, remains an interesting direction for future work.

While our results settle the general realization problems for required and forbidden 
LCA-constraints, the networks produced by our constructions are not required to belong to any
restricted network class and may therefore be structurally complex. The results developed here
nevertheless provide a foundation for a systematic investigation of restricted realization
problems. Such restrictions may require strengthening the conditions \axiom{Y1} and \axiom{Y2},
supplementing them with additional structural conditions, or replacing the canonical construction
by one tailored to the network class under consideration.

A natural next step is therefore to determine when a given pair $(R,F)$ can be realized by
specific, biologically motivated classes of phylogenetic networks. Of particular interest are
normal networks \cite{Willson:10b,Francis:25,francis2021normalising}, level-$1$ networks
\cite{Gambette2017,Hellmuth2023}, and galled trees~\cite{Gusfield:03}. For each of these classes,
one may ask for structural characterizations analogous to those obtained here, efficient
recognition and construction algorithms, and class-specific closure rules that capture precisely
the constraints shared by all admissible realizations. More generally, understanding how the
combinatorial properties of $(R,F)$ constrain the structural complexity of its realizations
provides a broad range of questions for further research.

\section*{Acknowledgments}

We thank Anna Lindeberg and Anton Alfonsson for stimulating discussions on this topic.

\section*{Declaration of Competing Interest}
The authors declare that they have no known competing financial interests or personal relationships
that could have appeared to influence the work reported in this paper.

\section*{Declaration of Generative AI in the Manuscript Preparation Process}

During the preparation of this work, the authors used ChatGPT by OpenAI to assist with literature
searches, improve grammar and spelling, streamline the text, and independently recheck selected
proofs. Following the use of this tool, the authors reviewed and revised the content as necessary
and take full responsibility for the content of the published article.

\bibliographystyle{spbasic}
\bibliography{common}

\end{document}